\newtheorem{Theorem}{Theorem}
\newtheorem{Lemma}[Theorem]{Lemma}
\newtheorem{Proposition}[Theorem]{Proposition}
\newtheorem{Problem}{Problem}
\newtheorem{Example}{Example}
\newcommand{\tyb}{\emph{type b }}
\newcommand{\tya}{\emph{type a }}
\newcommand{\grafo}{\mathcal{G}}
\newcommand{\smAP}{$\langle |\Sigma|,m \rangle$-AP\xspace}
\newcommand{\Rsafe}{\ensuremath{R_{safe}}}
\newcommand{\Rdist}{\ensuremath{R_{dist}}}
\newcommand{\Ssafe}{\ensuremath{S'_{safe}}}
\newcommand{\Scost}{\ensuremath{S'_{cost}}}
\newcommand{\grafoSR}{$G_{R,S'}$\xspace}
\newcommand{\PiS}{\ensuremath{\Pi_{S'}}}
\begin{document}
\title{Parameterized Complexity of $k$-Anonymity: Hardness and Tractability}
\author{Stefano Beretta\thanks{DISCo, Universit\`a degli Studi di
    Milano-Bicocca, Milano - Italy}
\and Paola Bonizzoni\thanks{DISCo, Universit\`a degli Studi di
    Milano-Bicocca, Milano - Italy}
\and Gianluca Della Vedova\thanks{Dipartimento di Statistica,
Universit\`a degli Studi di Milano-Bicocca, Milano - Italy}
\and Riccardo Dondi\thanks{Dipartimento di Scienze dei Linguaggi,
Universit\`a degli Studi di Bergamo, Bergamo - Italy}
\and Yuri Pirola\thanks{DISCo,
Universit\`a degli Studi di Milano-Bicocca, Milano - Italy}
}

\maketitle

\begin{abstract}
The problem of publishing personal data without giving up privacy is becoming
increasingly important. A clean formalization that has been recently proposed is the
$k$-anonymity, where
the rows of a table are partitioned in clusters of size at least $k$ and
all rows in a cluster become the same tuple,
after the suppression of some entries.
The natural optimization problem, where the goal is to minimize the number of
suppressed entries, is  hard even when the stored values are over a binary alphabet and 
as well as on a table consists of a bounded number of columns. 
In this paper we study how the complexity of the problem
is influenced by different parameters. 
First we show that the problem is W[1]-hard when  parameterized by the value
of the solution (and $k$). Then we exhibit a fixed-parameter algorithm when the problem is parameterized
by the number of columns and the maximum number of different values in any column.
Finally, we prove that
$k$-anonymity is still APX-hard even when restricting to instances with $3$ columns and $k=3$.
\end{abstract}

\section{Introduction}
In epidemic studies the analysis of
large amounts of personal data is essential. At the same time the
dissemination of the results of those studies, even in a compact and
summarized form, can provide some information that can be exploited to
identify the row pertaining to a certain individual. For instance, ZIP code, gender and date of birth can  uniquely
identify 87\% of individuals in the U.S.~\cite{Sweeny02}.
Therefore when managing personal data it is of  the utmost
importance to effectively protect individuals' privacy.

One approach to deal with such problem is
the $k$-anonymity model~\cite{DBLP:conf/pods/SamaratiS98,Sweeny02,DBLP:journals/tkde/Samarati01,Meyerson_PODS04}.
Each row of a given table represents all data regarding a certain individual.
Then different rows are clustered together, and
some entries of the rows in each cluster are suppressed (i.e.  they are
replaced with a $*$) so that
each cluster consists of at least $k$ identical rows.
Therefore each
row $r$ in the resulting table is clustered with at least other  $k-1$ rows
identical to $r$, hence the resulting data do not allow to identify any individual.
While such formulation is not really sophisticated and has some practical
limitations, it is definitely interesting from a
theoretical point of view, as witnessed by the rich literature available.
We will focus on
separating  the cases that can be
solved efficiently from those that are intractable, therefore hinting at which
strategies are  likely or not going to be successfully employed when studying
more sophisticated formalizations.
Notice that different formulations of the problem have also been
proposed~\cite{Aggar_PODS06}, for example allowing the generalization of
entry values,
that is an entry value can be replaced with a less specific
value~\cite{Aggar_JPT05}, or considering  a notion of
proximity among values~\cite{DBLP:conf/wads/DuEGL09}.


A parsimonious principle leads to the optimization problem where we want to
minimize  the
number of entries in the table to be suppressed.
The $k$-anonymity problem is known to be APX-hard
even  when the matrix entries are over a \emph{binary} alphabet and
$k=3$~\cite{conf/FCT/BonizzoniDD09}, as well as when the matrix has $8$
columns and $k=4$ (this time on arbitrary alphabets)~\cite{conf/FCT/BonizzoniDD09}.
Furthermore, a polynomial-time $O(k)$-approximation algorithm
on arbitrary input alphabet, as well as approximation algorithms
for restricted cases are known~\cite{Aggar_ICDT05}.
Recently, two polynomial-time approximation algorithms with factor $O(\log k)$ have been
independently proposed~\cite{Park_SIGMOD07,Gionis_ESA07}.

In this paper we investigate the parameterized complexity~\cite{ParameterizedComplexity,Niedermeier}
of the problem,
unveiling how different parameters are involved in the complexity of the problem.
A first systematic study of the parameterized complexity of the $k$-anonymity problem
has been proposed in~\cite{DBLP:conf/cocoa/ChaytorEW08}. 
Here, we follow the same direction,
showing that the problem is W[1]-hard when parameterized by the size
of the solution and $k$, and we provide a fixed-parameter algorithm, when the
problem is parameterized by the number of
columns and the maximum number of different values in any column.
These problems were left open in~\cite{DBLP:conf/cocoa/ChaytorEW08}.

In Table~\ref{tab:par-compl-status} we report the status of the parameterized complexity
of the $k$-anonymity problem, where in bold we have emphasized the new results
presented in this paper. We recall
that a problem $P$ parameterized by a set $Y$ of parameters
is in the class FPT~\cite{ParameterizedComplexity} if it admits an exact
algorithm with complexity $f(Y)n^{O(1)}$,
where $f$ is an arbitrary function, and $n$ is the size of the
input problem, while it is W[i]-hard~\cite{ParameterizedComplexity}, for some $1 \leq i \leq p$
if it is unlikely to be fixed-parameter tractable.
We recall that XP~\cite{ParameterizedComplexity} is a
superclass of all sets W[$p$]. Moreover, proving that a  problem $\Pi$ with parameter set $S$
is NP-hard when all parameters in $S$ are some constants, implies that
$(\Pi,S) \notin $ XP unless P = NP.

\begin{table}[htb]
\begin{footnotesize}
\centering
\begin{tabular}{|c||c|c|c|c|}
\hline  & $-$ & $k$ & $e$ & $k,e$ \\
\hline  $-$ & NP-hard \cite{Meyerson_PODS04} & $\notin XP$~\cite{conf/FCT/BonizzoniDD09,Aggar_ICDT05} & \textbf{W[1]-hard} \emph{new} & \textbf{W[1]-hard} \emph{new}\\
\hline $|\Sigma|$ & $\notin XP$~\cite{conf/FCT/BonizzoniDD09} & $\notin XP$~\cite{conf/FCT/BonizzoniDD09} & ??? & ???  \\
\hline $m$ & $\notin XP$ for $m \geq 8$~\cite{conf/FCT/BonizzoniDD09}& $\notin XP$ for $m \geq 8$, $k\geq 4$~\cite{conf/FCT/BonizzoniDD09} & FPT~\cite{DBLP:conf/cocoa/ChaytorEW08} & FPT~\cite{DBLP:conf/cocoa/ChaytorEW08} \\
\hline $n$ & FPT~\cite{DBLP:conf/cocoa/ChaytorEW08} & FPT~\cite{DBLP:conf/cocoa/ChaytorEW08}  & FPT~\cite{DBLP:conf/cocoa/ChaytorEW08} &  FPT~\cite{DBLP:conf/cocoa/ChaytorEW08}\\
\hline  $|\Sigma|,m$ & \textbf{FPT} \emph{new} & FPT~\cite{DBLP:conf/cocoa/ChaytorEW08} & FPT~\cite{DBLP:conf/cocoa/ChaytorEW08} & FPT~\cite{DBLP:conf/cocoa/ChaytorEW08} \\
\hline $|\Sigma|,n$ & FPT~\cite{DBLP:conf/cocoa/ChaytorEW08} & FPT~\cite{DBLP:conf/cocoa/ChaytorEW08} & FPT~\cite{DBLP:conf/cocoa/ChaytorEW08}  &  FPT~\cite{DBLP:conf/cocoa/ChaytorEW08}\\
\hline
\end{tabular}
\caption{Summary of the parameterized complexity status of the $k$-anonymity problem;
$|\Sigma|$ represents the maximum number of different values in a column,
$m$ represents the number of columns, $n$ represents the number of rows,
$k$ represents the minimum size of a cluster, $e$ represents the size of the
solution.}
\label{tab:par-compl-status}
\end{footnotesize}
\end{table}

The rest of the paper is organized as follows.
In Section~\ref{sec:pre-sec} we introduce some preliminary definition and we give
the formal definition of the $k$-anonymity problem. In Section~\ref{sec:par-hard} we
show that the $k$-anonymity is W[1]-hard.
In Section~\ref{sec:fpt-algo} we give a fixed parameter algorithm, when the problem is parameterized by the size of the alphabet and the number of columns.
Finally, in Section~\ref{sec:AP-3-3} we show that the $3$-anonymity problem is APX-hard, even when the rows have length bounded by $3$.

\section{Preliminary Definitions}
\label{sec:pre-sec}

Let us
introduce some preliminary definitions that will be used in
the rest of the paper.
Given a graph $G=(V,E)$, and $V' \subseteq V$, the \emph{subgraph induced}
by $V'$ is denoted by $G[V']=(V', E')$, where $E' = E\cap (V'\times V')$.
A graph $G=(V,E)$ is cubic when each vertex in $V$ has degree three.


Given an alphabet $\Sigma$, a row $r$ is a vector of
elements taken from the set $\Sigma$, and the
$j$-th element of $r$ is denoted by $r[j]$.
Notice that it is equivalent to  consider  a row  as a vector over $\Sigma$ or as a string
over alphabet  $\Sigma$.
Let $r_1, r_2$ be two equal-length rows. Then $H(r_1,r_2)$ is the Hamming
distance of $r_1$ and $r_2$, i.e. $|\{i: r_1[i]\neq r_2[i]\}|$.
Let $R$ be a set of $l$ rows, then
a \emph{clustering} of $R$ is a partition $\Pi=(P_1,\dots , P_t )$ of $R$.
Given a clustering $\Pi = (P_1,\dots , P_t )$ of $R$, we define the \emph{cost}
of the row $r$ belonging to a set $P_i$ of $\Pi$ as
$c_{\Pi}(r) = |\{j:\exists r_1,r_2\in P_i,\ r_1[j]\neq r_2[j]\}|$, that is the number of entries
of $r$ that have to be suppressed so that all rows in $P_i$ are identical.
Similarly we define the cost of a set $P_i$,  denoted by $c_{\Pi}(P_i)$, as
$|P_i||\{j:\exists r_1,r_2\in P_i,\ r_1[j]\neq r_2[j]\}|$.
The cost of $\Pi$, denoted by $c(\Pi)$, is defined as
$\sum_{P_i\in \Pi} c(P_i)$.
Given a set $S \subseteq R$ and a clustering $\Pi$ of $R$,  the
cost induced by $\Pi$ in set $S$ is
$c_{\Pi}(S) = \sum_{r \in S} c_{\Pi}(r)$.
Notice that, given a clustering $\Pi = (P_1,\dots, P_t)$ of $R$, the quantity
$|P_i|\max_{r_1,r_2\in P_i}\{H(r_1,r_2)\}$ is a lower bound for $c(P_i)$,
since all the positions for which $r_1$ and $r_2$ differ will
be deleted in each row of $P_i$.
We are now able to formally define the $k$-Anonymity Problem
($k$-AP). 

\begin{Problem}$k$-AP.\\
\textbf{Input}: a set $R$ of equal lenght rows over an alphabet $\Sigma_R$.\\
\textbf{Output}: a clustering $\Pi = (P_1,\dots, P_t)$ of $R$
such that for each set $P_i$, $|P_i|\geq k$ and $c(\Pi)$ is minimum.
\end{Problem}

In what follows, given a set $S$ of parameters, we denote by $\langle S \rangle$-AP
the $k$-AP problem parameterized by $S$, thus omitting $k$.
We will consider the following parameters:
$m$ is the number of columns of the rows in $R$;
$n$ is the  number of rows  in $R$;
$|\Sigma|$ is the  maximum number of different values in any column of the table;
$k$ is  the minimum size of a cluster;
$e$ is  the maximum number of entries that can be suppressed.

Let $\Pi=(P_1, \dots, P_z)$ be a solution of the $k$-AP problem.
Notice that a suppression at position $j$ of a row $r$ is represented
replacing the symbol $r[j]$ with a $*$.
Given a set
$P_j$ of $\Pi$, some entries of the rows clustered in $P_j$ are suppressed, so that
the resulting rows are all identical to a vector $r$ over alphabet $\Sigma_R \cup \{ * \}$; such
a vector is the \emph{resolution vector} associated
with $P_j$.
Given a
resolution vector $r$, we define $del(r)$ as the number of entries
suppressed in $r$, that is $del(r)=|\{j: r[j]=* \}|$.
Given a resolution vector $r$ and a row $r_i \in R$,
we say that $r$ is \emph{compatible} with row $r_i$ iff
$r[j] \neq r_i[j]$ implies $r[j]=*$.
Given a row $r_i$ of $R$ and a set of resolution vectors $S'$,
we define the set $comp(r_i,S')=\{ r \in S': r  \text{ is compatible with } r_i \}$.

%

Given a set $R$ of rows, we define a \emph{group} of rows of $R$ as a maximal
set of identical rows. Given a group $g$, the \emph{representative row} of $g$, denoted by $r(g)$, is
any row of $g$, while  $s(g)$ is the number of rows in $g$ and  $exc(g)=\max\{0, s(g)-k\}$.
A set $R$ of rows can be partitioned in groups of identical rows
in polynomial time~\cite{DBLP:conf/cocoa/ChaytorEW08}, therefore
we can compute in polynomial time
whether a set $R$ of rows is $k$-anonymous, i.e. $R$ can be partioned into groups of size at least $k$. 
If this is not possible, then observe that at least $k$ entries of $R$ must be suppressed 
to get a solution of the $k$-AP problem, that is $e \ge k$.
%
Hence $\langle e \rangle$-AP is in FPT iff $\langle e,k \rangle$-AP is in FPT.
%
Consequently our
parameterized reduction~\cite{ParameterizedComplexity,Niedermeier} will show
the fixed-parameter intractability of $\langle e \rangle$-AP and
$\langle e,k \rangle$-AP.

\section{$\langle e \rangle$-AP and $\langle e,k \rangle$-AP are W[$1$]-hard}
\label{sec:par-hard}

We show that $\langle e \rangle$-AP and $\langle e,k \rangle$-AP are
W[1]-hard. 
Given an set $R$ of equal length rows,
$\langle e \rangle$-AP and $\langle e,k \rangle$-AP ask if there exists
a clustering $\Pi = (P_1,\dots , P_t)$ of R such that $|P_i| \geq k$ for each set $P_i$,
and $c(\Pi) \leq e$.
We present a parameter preserving reduction from the $h$-Clique problem, which is
known to be W[1]-hard~\cite{DBLP:journals/tcs/DowneyF95}, to the $\langle e \rangle$-AP problem.
Given a graph $G=(V,E)$, an $h$-clique is a set $V' \subseteq V$
where each pair of vertices in $V'$ are connected by an edge of $G$, and
$|V'|=h$.
The $h$-Clique problem asks for a subset $V'$ of the
vertices of a given graph $G$ inducing an $h$-clique in $G$.

Clearly the vertices of a $h$-clique are connected by ${h} \choose{2}$ edges.
Given a graph $G=(V,E)$,  we use $m_G$  and $n_G$  to denote
respectively the number of edges and of vertices of $G$.
We construct the instance  $R$ of $\langle e \rangle$-AP associated with $G$.
First, let us define $k=2h^2$. The set  $R$ consists of
$(k+1)m_G+(k-{h \choose 2})$ rows
and $2h+n_G$ columns over alphabet $\Sigma_R= \{  0,1 \} \cup \{  \sigma_{i,j}:(v_i,v_j) \in E \}$.
More precisely, for each edge $e(i,j)=(v_i,v_j)$ in $E$, there is a
group $R(i,j)$ of $k+1$ identical rows $r_x(i,j)$, $1 \leq x \leq k+1$, where

\begin{itemize}
\item $r_x(i,j)[l]= \sigma_{i,j}$, for $1 \leq l \leq 2h$;
\item $r_x(i,j)[2h+i]= 1$, $r_x(i,j)[2h+j]= 1$;
\item $r_x(i,j)[2h+l]= 0$, for $l \neq i,j$ and $1 \leq l \leq n$.
\end{itemize}

Moreover, $R$ also contains a group $R_0$
made of $k-{h \choose 2}$ identical rows equal to $0^{2h+n_G}$.
%

\begin{Lemma}
\label{lem:W[1]-hard:prop-base}
Let $R$ be the instance of $\langle e \rangle$-AP associated with  $G$ and
consider two rows $r,r_x(i,j)$ of $R$, such that
$r \in R_0$ and $r_x(i,j) \in R(i,j)$. Then, $r[t] \neq r_x(i,j)[t]$,
for each $1 \leq t \leq 2h$.
\end{Lemma}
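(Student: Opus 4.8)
The plan is to simply unwind the two defining cases of the construction, since the claim is an immediate consequence of them. First I would recall that, by definition, every row $r \in R_0$ is the all-zero row $0^{2h+n_G}$, so in particular $r[t] = 0$ for all $1 \le t \le 2h$. Next, for a row $r_x(i,j) \in R(i,j)$ the construction prescribes $r_x(i,j)[l] = \sigma_{i,j}$ for every $1 \le l \le 2h$. Hence for each position $t$ in this range it suffices to observe that $\sigma_{i,j} \neq 0$.

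The one bookkeeping point to make explicit is that the symbols $\sigma_{i,j}$ are genuinely fresh: the alphabet is defined as the disjoint union $\Sigma_R = \{0,1\} \cup \{\sigma_{i,j} : (v_i,v_j) \in E\}$, so $\sigma_{i,j} \notin \{0,1\}$ and in particular $\sigma_{i,j} \neq 0$. Combining this with the two observations above gives $r[t] = 0 \neq \sigma_{i,j} = r_x(i,j)[t]$ for each $1 \le t \le 2h$, which is exactly the statement.

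There is no real obstacle here; both rows are constant on the first $2h$ columns (one constantly $0$, the other constantly $\sigma_{i,j}$), and these two constants differ. The lemma is stated only so that it can be invoked later, presumably to argue that a row of $R_0$ and a row of some $R(i,j)$ cannot be placed in the same cluster without paying the full cost of the first $2h$ columns.
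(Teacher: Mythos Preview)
Your proof is correct and matches the paper's own argument essentially verbatim: both simply read off $r[t]=0$ and $r_x(i,j)[t]=\sigma_{i,j}$ from the construction for $1\le t\le 2h$. Your extra remark that $\sigma_{i,j}\notin\{0,1\}$ makes explicit a point the paper leaves implicit, but the approach is the same.
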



\begin{Lemma}
\label{lem:W[1]-hard-1}
Let $G=(V,E)$ be a graph, let $V'$ be a  $h$-clique of $G$ and let
$R$ be the instance of $\langle e \rangle$-AP associated with  $G$.
Then we can compute in polynomial time a solution $\Pi$ of $\langle e \rangle$-AP over instance $R$
with cost at most $6h^3$.
\end{Lemma}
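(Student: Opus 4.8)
The plan is to read the desired clustering off the clique. Write $E'$ for the set of the $\binom{h}{2}$ edges of $G$ having both endpoints in $V'$. I would build $\Pi$ as follows: for every edge $e(i,j)\notin E'$, let the whole group $R(i,j)$ (which consists of $k+1$ identical rows) be one cluster; for every edge $e(i,j)\in E'$, single out one of its rows, say $r_1(i,j)$, let the remaining $k$ (identical) rows of $R(i,j)$ be one cluster, and put all of the singled-out rows together with the whole group $R_0$ into one extra cluster $P:=R_0\cup\{\,r_1(i,j): e(i,j)\in E'\,\}$. This is clearly a partition of $R$ computable in polynomial time once $V'$ is given.

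First I would check feasibility. A group $R(i,j)$ of a non-clique edge gives a cluster of size $k+1\ge k$; a trimmed clique-edge group gives a cluster of size exactly $k$; and $P$ has $|R_0|+|E'|=(k-\binom{h}{2})+\binom{h}{2}=k$ rows. Hence every part of $\Pi$ has size at least $k$.

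Next I would bound $c(\Pi)$. Every cluster other than $P$ consists of identical rows and therefore contributes $0$, so $c(\Pi)=c_\Pi(P)=|P|\cdot d$, where $d$ is the number of columns on which two rows of $P$ disagree. Columns $1,\dots,2h$: each row of $R_0$ is $0$ there while each $r_1(i,j)$ equals $\sigma_{i,j}\neq 0$ there (this is precisely Lemma~\ref{lem:W[1]-hard:prop-base}), so all $2h$ of these columns must be suppressed. Columns $2h+1,\dots,2h+n_G$: column $2h+\ell$ equals $1$ in $r_1(i,j)$ exactly when $\ell\in\{i,j\}$ and equals $0$ in every row of $R_0$, so this column must be suppressed iff $v_\ell$ is an endpoint of some edge of $E'$, i.e. iff $v_\ell\in V'$, which holds for exactly $h$ values of $\ell$. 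Thus $d=2h+h=3h$, and since $k=2h^2$ and $|P|=k$ we get $c(\Pi)=|P|\cdot 3h=2h^2\cdot 3h=6h^3$.

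The only step requiring any care is this column count for $P$, in particular the observation that among the last $n_G$ columns exactly the $h$ columns indexed by the clique vertices become dirty; the rest is bookkeeping with the group sizes and the value $k=2h^2$. I do not expect a real obstacle here: this is the ``easy'' direction of the reduction (a clique yields a cheap clustering), and the genuine difficulty lies in the converse, where a cheap clustering must be shown to force an $h$-clique.
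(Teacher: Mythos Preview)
Your proposal is correct and follows essentially the same argument as the paper's proof: both construct the clustering by leaving non-clique edge groups intact, stripping one row from each clique-edge group, and merging those $\binom{h}{2}$ rows with $R_0$ into a single cluster of size exactly $k$; the cost computation (the $2h$ initial columns plus the $h$ columns indexed by $V'$, times $|P|=k=2h^2$) is identical. Your write-up is in fact slightly cleaner in separating the feasibility check from the column count.
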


\begin{Lemma}
\label{lem:W[1]-hard-2}
Let $G=(V,E)$ be an  instance of $h$-Clique, let
$R$ be the instance of $\langle e \rangle$-AP associated with $G$ and let  $\Pi$
be a solution of $\langle e \rangle$-AP over instance $R$ with cost at most $6h^3$.
Then we can compute in polynomial time a  $h$-clique  $V'$ of $G$.
\end{Lemma}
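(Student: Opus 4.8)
The plan is to read off the clique from the rigidity that the budget $6h^3$ imposes on $\Pi$. Call a cluster of $\Pi$ \emph{pure} if all of its rows come from a single group (one of the groups $R(i,j)$, or $R_0$), and \emph{mixed} otherwise. The first observation is that a mixed cluster $P$ must suppress all of its first $2h$ columns: if $P$ contains rows from two distinct edge-groups $R(i,j)$ and $R(i',j')$, then these rows disagree on every column $1,\dots,2h$ since $\sigma_{i,j}\neq\sigma_{i',j'}$; and if $P$ contains a row of $R_0$ together with a row of some $R(i,j)$, the same holds by Lemma~\ref{lem:W[1]-hard:prop-base}. Hence $c(P)\ge 2h\,|P|\ge 2hk=4h^3$, so from $c(\Pi)\le 6h^3<8h^3$ we conclude that $\Pi$ has \emph{at most one} mixed cluster; moreover every pure cluster has cost $0$ (its rows are identical), so $c(\Pi)$ equals the cost of this mixed cluster.

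Next I would identify the unique mixed cluster. Since $|R_0|=k-\binom{h}{2}<k$ (we may assume $h\ge 3$, the cases $h\le 2$ being trivial), the rows of $R_0$ cannot form a pure cluster, so they lie in a mixed cluster; therefore there is exactly one mixed cluster $P^*$ and $R_0\subseteq P^*$. Now fix an edge-group $R(i,j)$: it has $k+1$ rows, a pure cluster drawn from it has size $k$ or $k+1$, and two disjoint pure clusters would need $2k>k+1$ rows, so $R(i,j)$ meets at most one pure cluster; every row of $R(i,j)$ outside that pure cluster is then forced into $P^*$. Hence each $R(i,j)$ contributes $0$, $1$, or $k+1$ rows to $P^*$. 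The last possibility is ruled out by a direct count: a contribution of $k+1$ would give $|P^*|\ge (k-\binom{h}{2})+(k+1)\ge \tfrac72 h^2$, and since $P^*$ then also suppresses the two vertex-columns indexed by the endpoints of that edge, $c(P^*)\ge \tfrac72 h^2(2h+2)>6h^3$, a contradiction. So every edge-group contributes $0$ or $1$ row to $P^*$.

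It remains to decode. Let $E^*$ be the set of edges $(v_i,v_j)$ whose group contributes (exactly) one row to $P^*$, and let $V^*$ be the set of vertices incident to $E^*$. Since $R_0\subseteq P^*$ and the rows of $R_0$ are $0$ on all vertex-columns, a vertex-column $2h+l$ is suppressed in $P^*$ precisely when $v_l\in V^*$; combining with the first $2h$ columns gives
\[
c(\Pi)=c(P^*)=|P^*|\,(2h+|V^*|).
\]
From $|P^*|=(k-\binom{h}{2})+|E^*|\ge k$ we get $|E^*|\ge\binom{h}{2}$; from $|P^*|(2h+|V^*|)\le 6h^3=3hk$ together with $|P^*|\ge k$ we get $2h+|V^*|\le 3h$, i.e.\ $|V^*|\le h$; and since $G$ is simple, $|E^*|\le\binom{|V^*|}{2}\le\binom{h}{2}$. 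Chaining these inequalities forces $|V^*|=h$ and $|E^*|=\binom{h}{2}=\binom{|V^*|}{2}$, so every pair of vertices of $V^*$ is joined by an edge of $G$; thus $V^*$ is an $h$-clique, and it is obtained from $\Pi$ in polynomial time by locating the single mixed cluster and listing the edge-groups that meet it. I expect the second paragraph to be the main obstacle — establishing that there is exactly one mixed cluster, that it absorbs all of $R_0$, and that each remaining edge-group feeds it at most one surplus row; once this rigidity is in place, the clique falls out of the elementary inequality chase of the last paragraph.
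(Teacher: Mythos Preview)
Your argument is correct and follows essentially the same route as the paper: both proofs exploit that any cluster mixing two groups must suppress all of the first $2h$ columns, forcing at least $2hk=4h^3$ suppressions there, and then use the budget $6h^3=3hk$ together with $|P^*|\ge k$ to bound $|V^*|\le h$ and close with $\binom{h}{2}\le |E^*|\le \binom{|V^*|}{2}$. Your pure/mixed framing and the final inequality chase are a bit cleaner than the paper's presentation (you avoid its separate exchange argument that $|R_0'|=k$), but the underlying ideas are the same.
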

\begin{proof}
First we will prove that $\Pi$ must have a set $R_0'\supset R_0$. Assume to
the contrary that in $\Pi$ there are two sets $A$, $B$ containing at least a
row of $R_0$. Notice that $|R_0|<k$ while $|A|,|B|\ge k$. 
Moreover, by Lemma~\ref{lem:W[1]-hard:prop-base}, all rows in $A$ or $B$  must
have suppressed the first
$2h$ entries,  which results in at least $4hk>6h^3$ suppressions,
contradicting the assumption on the cost of the solution.
Hence,  $R_0$ is properly contained in a set
$R_0'$ of $\Pi$, as $|R_0| < k$.
Moreover,
let $r'$ be a row of $R_0'\setminus R_0$ and let $r$ be a row of  $\in R_0$.
By Lemma~\ref{lem:W[1]-hard:prop-base} $r'[t] \neq r[t]$ for each column $t$,
$1 \leq t \leq 2h$,  therefore all entries in the
first $2h$ columns of each row in $R_0'$ must be suppressed.

Now, let us prove that, for each set $R(i,j)$ of $R$, there exists a set
$R'(i,j)$ of $\Pi$ such that $R'(i,j)\subseteq R(i,j)$. Assume to the contrary
that no such set $R'(i,j)$ exists, for a given $R(i,j)$. Then either $R(i,j)\subseteq
R_0'$ or there exists a row of $R(i,j)$  clustered together with a row of
$R(x,y)$ in $\Pi$, with $(x,y)\neq (i,j)$.
In the first case, that is  $R(i,j)\subseteq R_0'$, $|R_0'|\ge
2k+1-{h \choose 2}$, by construction
all entries of the first $2h$ columns of the rows in $R_0'$ must be suppressed,
resulting in at least $2h(4h^2-{h \choose 2})>6h^3$ suppressions
and thus contradicting the assumption on the cost of the solution. Consider now the second case,
that is there is a set $A$ in $\Pi$ containing at least a row of two different
sets $R(i,j)$ and $R(x,y)$ of $R$. 
Observe that given $r'  \in R'_0 \setminus R_0$  and $r \in R_0$, $r$ and $r'$ differ in the first $2h$ columns.
Thus the entries of the  first $2h$ columns of the rows of $R_0'$ must be
suppressed,
resulting in at least $4hk>6h^3$ suppressed entries
and thus contradicting the assumption on the cost of the solution. Hence, for
each set $R(i,j)$ of $R$, there exists a set
$R'(i,j)$ of $\Pi$ such that $R'(i,j)\subseteq R(i,j)$.

By our previous arguments we can assume that $\Pi$ consists of the clusters
$R_0'$ and $R'(i,j)$, for each $R(i,j) \in R$,
and that $|R(i,j)|-1 \leq |R'(i,j)| \leq |R(i,j)|$. Notice that only
$R_0'$ can contain some suppressed entries.
Also  $|R_0'|=k$, for otherwise we can improve the cost of $\Pi$ by
moving a row
in $R(i,j) \cap R'_0$ from $R'_0$ to  $R'(i,j)$.
Now let $E'$ be the set of edges
$(v_i,v_j)$ of $G$ such that a row of $R(i,j)$ is in $R_0'$ and let $V'$ be the
set of vertices incident on at least an edge in $E'$. Then we can  show that
$G[V']$ is a $h$-clique.
Notice that the entries
in the first $2h$ columns of $R_0'$ must be suppressed, as well as all columns
with index $2h+l$ such that $v_l\in V'$, since in those columns all rows in
$R_0$ have value $0$ while some row in $R'_0 \setminus R_0$ have value $1$. An immediate
consequence is that the overall number of suppressed entries is at least
$2hk+k|V'|$. Since, by hypothesis, the number of suppressed entries is at
most $6h^3=3kh$, then $|V'|\le h$. Notice that,
since $|R_0 | = k - {h \choose 2}$ and $|R'_0| = k$, then $R'_0 \setminus R_0$
contains exactly ${h \choose 2}$ distinct rows corresponding to edges in $E'$
incident on $V'$ vertices.
Hence $V'$ induces a $h$-clique in $G$.
\qed\end{proof}

From Lemma~\ref{lem:W[1]-hard-1} and~\ref{lem:W[1]-hard-2},  our  reduction is
parameter preserving, therefore
$\langle e \rangle$-AP and $\langle e,k \rangle$-AP are W[1]-hard.

\section{An FPT algorithm for \smAP}
\label{sec:fpt-algo}

In this section we present a fixed-parameter algorithm for the
\smAP problem, that is the instance of the AP problem, where
the number $m$ of columns and the maximum number $|\Sigma|$ of different values in any column
are two parameters.
Notice that $k$-AP parameterized by exactly one of $|\Sigma|$ or $m$
is not in FPT, as $k$-AP is APX-hard (hence NP-hard)
even when one of $|\Sigma|$ or $m$ is a constant~\cite{conf/FCT/BonizzoniDD09}.

Before giving the details of the algorithm, let us first introduce some preliminary
definitions.
Let $R$ be an instance of \smAP,
and for each column of $R$ with index $j$,  $1 \leq j \leq m$, let $\Sigma_j$
be the set of different values that the rows of $R$ have in column $j$.
Notice that $|\Sigma_j| \leq |\Sigma|$, for each $1 \leq j \leq m$.
Let $\Sigma_j^{*} = \Sigma_j \cup \{ * \}$
and $\Sigma^{*} = \Sigma  \cup \{ * \}$.
Assume $\Pi = \{P_1, \cdots, P_z\}$  is a feasible solution
of \smAP over instance $R$. The set $S'$ consisting of  a
resolution vector for each set $P_i \in \Pi$ is called
\emph{candidate set} for solution \smAP.
Let $S$ be the set of possible rows of length $m$ and having value over alphabet
$\Sigma^*_j$ for the position $j$, $1 \leq j \leq m$,
then $|S|$ is
bounded by $|\Sigma^*|^m$.  Given a candidate set $S'$,
notice that $S' \subseteq S$ and that
each row $r \in R$ must compatible with at least one resolution
vector in $S'$. 

Given a row $r$ and the set  $S'$ of resolution vectors, recall that we denote by $Comp(r,S')$
the set of resolution vectors of $S'$ compatible with $r$. Moreover,  given a
resolution vector $r' \in S'$,
we denote by $del(r')$  the number of suppressions in $r'$.
For each row $r\in R$ we define its weight as
$w(r)= \max_{r_x \in Comp(r,S')} \{m-del (r_x)\}$.
Notice that $w(r) = m$ whenever  $r$ is compatible with a row without
suppressions. Informally, the weight of a row is equal to
the maximum number of its entries that might be preserved in a solution where $S'$ is
the set of resolution vectors.
Finally, we define $W= \sum_{r \in R} w(r)$ and $w'(r_x)= W+ m-del(r_x)+1$
for each row $r_x\in S'$.
Notice that $w'(r_x) \geq \sum_{r \in R} w(r)$, for each $r_x \in R$.
The weights defined above will be used later in Section~\ref{subsect_build} to define the weight
function $w_h$.

Let us first describe the general idea of the algorithm.
Given a candidate set $S'$, the algorithm computes an optimal solution $\PiS$ associated with a candidate
set $S' \subseteq S$ (see Algorithm~\ref{alg:solving-AP}).
The algorithm consists of two main phases.
In the first phase (Section~\ref{subsect_build}), given the set $R$ of input rows and the candidate set $S'$, the algorithm builds a weighted bipartite graph $G_{S',R}$
associated with $R$ and $S'$.
In the second phase (Section~\ref{subsect_solution}) a solution
of \smAP is computed starting from a maximum weighted matching
of the graph $G_{S',R}$.
Section~\ref{subsect_opt_solution} is devoted to prove
that the solution computed by the algorithm is optimal.

\begin{algorithm}[t]
\label{alg:solving-AP}
\KwIn{An instance $R$ of \smAP made  of a set of $n$ rows, each one consisting
  of $m$ symbols, and an integer $e$}
\KwOut{a solution of \smAP over instance $R$,
if \smAP admits a solution that suppresses at most $e$ entries\;}
$S\gets$ the set of resolved vectors of length $m$, where each $j$-th symbol, $1 \leq j \leq m$, is taken from the  alphabet $\Sigma^*_j$\;
$W= \sum_{r \in R} w(r)$\;
\ForEach{subset $S'$ of $S$}{%
  \grafoSR $\gets $ the graph associated with $R,S'$\;
  $M\gets $ a maximum matching of \grafoSR;
  $w\gets $ the weight of $M$\;
   \uIf{$M$ is feasible and $w \geq (W+1) k |S'| + m |R^l_{dist} \cup R^l_{safe}|- e$}{\Return{the solution $\PiS(M)$ of $R$ associated with $M$\;}}
}
\Return{No such solution exists}
\caption{Solving \smAP}
\end{algorithm}

\subsection{Building the graph \grafoSR}
\label{subsect_build}
Let us consider a candidate set $S'$ of vectors
for an optimal solution of \smAP.
Since $S' \subseteq S$,  there exist
at most $2^{|\Sigma^{*}|^m}$ possible candidate sets of rows $S'$, therefore our
FPT algorithm computes each candidate set $S'$ and verifies if there exists a solution $\PiS$
with cost at most $e$.
In order to verify if such a solution exists, the algorithm builds a bipartite graph
\grafoSR, as described in this section.
The intuitive idea behind the graph is that
edges of the graph correspond to possible ways of assigning each
row in $R$ to a resolution vector $x \in S'$. Rows
assigned to the same resolution vector $x \in S'$ are clustered in the solution $\PiS$.

The construction of the vertex set of the graph is based on  a a
partition of $R$ into two disjoint sets called $\Rsafe$ and
$\Rdist$ (that is $\Rdist = R \setminus \Rsafe$). The set $\Rsafe
$ consists of those rows $r \in R$ belonging to the group $g$ such that:
%
$s(g) \geq k$, that is $r$ belongs to a group of at least $k$ identical rows, and
there exists a row $r_j \in S'$, such that $r_j$ and $r(g)$ are the
same vector. 
%
Notice that only rows in \Rsafe{} might have no
suppressed entry in a solution $\PiS$.

%




The vertex set of \grafoSR$=(V,E)$ has $6$ sets.
Two sets ($\Rdist^l$, $\Rdist^r$) consist of vertices associated
with the rows in $\Rdist$, three sets ($\Rsafe'^l$, $\Rsafe^l$,
$\Rsafe^r$) consist of vertices associated
with the rows in $\Rsafe$, and a final set called $T$ consists of vertices
associated with the rows in $S'$. In the latter case notice that for each row
$x$ in $S'$ there exist $k$ vertices in $T$ to ensure that
the cluster associated with $x$ has size at least  $k$.
The vertex set is defined as follows:
%

%

\begin{itemize}
\item for each row $x\in \Rdist$, there is a corresponding vertex
$\Rdist^l(x)$ in $\Rdist^l$  and a corresponding vertex $\Rdist^r(x)$ in $\Rdist^r$;
\item for each  group $g$ consisting of the set of rows $\{ x_1, x_2 , \dots, x_{s(g)}  \}$,
where each $x_i \in \Rsafe$, $1 \leq i \leq s(g)$,
there are $k$ corresponding vertices in $\Rsafe'^l$, (such vertices
are denoted by $\Rsafe'^l(g,1), \ldots , \Rsafe'^l(g,k)$),
$exc(g)$ corresponding vertices in $\Rsafe^l$ (such vertices
are denoted by $\Rsafe^l(g,1),$ $\ldots ,$ $\Rsafe^l(g,exc(g)$), and
$exc(g)$ corresponding vertices in $\Rsafe^r$ (such vertices are denoted by
$\Rsafe^r(g,1),$ $\ldots ,$
$\Rsafe^r(g,exc(g)$); 
 \item for each row $x \in S'$, there are $k$ corresponding vertices in $T$ (such
vertices are denoted by $T(x, 1), \dots, T(x, k)$).
\end{itemize}


Notice that  our graph \grafoSR is edge-weighted. Let $w_h$ be the
weight function assigning a positive weight to each edge of
\grafoSR. Given the  set of edges $E' \subseteq E$, we denote by
$w_h(E')= \sum_{e \in E'} w_h(e)$.
%
%

First, notice that the set $S'$  consists of two disjoint sets:
the set \Ssafe consists of those rows in $S'$ that have no
suppressions, while $\Scost=S' \setminus \Ssafe$.
Each edge connects a vertex of $\Rsafe'^l \cup \Rsafe^l \cup
\Rdist^l$ with a vertex of $\Rsafe^r \cup \Rdist^r \cup T$, hence
the graph \grafoSR is bipartite.
The set $S'$  consists of two disjoint sets:
the set \Ssafe consists of those rows in $S'$ that have no
suppressions, while $\Scost=S' \setminus \Ssafe$.
Intuitevely, each edge represents a possible
assignment of a row in $R$ to a resolution vector in $S'$.

\begin{algorithm}
\label{alg:matching} \KwIn{A graph \grafoSR associated with an
instance $R$ and a maximum weight matching $M$ of \grafoSR}
\KwOut{A solution $\PiS(M)$ of \smAP over instance $R$}
\ForEach{edge $y$ of $M$}{%
  \uIf(\tcc*[f]{edges defined at point 1}){$y = (\Rdist^l(r),T(x,j))$}{%
    row $r$ is assigned to
    a set whose resolution row is $x$, $x \in S'$}
  \uIf(\tcc*[f]{edges defined at point 2}){$y = (\Rdist^l(r),\Rdist^r(r))$}{%
    row $r$ is assigned to
    a set whose resolution row is $r_y= \text{arg } \max w(r)$, $r_y \in S'$\;}
  \uIf(\tcc*[f]{edges defined at point 3}){$y = (\Rsafe'^l(g,i), T(x,j))$}{%
    assign the $i$-th row of $g$ to a set whose resolution row is
    $x$, $x \in S'$\;
  }
  \uIf(\tcc*[f]{edges defined at point 4}){$y = (\Rsafe^l(g,i), T(x,j))$}{%
    assign the $i$-th exceeding row of $g$ to a set whose resolution row is
    $x$, $x \in S'$\;
  }
  \uIf(\tcc*[f]{edges defined at point 5}){$y = (\Rsafe^l(g,i), \Rsafe^r(g,i))$}{%
    assign the $i$-th exceeding row of group $g$ to the set whose
    resolution row is $r(g)$, with $r(g) \in S'$ and $r \in \Rsafe$\;
  }
} \caption{From a matching to a feasible solution of \smAP.}
\end{algorithm}

Now we are ready to define formally the set of edges $E$ of
\grafoSR and the weight function $w_h$. There are five possible
kinds of edges.

\begin{enumerate}
\item Let $r$ be a row of $\Rdist$, and let $x$ be a row in  $Comp(r,S') \cap \Scost$.
Then there is an edge $y=(\Rdist^l(r),T(x,j))$, for each $1 \leq j \leq k$, with weight
$w_h \left( y \right)= w'(x)$.

\item Let $r$ be a row in $\Rdist$.
Then there is an edge $y=(\Rdist^l(r),\Rdist^r(r))$ with weight
$w_h \left( y \right)=w(r)$.

\item
Let $g$ be a group consisting of rows $\{ r_1, \dots , r_{s(g)} \}$, where
$r_i$, for each $i$ with $1 \leq i \leq s(g)$, is a row of $\Rsafe$;
let $r'$ be the resolution vector of $\Ssafe$ identical to $r(g)$.
Then there is an edge $y_i=(\Rsafe'^l(g,i), T(r',i))$, for each $i$ with $1 \leq
i \leq k$. All edges $y_i$ have weight $w_h \left( y_i \right)= w'(r')$.

\item Let $g$ be a group consisting of rows $\{ r_1, \dots , r_{s(g)} \}$, where
$r_i$, for each $i$ with $1 \leq i \leq s(g)$, is a row of $\Rsafe$; let $x$ be a row in $Comp(r(g),S') \cap \Scost$.
Then there is an edge $y_{i,j}=(\Rsafe^l(g,i), T(x,j))$, for each $i$ with $1 \leq
i \leq exc(g)$ and for each $j$ with $1 \leq j \leq k$. All edges $y_{i,j}$ have weight $w_h \left( y_{i,j} \right)= w'(x)$.

\item Let $g$ be a group consisting of rows $\{ r_1, \dots , r_{s(g)} \}$, where
$r_i$, $1 \leq i \leq s(g)$, is a row of $\Rsafe$.
Then there is an edge $y_i=(\Rsafe^l(g,i), \Rsafe^r(g,i))$ for each $i$
with $1 \leq i \leq exc(g)$. All edges $y_i$ have weight $w_h \left( y_i \right)= w(r(g))$.
\end{enumerate}

%

\subsection{Computing a solution of \smAP}
\label{subsect_solution}
In this  section we  prove  in
Lemma~\ref{lem:feas:solution} that  $\PiS(M)$ is a clustering of
the rows in $R$ that is a feasible solution for the \smAP problem. See Fig. \ref{fig:ex-alg} for an example.

Since \grafoSR bipartite, we can efficiently compute a maximum weight matching $M$ of
\grafoSR~\cite{fastmwb}.
Given a matching $M$ of the graph \grafoSR,  Algorithm~\ref{alg:matching}
computes in polynomial time a clustering $\PiS(M)$ of the rows in $R$.
Informally, the clustering is computed by assigning
the rows in $R$ to the resolution vector in $S'$,
using the edges in the matching $M$.

%
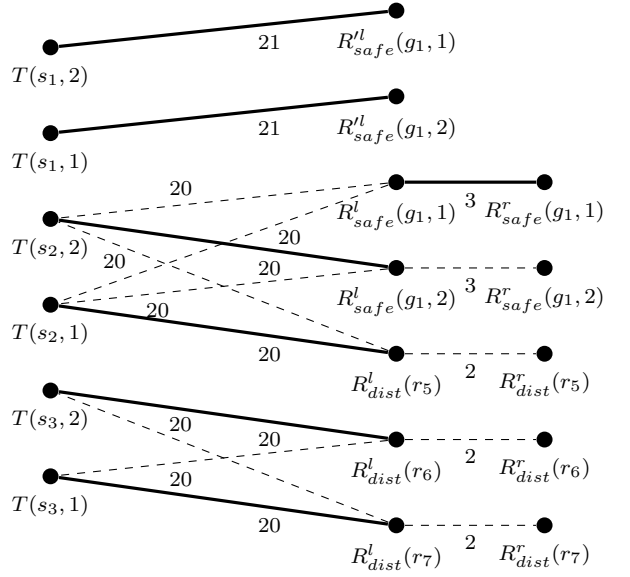
\begin{figure}[tb]
\label{fig:ex-alg}
\begin{minipage}{4cm}\small\begin{center}
\begin{tabular}{clll}
\multicolumn{4}{c}{Rows $R$}\\\hline\hline
Name&Data&$w$&Group\\\hline
$r_1$ & aaa&$3$&\\\cline{1-3}
$r_2$ & aaa&$3$&\\\cline{1-3}
$r_3$ & aaa&$3$&\raisebox{1.5ex}[0pt]{$g_1$}\\\cline{1-3}
$r_4$ & aaa&$3$&\\\hline
$r_5$ & aba&$2$&$g_2$\\\hline
$r_6$ & bbb&$2$&$g_3$\\\hline
$r_7$ & bbc&$2$&$g_4$\\\hline
\end{tabular}\vspace{2em}

\begin{tabular}{cll}
\multicolumn{3}{c}{Resolution vectors $S'$}\\\hline\hline
Name&Vectors&$w$\\\hline
$s_1$ & aaa&$21$\\\hline
$s_2$ & a*a&$20$\\\hline
$s_3$ & bb*&$20$\\\hline
\end{tabular}

\end{center}
\end{minipage}\hfill
\begin{minipage}{8cm}
\tikzstyle{vertex}=[circle,fill=black,minimum size=6pt,inner sep=0pt]
\begin{tikzpicture}[scale=0.65,font=\scriptsize]
\foreach \name/\x in {1/7, 2/6, 3/5}
     \node[vertex] (L-\name) at (7,1.75*\name) [label=below:$R^l_{dist}(r_{\x})$]{};

\foreach \name/\x in {4/2, 5/1}
     \node[vertex] (L-\name) at (7,1.75*\name) [label=below:$R^l_{safe}({g_1,\x})$]{};

\foreach \name/\x in {6/2, 7/1}
     \node[vertex] (L-\name) at (7,1.75*\name) [label=below:$R'^l_{safe}({g_1,\x})$]{};

\foreach \name/\x in {4/2, 5/1}
     \node[vertex] (R-\name) at (10,1.75*\name) [label=below:$R^r_{safe}({g_1,\x})$] {};

\foreach \name/\x in {1/7, 2/6, 3/5}
     \node[vertex] (R-\name) at (10,1.75*\name) [label=below:$R^r_{dist}(r_{\x})$] {};

\foreach \name/\x/\y/\z in {1/1/s_3/1, 2/2/s_3/2, 3/3/s_2/1, 4/4/s_2/2, 5/5/s_1/1, 6/6/s_1/2}
     \node[vertex] (S-\name) at (0,1+1.75*\x) [label=below:{$T(\y,\z)$}] {};


\foreach \from/\to/\w in {1/1/2, 2/2/2, 3/3/2, 4/4/3}
\draw [-, thin, dashed] (L-\from) --node[midway,below] {\w} (R-\to);

\foreach \from/\to in {2/1}
\draw [-, thin, dashed] (L-\from) -- node[pos=0.63,below] {20} (S-\to);

\foreach \from/\to in {3/4}
\draw [-, thin, dashed] (L-\from) -- node[pos=0.83,below] {20} (S-\to);

\foreach \from/\to in {4/3}
\draw [-, thin, dashed] (L-\from) -- node[pos=0.7,below] {20} (S-\to);

\foreach \from/\to in {5/3}
\draw [-, thin, dashed] (L-\from) -- node[pos=0.3,below] {20} (S-\to);

\foreach \from/\to in {1/2, 5/4}
\draw [-, thin, dashed] (L-\from) -- node[pos=0.63,above] {20} (S-\to);

\foreach \from/\to in {5/5}
\draw[very thick] (L-\from) -- node[midway,below] {3} (R-\to);

\foreach \from/\to/\w in {1/1/20, 2/2/20, 3/3/20, 4/4/20, 6/5/21, 7/6/21}
\draw [very thick] (L-\from) -- node[pos=0.36,below] {\w} (S-\to);
\end{tikzpicture}
\end{minipage}
\caption{An instance $R$ of \smAP, with $k=2$ and $m=3$, a resolution vector
  set $S'$ and the associated
  graph \grafoSR. The thick edges are a maximum weight matching of \grafoSR.
The corresponding solution is made of the sets $\{r_1, r_2, r_3\}$ (cost $0$), $\{r_4, r_5\}$ (cost $2$),
$\{ r_6 ,r_7 \}$ (cost $2$).}
\end{figure}

Notice that,
each vertex $\Rsafe^l(r,i)$ has only the edge
$(\Rsafe^l(r,i),T(r,i))$ on it, hence we can always add those edges to any
matching%
\footnote{Notice that these
  connected components are introduced only to simplify the
relationship between a matching $M$ and the corresponding solution
$\PiS(M)$ of \smAP}.
%
Let $M$ be a matching of \grafoSR and let $v$ be a vertex of \grafoSR, then
we say that $v$ is \emph{covered} by a matching $M$
if there exists an edge of $M$ for which $v$ is one
of its endpoints.
Moreover, we will say that $M$ is  \emph{feasible} if all vertices in $T$ are covered by $M$.
When a  matching $M$ covers all vertices in $R^l_{dist} \cup
R^l_{safe}$ and is feasible, it is defined as a \emph{complete
matching}.
Let $\Pi$ be a clustering of an instance $R$ of the \smAP problem. Then
$\Pi$ is \emph{feasible} if and only if each set of the partition
$\Pi$ contains at least $k$ rows.
The next part of this section is devoted to show that every maximum weight matching $M$
is complete and that clustering $\PiS(M)$ is \emph{feasible}. 
%
First, we will show in the next two lemmata that, given $W' =
k\sum_{r_x\in T} w'(r_x)$, $W'$ is a threshold that distinguishes
between matchings that are feasible and those that are not.

\begin{Lemma}
\label{lem:eq-matching-weight-T} Let $M$ be a matching  of
\grafoSR, let $X$ be the subset of $T$ consisting of the vertices
of $T$ that are covered by $M$, and let $M_1$ be the subset of the
edges of $M$ that have one endpoint in $X$. Then the total weight
of the edges in $M_1$ is exactly $\sum_{T(t,i)\in X} w'(t)$.
\end{Lemma}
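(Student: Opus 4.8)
The plan is to argue by a careful case analysis on the structure of the edge set incident to the vertex set $T$, using the explicit classification of edge types (points 1--5) and the definition of the weight function $w_h$. The essential observation is that every edge of $\gadgetgrafo_{S',R}$ with one endpoint in $T$ falls into exactly one of the three families: edges of type 1 (from $\Rdist^l$), edges of type 3 (from $\Rsafe'^l$), or edges of type 4 (from $\Rsafe^l$); and in all three cases, an edge incident to a $T$-vertex $T(x,j)$ has weight exactly $w'(x)$. This is immediate from the definitions: type-1 edges $(\Rdist^l(r),T(x,j))$ have weight $w'(x)$, type-3 edges $(\Rsafe'^l(g,i),T(r',i))$ have weight $w'(r')$ where $r'=r(g)$ is the $T$-vertex's first coordinate, and type-4 edges $(\Rsafe^l(g,i),T(x,j))$ have weight $w'(x)$. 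Edges of types 2 and 5 have no endpoint in $T$, so they are irrelevant.

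First I would state this "edges into $T(x,j)$ have weight $w'(x)$" fact as the key claim and verify it by running through the five edge types. Next, since $M$ is a matching, each vertex $T(t,i)\in X$ is covered by exactly one edge of $M_1$, and conversely every edge of $M_1$ has exactly one endpoint in $T$ (it cannot have both endpoints in $T$, since $\gadgetgrafo_{S',R}$ is bipartite with $T$ on one side — all of $T$ lies in the part $\Rsafe^r\cup\Rdist^r\cup T$). Therefore $M_1$ is in bijection with $X$: to each $T(t,i)\in X$ corresponds its unique covering edge, whose weight is $w'(t)$ by the key claim. Summing over this bijection gives
\[
w_h(M_1)=\sum_{T(t,i)\in X} w'(t),
\]
which is exactly the claimed identity.

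The main (and really only) obstacle is making sure the case analysis over edge types is genuinely exhaustive — that there is no edge touching $T$ other than those of types 1, 3, 4 — and that in each of those cases the weight depends only on the $S'$-coordinate $t$ of the $T$-endpoint, not on the index $i$ or on the $R$-side endpoint. Both points are transparent from the construction in Section~\ref{subsect_build}, but they must be stated explicitly. A minor technical remark worth including is that for a type-3 edge the $T$-endpoint index must match the $\Rsafe'^l$-endpoint index (it is $T(r',i)$ paired with $\Rsafe'^l(g,i)$), but this does not affect the weight, which is still $w'(r')$; so the bijection and the weight accounting go through unchanged.
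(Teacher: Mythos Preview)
Your proposal is correct and takes essentially the same approach as the paper: the paper's proof is a one-liner observing that every edge incident to a vertex $T(t,j)$ has weight $w'(t)$, and your plan simply unpacks this observation via the explicit case analysis over edge types 1, 3, 4 and makes the bijection between $M_1$ and $X$ explicit. The extra detail you provide (exhaustiveness of the edge types, bipartiteness ensuring at most one $T$-endpoint per edge) is all sound and merely fills in what the paper leaves implicit.
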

\begin{proof}
It is an immediate consequence of the observation that all edges where an
endpoint is $T(t,j)$
have the same weight $w'(t)$, with $t \in S'$.
\qed\end{proof}

\begin{Lemma}
\label{lem:eq-matching} Let $M$ be a matching  of  \grafoSR and
let $M_1$ be the subset of the edges of $M$ that have one endpoint
in $T$. Then the total weight of the edges in $M_1$ is at least
$W' = k\sum_{r\in S'} w'(r)$  if and only if $M$ is feasible.
\end{Lemma}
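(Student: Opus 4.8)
The plan is to prove both directions by relating the weight of $M_1$ to the set $X \subseteq T$ of $T$-vertices covered by $M$, using Lemma~\ref{lem:eq-matching-weight-T}. First I would observe that by Lemma~\ref{lem:eq-matching-weight-T} the total weight of $M_1$ is exactly $\sum_{T(t,i) \in X} w'(t)$. Since every edge incident to a vertex $T(t,i)$ has weight $w'(t) > 0$, and since for each $t \in S'$ there are exactly $k$ vertices $T(t,1), \dots, T(t,k)$, the maximum possible value of $\sum_{T(t,i) \in X} w'(t)$ over all choices of $X \subseteq T$ is achieved precisely when $X = T$, in which case each $w'(t)$ is counted exactly $k$ times, yielding $k \sum_{r \in S'} w'(r) = W'$.

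For the ``if'' direction, assume $M$ is feasible; then by definition all vertices of $T$ are covered, so $X = T$ and the weight of $M_1$ equals $\sum_{T(t,i) \in T} w'(t) = k \sum_{r \in S'} w'(r) = W'$, which is in particular $\geq W'$. For the ``only if'' direction, assume the weight of $M_1$ is at least $W'$. Suppose for contradiction that $M$ is not feasible, i.e.\ some vertex $T(t_0, i_0)$ is not covered; then $X \subsetneq T$, so $X$ misses at least one vertex, and since each $T$-vertex contributes a strictly positive amount, $\sum_{T(t,i) \in X} w'(t) \leq W' - \min_{t \in S'} w'(t) < W'$, contradicting the assumption. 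Hence $M$ is feasible.

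The only mildly delicate point is to make the counting argument precise: the weight of $M_1$ depends only on \emph{which} $T$-vertices are matched, not on which left-endpoints they are matched to, which is exactly the content of Lemma~\ref{lem:eq-matching-weight-T}; and because a matching covers each $T$-vertex at most once, the multiset $\{t : T(t,i) \in X\}$ contains each $t \in S'$ with multiplicity between $0$ and $k$. I would also note that all weights $w'(r)$ are strictly positive (indeed $w'(r) = W + m - del(r) + 1 \geq 1$), which is what makes the strict inequality in the ``only if'' direction go through. I do not expect any real obstacle here; this lemma is essentially a bookkeeping step setting up the threshold $W'$ that will later certify feasibility of maximum weight matchings.
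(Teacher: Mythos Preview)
Your proof is correct and follows essentially the same approach as the paper: apply Lemma~\ref{lem:eq-matching-weight-T} to express $w_h(M_1)$ as $\sum_{T(t,i)\in X} w'(t)$, then argue that this sum equals $W'$ exactly when $X=T$, i.e.\ when $M$ is feasible.

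One minor difference worth noting: for the ``only if'' direction you use merely that each $w'(t)>0$, which is enough for the lemma as stated. The paper instead invokes the stronger inequality $w'(x)>W$ (recall $w'(x)=W+m-del(x)+1$) together with the observation that $W$ upper-bounds the total weight of $M_2=M\setminus M_1$, to conclude that when $M$ is not feasible even the \emph{full} matching satisfies $w_h(M)=w_h(M_1)+w_h(M_2)<W'$. This extra conclusion is not part of the lemma statement, but it is precisely what is needed later to argue that a maximum weight matching of \grafoSR must be feasible (Lemma~\ref{lem:feas:solution}). Your argument gives only $w_h(M_1)<W'$, which by itself does not rule out $w_h(M)\ge W'$ for an infeasible $M$; so if you rely on this lemma downstream, keep in mind that the paper is implicitly packing the stronger bound into this proof.
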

\begin{proof}
Let $M_1$ be the subset of the edges of $M$ that have one endpoint
in $T$, and let $W_1$ be the total weight of edges in $M_1$. An
immediate consequence of Lemma~\ref{lem:eq-matching-weight-T} is
that $W_1 = W'$ if and only if  $M_1$ is feasible.
Assume now that $M$ is not feasible, then there exists at least one vertex
$S'(x,j)\in T$ that is not covered by $M$. Again, a consequence of
Lemma~\ref{lem:eq-matching-weight-T} is that $W_1 \le W'-w'(x)$.
Let $M_2$ be the set $M\setminus M_1$. By construction, $w'(x)> W$ and $W$
is an upper bound on the total weight of $M_2$, therefore $W_1 + w_h(M_2) < W'$,
completing the proof.
\qed\end{proof}

Using Lemmata~\ref{lem:eq-matching-weight-T} and \ref{lem:eq-matching},
we can prove Lemma~\ref{lem:feas:solution}.

\begin{Lemma}
\label{lem:feas:solution}
Let $M$ be a maximum weight matching of \grafoSR, then $M$ is complete and
the solution $\PiS(M)$ computed by Algorithm~\ref{alg:matching} is
feasible.
\end{Lemma}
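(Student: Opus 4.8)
The plan is to prove the two claims in sequence: first that a maximum weight matching $M$ of \grafoSR is complete, and then that the induced clustering $\PiS(M)$ is feasible, the latter being almost a direct corollary of the former together with the description of Algorithm~\ref{alg:matching}.

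For completeness, I would argue in two steps. First I would show that $M$ is feasible, i.e.\ every vertex of $T$ is covered. Suppose not; then by Lemma~\ref{lem:eq-matching} the total weight of the edges of $M$ incident to $T$ is strictly less than $W' = k\sum_{r\in S'} w'(r)$. The strategy is to exhibit a matching $M^*$ whose weight is at least $W'$, which contradicts the maximality of $M$. Such an $M^*$ exists because $S'$ is a candidate set: every row $r\in R$ is compatible with some resolution vector in $S'$, so one can greedily assign the rows of $R$ to their compatible vectors in $S'$, making sure each vector in $\Scost$ receives at least $k$ rows from $\Rdist^l\cup\Rsafe^l$ (using edges of type 1 and 4) and each vector in $\Ssafe$ receives its $k$ dedicated vertices from $\Rsafe'^l$ (edges of type 3); the corresponding edges saturate all of $T$, each contributing weight $w'(\cdot)$, so the weight incident to $T$ is exactly $W'$ by Lemma~\ref{lem:eq-matching-weight-T}. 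Hence $M$ is feasible. Second, I would show $M$ covers all of $\Rdist^l\cup\Rsafe^l$. Any vertex $\Rdist^l(r)$ not covered by $M$ can be matched to $\Rdist^r(r)$ by an edge of type~2 (this vertex $\Rdist^r(r)$ has no other neighbour and so is free), strictly increasing the weight, contradicting maximality; the analogous argument with edges of type~5 handles any uncovered $\Rsafe^l(g,i)$ (whose partner $\Rsafe^r(g,i)$ is a private vertex). So every maximum weight matching is complete.

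It remains to check that $\PiS(M)$ is feasible, i.e.\ every cluster has size at least $k$. By construction of Algorithm~\ref{alg:matching}, the clusters are indexed by the resolution vectors of $S'$, and a row $r$ is placed into the cluster of $x\in S'$ precisely when $M$ contains an edge of type 1, 3, or 4 joining the $R$-side vertex of $r$ to a vertex $T(x,j)$, or an edge of type 2 or 5 that reassigns $r$ to a compatible vector. Since $M$ is feasible, for each $x\in S'$ all $k$ vertices $T(x,1),\dots,T(x,k)$ are covered; each such covering edge comes from a \emph{distinct} $R$-side vertex (matchings are vertex-disjoint), hence contributes a distinct row to the cluster of $x$ — for vectors in $\Ssafe$ these are the $k$ distinct rows of the group $g$ with $r(g)=x$ via edges of type 3, and for vectors in $\Scost$ these are $k$ distinct rows of $\Rdist\cup\Rsafe$ via edges of type 1 and 4. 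Thus every cluster receives at least $k$ rows, so $\PiS(M)$ is a feasible clustering.

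The main obstacle is the first step: constructing the auxiliary matching $M^*$ of weight $\ge W'$ and verifying it is a genuine matching (vertex-disjoint) in \grafoSR. This requires carefully using that each vector of $\Scost$ has $k$ private $T$-vertices and enough compatible rows, and that the $\Rsafe'^l$ vertices are connected only to their designated $T$-vertices, so that the greedy assignment never conflicts; the key input making this possible is exactly the hypothesis that $S'$ is a candidate set arising from a feasible clustering, which guarantees the needed compatibilities and cluster-size lower bounds. Everything else is bookkeeping against the edge definitions 1--5 and the weight identities already established in Lemmata~\ref{lem:eq-matching-weight-T} and~\ref{lem:eq-matching}.
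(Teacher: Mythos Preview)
Your proof is correct and follows the same three-step decomposition the paper uses in its appendix: maximum weight $\Rightarrow$ feasible (all of $T$ covered), feasible $\Rightarrow$ complete (via the private type~2 and type~5 edges to the partner vertices in $\Rdist^r$ and $\Rsafe^r$), and complete $\Rightarrow$ $\PiS(M)$ feasible. On the first step you are in fact more explicit than the paper: the appendix states ``maximum weight implies feasible'' as a separate lemma, but the proof written under it actually argues completeness rather than feasibility; your construction of the witness matching $M^*$ from the hypothesis that $S'$ is a candidate set (so that each $x\in\Scost$ has $k$ pairwise disjoint compatible rows available in $\Rdist^l\cup\Rsafe^l$, while each $x\in\Ssafe$ is saturated by its dedicated type~3 edges) is exactly the missing ingredient needed to certify that some matching of weight at least $W'$ exists, and hence to close that step via Lemma~\ref{lem:eq-matching}.
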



\subsection{Proving the optimality of $\PiS(M)$}
\label{subsect_opt_solution}

This section is devoted to prove that, starting from a maximum
weight matching $M$, Algorithm~\ref{alg:matching} computes an optimal
solution $\PiS(M)$ of \smAP.
In order to prove that any maximum weight matching $M$ of
the graph \grafoSR 
leads to an optimal solution of \smAP over instance $R$,  we are going
to prove that
$\sum_{(u,v) \in M} w_h\left((u,v)\right) \geq (W+1)k|S'|+m|R^l_{dist} \cup R^l_{safe}
\cup R'^l_{safe}| - e$
if and only if \smAP over instance $R$ admits a solution with
cost not greater than $e$, and such solution is computed by applying
Algorithm~\ref{alg:matching}.
Such result will be obtained through a sequence of technical lemmata.

Since $M$ is a maximum weighted matching,
we can assume by Lemma~\ref{lem:feas:solution} that $M$ is complete.
Given a complete matching $M$, we denote by $M(T)$
the set of  edges of $M$ with one endpoint in
$R^l_{dist} \cup R^l_{safe} \cup R'^l_{safe}$ and one endpoint in $T$,
while we denote by $M(L)$ the set of those edges of $M$ that have
one endpoint in
$R^l_{dist} \cup R^l_{safe}$ and one endpoint in $R^r_{dist} \cup R^r_{safe}$.
Furthermore, let us denote by $V(T)$ the set of vertices of
$R^l_{dist} \cup R^l_{safe} \cup R'^l_{safe}$ that are endpoints of an edge in
$M(T)$ and by $V(L)$ the set of vertices of
$R^l_{dist} \cup R^l_{safe}$ that are endpoints of an edge in
$M(L)$. Notice that by definition of $V(L)$ and, by definition of
complete matching,
$V(T) \cup V(L) = R^l_{dist} \cup R^l_{safe} \cup R'^l_{safe}$.
Finally, let us denote by $R(L)$ the set of rows in $R$
associated with the vertices in $V(L)$.
Lemma~\ref{lem:cost:match:part:feas} shows how the
weight of a complete matching $M$ is related to the edge weights of
\grafoSR.

\begin{Lemma}
\label{lem:cost:match:part:feas}
Let $M$ be a complete matching of \grafoSR,
and let $w_h(M)$ be the total weight of $M$. Then
$w_h(M) = k \sum_{r \in S'} ( W + m -del(r)+1) + \sum_{r \in R(L)} (m - del(r)) =
(W+1) k |S'| + m |R^l_{dist} \cup R^l_{safe} \cup R'^l_{safe}|- (k\sum_{r \in S'} del(r)
+ \sum_{r \in R(L)} del(r))$.
\end{Lemma}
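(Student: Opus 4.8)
The identity is a bookkeeping statement: it records exactly how much weight a complete matching $M$ collects, and then rewrites that quantity by separating the contribution of suppressions. I would prove it by partitioning the edges of $M$ into the groups dictated by the five edge types in the construction of \grafoSR, summing the weights type by type, and then reorganizing the sum.

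\textbf{Step 1: partition the edges of $M$.} Since $M$ is complete, by definition every vertex of $T$ and every vertex of $R^l_{dist}\cup R^l_{safe}\cup R'^l_{safe}$ is covered. By the bipartite structure, every edge of $M$ falls into exactly one of two classes: those with one endpoint in $T$ (these are $M(T)$, coming from edge types 1, 3, 4) and those with one endpoint in $R^r_{dist}\cup R^r_{safe}$ (these are $M(L)$, coming from edge types 2 and 5). Recall the auxiliary single-edge components $(\Rsafe^l(g,i),\Rsafe^r(g,i))$ are always in $M$; I would note these are precisely accounted for by edge type 5, so no double counting occurs. The key observation is that each vertex $T(x,j)$ is covered exactly once and each vertex of $R^l_{dist}\cup R^l_{safe}\cup R'^l_{safe}$ is covered exactly once, so $|M(T)| = k|S'|$ (there are $k$ copies of $T(x,\cdot)$ for each $x\in S'$) and $|M(L)| = |R(L)|$.

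\textbf{Step 2: sum the weights.} For the $M(T)$ edges: by Lemma~\ref{lem:eq-matching-weight-T} (applied with $X=T$, since $M$ is complete) the total weight of $M(T)$ is $\sum_{T(t,i)\in T} w'(t) = k\sum_{r\in S'} w'(r) = k\sum_{r\in S'}(W+m-del(r)+1)$, using the definition $w'(r_x)=W+m-del(r_x)+1$. For the $M(L)$ edges: an edge of type 2 on vertex $\Rdist^l(r)$ has weight $w(r)$, and an edge of type 5 on vertex $\Rsafe^l(g,i)$ has weight $w(r(g))$; in either case the weight equals $w(r)$ where $r$ is the row of $R$ associated with that left vertex, i.e. $r\in R(L)$. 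Now here is the one genuinely substantive point: I must show that for every $r\in R(L)$ we have $w(r)=m-del(r)$. For a row $r$ carried by an $M(L)$ edge of type 2, $r\in\Rdist$ and $w(r)=\max_{r_x\in Comp(r,S')}\{m-del(r_x)\}$ by definition; I would argue (and I expect this to appear either here or in a preceding lemma such as Lemma~\ref{lem:feas:solution}) that the edge weights of type 1 make it optimal for the matching to route $r$ to its best compatible resolution vector, so that effectively $w(r)$ is the right quantity — more carefully, the statement of the lemma should be read with the convention that $del(r)$ for $r\in R(L)$ denotes $m-w(r)$, i.e. the number of suppressions $r$ incurs in the solution $\PiS(M)$. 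With this reading, the $M(L)$ contribution is $\sum_{r\in R(L)}(m-del(r))$. Adding the two contributions gives the first equality
\[
w_h(M) = k\sum_{r\in S'}(W+m-del(r)+1) + \sum_{r\in R(L)}(m-del(r)).
\]

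\textbf{Step 3: rewrite.} This is pure algebra. Expand $k\sum_{r\in S'}(W+m-del(r)+1) = k|S'|(W+1) + km|S'| - k\sum_{r\in S'}del(r)$ and $\sum_{r\in R(L)}(m-del(r)) = m|R(L)| - \sum_{r\in R(L)}del(r)$. Then use $|R(L)| = |V(L)|$ together with the complete-matching identity $V(T)\cup V(L) = R^l_{dist}\cup R^l_{safe}\cup R'^l_{safe}$ (a disjoint union, since $M$ is a matching) and $|V(T)| = |M(T)| = k|S'|$ to get $m(k|S'| + |R(L)|) = m|R^l_{dist}\cup R^l_{safe}\cup R'^l_{safe}|$. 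Collecting terms yields
\[
w_h(M) = (W+1)k|S'| + m|R^l_{dist}\cup R^l_{safe}\cup R'^l_{safe}| - \Bigl(k\sum_{r\in S'}del(r) + \sum_{r\in R(L)}del(r)\Bigr),
\]
which is the second equality.

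\textbf{Main obstacle.} The only nontrivial point is the identification, for each $r\in R(L)$, of the weight actually contributed by $M$ with $m-del(r)$ where $del(r)$ is the number of suppressions in the solution $\PiS(M)$; the edge-counting and the algebra are routine. I would handle this by being explicit about the convention that $del(\cdot)$ on rows of $R$ refers to suppressions in $\PiS(M)$, and by invoking (or establishing) that a maximum weight matching routes each such row to a compatible resolution vector minimizing its suppressions — a fact that in any case only needs the inequality direction for the subsequent use of this lemma in bounding the cost of $\PiS(M)$.
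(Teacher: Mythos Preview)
Your proof is correct and follows essentially the same line as the paper's: split $w_h(M)$ into the $M(T)$ and $M(L)$ contributions, evaluate each from the edge-weight definitions (invoking Lemma~\ref{lem:eq-matching-weight-T} for the $T$-side), and then do the algebra using $|V(T)|+|V(L)|=|R^l_{dist}\cup R^l_{safe}\cup R'^l_{safe}|$. The point you flag as the main obstacle --- the meaning of $del(r)$ for $r\in R(L)$ --- is treated as purely notational in the paper (it is simply $m-w(r)$, independently of whether $M$ is maximum), so your digression about maximum-weight matchings routing rows optimally is unnecessary; the identity holds for every complete matching.
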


In the next two lemmata, we will show that: (i) given an instance $R$
of \smAP, if there exists a solution of \smAP over $R$
that suppresses at most $e$ entries then
the graph \grafoSR associated with $R$ admits
a complete matching of \grafoSR with total weight
$w_G(M) \geq (W+1) k |S'| + m |R^l_{dist} \cup R^l_{safe} \cup R'^l_{safe}|- e$;
(ii) given a complete matching of the graph \grafoSR of total weight
$w_G(M) \geq (W+1) k |S'| + m |R^l_{dist} \cup R^l_{safe} \cup R'^l_{safe}|- e$,
Algorithm~\ref{alg:matching} returns a solution $\PiS(M)$
of \smAP that suppresses at most $e$ entries.
These lemmata, coupled with Lemma~\ref{lem:feas:solution},
prove the correctness of Algorithm~\ref{alg:matching} in
Theorem~\ref{theo:correctness-algorithm}.

\begin{Lemma}
\label{lem:from-partially-feasible-solution-to-matching}
Let $R$ be an instance of \smAP{}, let $\PiS$ be a feasible
solution of \smAP{} over instance $R$ that suppresses at most $e$ entries,
let \grafoSR be the graph associated with $R$ and $S'$.
Then there exists a complete  matching of \grafoSR with total weight
$w_G(M) \geq (W+1) k |S'| + m |R^l_{dist} \cup R^l_{safe} \cup R'^l_{safe}|- e$.
\end{Lemma}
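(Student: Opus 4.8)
The plan is to reverse-engineer the matching from the given solution $\PiS$, using the edge-type definitions (points 1--5) to encode each clustering decision as a set of matching edges, and then to verify that the resulting matching is complete and has the claimed weight. First I would recall the structure of a feasible solution: by the preliminary discussion we may assume $\PiS$ consists of one cluster per resolution vector in $S'$, each cluster of size at least $k$, and each row assigned to a compatible resolution vector. For each resolution vector $x \in S'$ I pick $k$ of the rows assigned to its cluster to play the role of the ``first $k$'' rows: if $x \in \Ssafe$ these must come from a single group $g$ with $r(g)=x$ (so $g \subseteq \Rsafe$ by definition of $\Rsafe$), and I use the $k$ edges $y_i=(\Rsafe'^l(g,i),T(x,i))$ of type~3; if $x \in \Scost$, I use $k$ edges of type~1 (for rows of $\Rdist$) or type~4 (for exceeding rows of $\Rsafe$ groups), choosing one endpoint $T(x,j)$ per selected row. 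This covers all of $T$, so the matching will be feasible.

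Next I would handle the remaining (exceeding) rows. Every row $r$ of $R$ not chosen as one of the $k$ representatives is assigned in $\PiS$ to some compatible $x$; I route it through its ``cheap'' edge: if $r \in \Rdist$, the type-2 edge $(\Rdist^l(r),\Rdist^r(r))$ of weight $w(r)$; if $r \in \Rsafe$ lies in group $g$ and is the $i$-th exceeding row, the type-5 edge $(\Rsafe^l(g,i),\Rsafe^r(g,i))$ of weight $w(r(g))$; and for the $\Rsafe'^l(g,i)$ and $\Rsafe^l(g,i)$ vertices of groups whose rows were \emph{not} all promoted to representative status, I still need to cover them --- here I should note (as the excerpt already observes) that each $\Rsafe^l(g,i)$ vertex has the forced pendant edge to its private $T$-type partner, so those get covered automatically, and the $\Rsafe'^l$ vertices of a $\Ssafe$-group are covered exactly by the type-3 edges. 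I then argue $V(T)\cup V(L)=R^l_{dist}\cup R^l_{safe}\cup R'^l_{safe}$, i.e. the matching is complete, and that $R(L)$ is precisely the set of exceeding rows.

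Once completeness is established, the weight computation is essentially Lemma~\ref{lem:cost:match:part:feas}: $w_h(M)=k\sum_{r\in S'}w'(r)+\sum_{r\in R(L)}(m-del(r))$, which rearranges to $(W+1)k|S'|+m|R^l_{dist}\cup R^l_{safe}\cup R'^l_{safe}|-\big(k\sum_{r\in S'}del(r)+\sum_{r\in R(L)}del(r)\big)$. So the final step is to check that the parenthesized subtracted quantity is at most $e$, i.e. that $k\sum_{r\in S'}del(r)+\sum_{r\in R(L)}del(r)$ is a lower bound on the number of suppressed entries in $\PiS$. This holds because in $\PiS$ each of the $\geq k$ rows assigned to the cluster of $x$ incurs at least $del(x)$ suppressions (its resolution vector is $x$), contributing $\geq k\cdot del(x)$ from the representative rows alone, plus each exceeding row $r\in R(L)$ contributes at least $del($its resolution vector$)\geq$ ... — and here one must be slightly careful: the exceeding row is assigned in $\PiS$ to its true resolution vector, whose $del$ may differ from what the type-2/type-5 edge records, so I would instead bound $\sum_{r\in R(L)}del(r)$ directly by the suppressions those rows incur in $\PiS$ (each exceeding row $r$ has at least $del(r_{y})$ suppressions where $r_y$ is its cluster's resolution vector, and $m-del(r_y)\le w(r)$ so $del(r)\ge del(r_y)$ is \emph{not} immediate) — so the cleanest route is: count suppressions of the $k$ promoted rows per cluster ($\ge k\,del(x)$) and of the exceeding rows, observing $w(r)=\max_{r_x\in Comp(r,S')}\{m-del(r_x)\}\ge m-del(\text{resolution vector of }r)$, hence the suppressions of $r$ are $\ge m-w(r)=del(r)$ in our accounting, giving $\sum_{r\in R(L)}del(r)\le(\text{suppressions on }R(L))$. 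Summing the two bounds yields $k\sum_{r\in S'}del(r)+\sum_{r\in R(L)}del(r)\le c(\PiS)\le e$, and therefore $w_h(M)\ge (W+1)k|S'|+m|R^l_{dist}\cup R^l_{safe}\cup R'^l_{safe}|-e$, as required.

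The main obstacle I anticipate is precisely this last bookkeeping: ensuring the disjointness of the two suppression contributions (the $k$ promoted rows per cluster versus the exceeding rows in $R(L)$) so that their sum is genuinely a lower bound rather than a double count, and reconciling the fact that the type-2 and type-5 edges record $w(r)$ (the \emph{best possible} compatible resolution vector) rather than the resolution vector actually used in $\PiS$ — this is exactly why the inequality goes the right way (we are allowed to route an exceeding row through a \emph{better} compatible vector than $\PiS$ did), but it needs to be spelled out carefully. A secondary point is checking that for $x\in\Ssafe$ the $k$ promoted rows can indeed be taken from a single group $g$ with $r(g)=x$; this follows because $x$ has no suppressions, so every row compatible with $x$ equals $x$, and since $\PiS$ is feasible that cluster has $\ge k$ such rows, all forming (part of) one group which is then in $\Rsafe$ by definition.
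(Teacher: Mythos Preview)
Your proposal is correct and follows essentially the same construction as the paper: choose $k$ representative rows per cluster and match them into $T$ via type-3 (for $\Ssafe$) or type-1/type-4 (for $\Scost$) edges, route all remaining rows through their pendant type-2/type-5 edges, then invoke Lemma~\ref{lem:cost:match:part:feas} and bound $k\sum_{x\in S'}del(x)+\sum_{r\in R(L)}del(r)\le c(\PiS)\le e$. Your treatment of the final inequality (using $w(r)\ge m-del(x_r)$ to compare the type-2/5 edge weight with the actual suppressions of an exceeding row) is in fact more carefully argued than the paper's appendix version; the only slip is a harmless mislabelling—the vertices with a single forced edge into $T$ are the $\Rsafe'^l(g,i)$ (type-3), whereas $\Rsafe^l(g,i)$ has both type-4 edges into $T$ and the type-5 pendant to $\Rsafe^r(g,i)$.
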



\begin{Lemma}
\label{lem:feasible_matching-solution}
Let $R$ be an instance of \smAP, let \grafoSR be the graph associated with $R$, and
let $M$ be a complete matching of \grafoSR of weight
$w_h(M) \geq (W+1) k |S'| + m |R^l_{dist} \cup R^l_{safe} \cup R'^l_{safe}|- e$.
Then, starting from the
matching $M$ of \grafoSR, Algorithm~\ref{alg:matching} computes a
feasible solution $\PiS(M)$ of \smAP over instance $R$, where there are at
most $e$ suppressions.
\end{Lemma}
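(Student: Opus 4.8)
The plan is to invert the construction that yields a matching from a feasible solution (Lemma~\ref{lem:from-partially-feasible-solution-to-matching}): starting from a complete matching $M$ of weight at least $(W+1)k|S'| + m|R^l_{dist}\cup R^l_{safe}\cup R'^l_{safe}| - e$, I will run Algorithm~\ref{alg:matching} to get a clustering $\PiS(M)$ and then bound its cost. First I would invoke Lemma~\ref{lem:feas:solution}-style reasoning (or, more precisely, note that a complete matching covers all of $T$ and all of $R^l_{dist}\cup R^l_{safe}$) to conclude that $\PiS(M)$ is a feasible clustering: every resolution vector $x\in S'$ receives its $k$ vertices $T(x,1),\dots,T(x,k)$, hence its cluster has size at least $k$; and every row of $R$ is assigned to exactly one cluster, since every vertex of $R^l_{dist}\cup R^l_{safe}\cup R'^l_{safe}$ is matched (the edges into $\Rsafe^r,\Rdist^r$ and into $T$ exhaust these vertices, and each row of $R$ corresponds to exactly one such left vertex).

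Next I would quantify the suppressions. The key observation is that a row $r$ assigned to a resolution vector $x$ (whether via an edge of type~1, 3, or 4 into $T$, or via a type~2 or 5 edge in which case $x = r(g)$ or $x = \arg\max w$) is compatible with $x$, so when its cluster is collapsed to a common resolved vector it suffers at most $del(x)$ suppressions if $x$ is the resolution vector of its cluster, or at most $del(r_y)$ suppressions in the type~2/5 case. Summing, the total number of suppressions of $\PiS(M)$ is at most $k\sum_{r\in S'} del(r) + \sum_{r\in R(L)} del(r)$, where $R(L)$ is the set of rows routed through an $R^r$-vertex (type~2 or 5). This is precisely the quantity subtracted in Lemma~\ref{lem:cost:match:part:feas}. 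By that lemma, $w_h(M) = (W+1)k|S'| + m|R^l_{dist}\cup R^l_{safe}\cup R'^l_{safe}| - \bigl(k\sum_{r\in S'} del(r) + \sum_{r\in R(L)} del(r)\bigr)$, so the hypothesis $w_h(M) \ge (W+1)k|S'| + m|R^l_{dist}\cup R^l_{safe}\cup R'^l_{safe}| - e$ rearranges to $k\sum_{r\in S'} del(r) + \sum_{r\in R(L)} del(r) \le e$, which is an upper bound on the number of suppressions in $\PiS(M)$. Hence $\PiS(M)$ suppresses at most $e$ entries.

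The one subtlety I would be careful about — and which I expect to be the main obstacle — is the accounting in the ``safe'' clusters and making sure Lemma~\ref{lem:cost:match:part:feas} really applies, i.e. that $M$ being a complete matching forces $|V(T)| = |T| = k|S'|$ and $V(T)\cup V(L) = R^l_{dist}\cup R^l_{safe}\cup R'^l_{safe}$ with the union essentially disjoint in the way the cost bound needs. In particular one must check that the rows of $\Rsafe$ counted in $R'^l_{safe}$ (the first $k$ rows of each safe group, matched by type~3 edges to $T(r',\cdot)$ with $r'\in\Ssafe$ having $del(r')=0$) contribute zero suppressions, and that the exceeding rows of a safe group either go to $T$ (type~4, suppressions $\le del(x)$ counted in the $k\sum del$ term via the resolution vector) or stay with $r(g)$ via a type~5 edge (zero suppressions, since $r(g)\in\Ssafe$). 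Once this bookkeeping is pinned down, the bound on suppressions is immediate from Lemma~\ref{lem:cost:match:part:feas} and the hypothesis, and combined with the feasibility argument above this establishes the lemma.
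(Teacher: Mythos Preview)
Your proposal is correct and follows essentially the same route as the paper's proof: use completeness of $M$ to obtain feasibility of $\PiS(M)$, then invoke Lemma~\ref{lem:cost:match:part:feas} to rewrite $w_h(M)$ and rearrange the weight hypothesis into the bound $k\sum_{r\in S'}del(r)+\sum_{r\in R(L)}del(r)\le e$, which upper-bounds the total number of suppressions. The bookkeeping you flag as a ``subtlety'' (that the $R'^l_{safe}$ rows go to vectors in $\Ssafe$ with $del=0$, and that type~5 rows likewise incur no suppressions) is exactly what makes the identity of Lemma~\ref{lem:cost:match:part:feas} hold, and the paper relies on it in the same summary fashion you do; no additional idea is needed.
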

\begin{proof}
Since $M$ is complete, for each vertex $T(x,j)$ of $T$, with $1 \leq j \leq k$,
there exists an edge  $(v, T(x,j))\in M$ for some
$v \in (R^l_{dist} \cup R^l_{safe} \cup R'^l_{safe})$. Then
Algorithm~\ref{alg:matching} defines a solution $\PiS(M)$ for \smAP
assigning, for each edge $(v, T(x,j))$,
the row $r$ corresponding to vertex $v$
to the set that has resolution vector $x$.
More precisely, row $r$ is defined by
Algorithm~\ref{alg:matching} as the $j$-th element of the set
that has resolution vector $x$.
Therefore each set associated with a resolution row $x \in S'$ will consist of
at least $k$ rows compatible with $x$.
Hence  $\PiS(M)$ is a feasible solution.

Recall that $M$ has a total weight of at least
$(W+1) k |S'| + m |R^l_{dist} \cup R^l_{safe} \cup R'^l_{safe}|- e$.
We will prove that $\PiS(M)$ induces at most $e$ suppressions.
By  Lemma~\ref{lem:cost:match:part:feas},
$
w_h(M) = k \sum_{r \in S'} ( W + m -del(r)+1) + \sum_{r \in R(L)} m - del(r)
= (W+1) k |S'| + m |R^l_{dist} \cup R^l_{safe} \cup R^l_{safe}|- (k\sum_{r \in S'} del(r) + \sum_{r \in R(L)} del(r))
\geq (W+1) k |S'| + m |R^l_{dist} \cup R^l_{safe} \cup R^l_{safe}|- e
$
where 
$k \sum_{r \in S'} del(r) + \sum_{r \in R(L)} del(r) \leq e$.
Notice that, by definition of  $\PiS(M)$, each vertex of $V(T)$ corresponds to
a row in $R$ assigned to a set with a resolution vector in $S'$. Such rows
associated with $V(T)$ induce
a cost in $\PiS(M)$ of $k \sum_{r \in S'} del(r)$. Furthermore, the vertices of $V(L)$
corresponds to rows of $R$  inducing a cost of at most
$\sum_{r \in R(L)} del(r)$.
Therefore $\PiS(M)$ induces $k \sum_{r \in S'} del(r) + \sum_{r \in R(L)} del(r) \leq e$
suppressions.
\qed\end{proof}

\begin{Theorem}
\label{theo:correctness-algorithm}
Let $R$ be an instance of \smAP. Then
Algorithm~\ref{alg:solving-AP} returns a solution  $\PiS(M)$ of cost at most $e$ if and only if
such a solution exists.
\end{Theorem}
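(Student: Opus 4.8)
The plan is to obtain the theorem by chaining the lemmata already proved in this section, handling the two directions of the biconditional separately and then remarking on the running time. Fix an instance $R$ of \smAP{} and an integer $e$.

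For the ``if'' direction I would start from a feasible solution $\Pi$ of \smAP{} over $R$ that suppresses at most $e$ entries, and let $S'$ be the candidate set consisting of one resolution vector per cluster of $\Pi$. Since every resolution vector has its $j$-th symbol in $\Sigma^*_j$, we have $S'\subseteq S$, so in some iteration of its main loop Algorithm~\ref{alg:solving-AP} considers exactly this $S'$ and builds \grafoSR. By Lemma~\ref{lem:from-partially-feasible-solution-to-matching}, \grafoSR{} admits a complete matching of weight at least $(W+1)k|S'| + m|R^l_{dist}\cup R^l_{safe}\cup R'^l_{safe}| - e$. The matching $M$ computed by the algorithm is a maximum weight matching, so its weight $w$ is at least this quantity and, by Lemma~\ref{lem:feas:solution}, $M$ is complete and $\PiS(M)$ is feasible; hence the if-condition of Algorithm~\ref{alg:solving-AP} is satisfied on this iteration (if it has not already returned on an earlier one) and the algorithm outputs a solution instead of declaring that none exists.

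For the converse I would argue that whatever the algorithm returns is correct: if it returns $\PiS(M)$ for some candidate set $S'$ and maximum weight matching $M$, then the if-condition guarantees $w_h(M)\geq (W+1)k|S'| + m|R^l_{dist}\cup R^l_{safe}\cup R'^l_{safe}| - e$, and $M$ is complete by Lemma~\ref{lem:feas:solution}; Lemma~\ref{lem:feasible_matching-solution} then yields that $\PiS(M)$ is a feasible solution of \smAP{} over $R$ with at most $e$ suppressions. Putting the two directions together, Algorithm~\ref{alg:solving-AP} returns a solution of cost at most $e$ precisely when one exists. I would close with the FPT bound: there are at most $2^{|\Sigma^*|^m}$ subsets $S'$ of $S$, and for each the construction of \grafoSR, the maximum weight bipartite matching computation~\cite{fastmwb}, the evaluation of the weights $w(r)$, $W$, $w'(\cdot)$, and the extraction of $\PiS(M)$ via Algorithm~\ref{alg:matching} all take polynomial time, giving total running time $f(|\Sigma|,m)\,n^{O(1)}$.

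The only real obstacle is bookkeeping rather than a new idea: one must reconcile the threshold written in Algorithm~\ref{alg:solving-AP} (which as stated uses $|R^l_{dist}\cup R^l_{safe}|$) with the quantity $(W+1)k|S'| + m|R^l_{dist}\cup R^l_{safe}\cup R'^l_{safe}| - e$ appearing in Lemmata~\ref{lem:cost:match:part:feas}, \ref{lem:from-partially-feasible-solution-to-matching} and~\ref{lem:feasible_matching-solution}, and one must be careful that it is the \emph{maximum} weight matching $M$ produced by the algorithm -- not the particular matching derived from a hypothetical optimal solution -- that both meets the threshold (because its weight dominates that matching's weight) and induces a feasible clustering; the feasibility is exactly what Lemma~\ref{lem:feas:solution} supplies, so no further work is needed.
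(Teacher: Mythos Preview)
Your proposal is correct and follows essentially the same approach as the paper: both directions are obtained by chaining Lemma~\ref{lem:feas:solution}, Lemma~\ref{lem:from-partially-feasible-solution-to-matching}, and Lemma~\ref{lem:feasible_matching-solution}, with the key observation that the maximum weight matching dominates the complete matching guaranteed by Lemma~\ref{lem:from-partially-feasible-solution-to-matching}. Your additional remarks on the FPT running time and on the discrepancy between the threshold in Algorithm~\ref{alg:solving-AP} and the one in the lemmata are accurate and worth keeping, though the paper places the complexity analysis outside the proof proper.
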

\begin{proof}
By Lemma~\ref{lem:feas:solution}, $\PiS(M)$ is feasible.
Hence if $\PiS(M)$ suppresses at most $e$ entries, then \smAP admits a solution
of cost at most $e$.
On the other hand, by
Lemma~\ref{lem:from-partially-feasible-solution-to-matching},
if there exists a solution $\Pi'$ of 
$R$ that suppresses
at most $e$ entries, then there exists a feasible matching $M$ with
weight $w_G(M) \geq (W+1) k |S'| + m |R^l_{dist} \cup R^l_{safe} \cup R'^l_{safe}|- e$.
Then, by Lemma~\ref{lem:feasible_matching-solution},
Algorithm~\ref{alg:solving-AP} returns a solution $\PiS(M)$ of \smAP
that suppresses at most $e$ entries.
\qed\end{proof}

If \smAP admits a solution that suppresses at most $e$ entries,
then there  exists a set $S^*$ of resolution vectors such that
$\Pi_{S^*}$ is a solution for \smAP with resolution vectors $S^*$
with the property that $\Pi_{S^*}$ suppresses at most $e$ entries.
Now, there exist $O(2^{(|\Sigma|+1)^m})$ possible sets of resolution vectors and
the construction of graph \grafoSR requires $O(k|S^*||R|) \leq O(k e |R|) \leq O(kmn^2)$.
A maximum matching $M$ of a bipartite graph
can be computed in polynomial time~\cite{fastmwb}
and starting from $M$, we can compute a solution of the \smAP
in time $O(|M|) \leq O(m)$. Hence the overall time complexity of the algorithm
is $O(2^{(|\Sigma|+1)^m}kmn^2)$.

\section{APX-hardness of $3$-AP($3$)}
\label{sec:AP-3-3}

In this section we investigate the computational and approximation complexity of
$3$-AP($3$), that is $k$-AP when each row
consists of exactly $3$ columns and $k=3$.
We show that $3$-AP($3$) is APX-hard
via an L-reduction
from Minimum Vertex Cover on Cubic Graphs (MVCC), which is known to be
APX-hard~\cite{AK92}.
Due to page limit, we only  sketch  the proof.
The MVCC problem, given a cubic graph $G=(V,E)$, asks for a smallest
$C \subseteq V$ such that each edge of $G$ has at least one
of its endpoints in $C$.

Let $G=( V , E )$ be instance of MVCC, where $|V|=n$ and $|E|=m$.
The reduction builds an
instance $R$ of $3$-AP($3$) associating with each vertex $v_i \in V$
a set $R_i$ consisting of $9$ rows, and with each edge $e=(v_i, v_j) \in E$ a set
$E_{i,j}$ consisting of $7$ rows.
Finally, a set $X$ of $3$ more rows is added to $R$.

Now we can describe formally our reduction.
Let $R_i$ be the set of rows associated with vertex $v_i \in V$.
The rows in $R_i$ have values over an alphabet
$\Sigma_i= \{ \sigma_i, \sigma_{i,1}, \sigma_{i,2}, \sigma_{i,3}  \}$.
The set $R_i$ consists of $9$ rows belonging to $6$ groups,
denoted by $g_1(v_i), \dots , g_6(v_i)$, of identical rows.
The representative rows of groups $g_1(v_i), \dots , g_6(v_i)$, and the cardinality of the groups,
are defined as follows:

\vspace*{0.12cm}
- $r(g_h(v_i))=\sigma_{i,h} \sigma_i \sigma_{i,h}$,
    with $h \in \{ 1,2,3 \}$; each group $g_h(v_i)$,  with $h \in \{ 1,2,3 \}$, consists of exactly two rows;

- $r(g_{3+h}(v_i))= \sigma_i \sigma_i \sigma_{i,h}$, with $h \in \{ 1,2,3 \}$;
each group $g_{3+h}(v_i)$, with $h \in \{ 1,2,3 \}$, consists of exactly one row.
\vspace*{0.12cm}

Notice that given two rows $r,r'$ belonging to different groups of $R_i$,
$H(r,r')=1$ iff $r \in g_h(v_i)$, $r' \in g_{3+h}(v_i)$ (or the converse) or
$r,r' \in \{ g_4(v_i), g_5(v_i), g_6(v_i)\}$.
Given a group $g_h(v_i)$, with $h \in \{ 1,2,3 \}$, each symbol $\sigma_{i,h}$
is called the \emph{private} symbol of  $g_h(v_i)$.
The groups of rows $g_j(v_i)$,  with $j \in \{ 1,2,3\}$, are denoted as the
\emph{docking groups} of $R_i$, and each of them is associated with a set $E_{i,h}$
of rows encoding an edge $(v_i,v_h)$ of $G$.
More precisely, given the set of rows $E_{i,j}$, we denote by
$d_{i,j}(g(v_i))$ the docking group of $R_i$ associated with set $E_{i,j}$.

Now, let us build the set $E_{i,j}$ of rows associated with an edge $(v_i,v_j)$.
Let $d_{i,j}(g(v_i))$ and $d_{i,j}(g(v_j))$ be the two docking groups of $R_i$ and $R_j$ respectively,
associated with the set $E_{i,j}$.
Let $\sigma_{i,x}$ and $\sigma_{j,y}$ be the private symbols of groups $d_{i,j}(g(v_i))$ and $d_{i,j}(g(v_j))$
respectively.
The set
$E_{i,j}$  consists of $7$ rows distributed in $6$ groups.
The rows of $E_{i,j}$ have values over alphabet
$\Sigma_{i,j} = \{  \sigma_{i,x}, \sigma_{j,y} ,  \sigma_{i,j} , \sigma_{i,j,4}, \sigma_{i,j,5},
\sigma_{i,j,6} \}$.
Let us define the representative rows and the cardinality of the groups in $E_{i,j}$:

\vspace*{0.15cm}
- $r(g_1(v_i,v_j))= \sigma_{i,x} \sigma_{i,j} \sigma_{i,x}$;
group $g_1(v_i,v_j)$ consists of a single row;

- $r(g_2(v_i,v_j))= \sigma_{i,x} \sigma_{i,j} \sigma_{j,y}$; group $g_2(v_i,v_j)$ consists of two rows;

- $r(g_3(v_i,v_j))= \sigma_{j,y} \sigma_{i,j} \sigma_{j,y}$;
group $g_3(v_i,v_j)$ consists of a single row;

- $r(g_t(v_i,v_j))= \sigma_{i,j,t} \sigma_{i,j} \sigma_{i,j,t}$, with $t \in \{ 4,5,6 \}$;
each group $g_t(v_i,v_j)$, with $t \in \{ 4,5,6 \}$, consists of a single row.

\vspace*{0.15cm}

The group of $E_{i,j}$ that has two occurrences of symbol $\sigma_{i,x}$ shared with
$d_{i,j}(g(v_i))$ is called the $i$-group of set $E_{i,j}$, and is denoted as
$g^i(v_i,v_j)$.
Notice that, given two rows $r,r'$ of $R_i$, $E_{i,j}$ respectively, then
$H(r,r')=1$ iff $r \in d_{i,j}(g(v_i))$ and $r' \in g^i(v_i,v_j)$.

Finally, a set $X$ of $3$ rows $x_1, x_2, x_3$ are added to $R$. The rows in
$X$ have values over an alphabet $\Sigma_x$ disjoint from any other set $\Sigma_i$, $\Sigma_{i,j}$.
Each row $x_i=w_i^3$, and it has Hamming distance $3$ from any other row of $R$.
Therefore for any set $C$ containing some rows $x_i$, all  positions
of a row in $C$ will be suppressed.

Now, consider the set $R_i$. The following lemma
gives a lower bound on the cost of an optimal solution of $3$-AP($3$) over instance
$R_i$.

\begin{Lemma}
\label{lem:opt-R_i}
Let $R_i$ be a set of rows, then an optimal solution of
$3$-AP($3$) over instance $R_i$ has a cost of at least $9$.
\end{Lemma}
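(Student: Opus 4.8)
The plan is to analyze the structure of $R_i$ and argue that no clustering of its $9$ rows into groups of size at least $3$ can do better than total cost $9$. First I would recall the composition of $R_i$: it has $9$ rows in $6$ groups of identical rows, namely the three docking groups $g_1(v_i), g_2(v_i), g_3(v_i)$ of size $2$ and the three singleton groups $g_4(v_i), g_5(v_i), g_6(v_i)$. Since every cluster must contain at least $3$ rows, and $|R_i| = 9$, a solution restricted to $R_i$ has either one, two, or three clusters; moreover at least one of these clusters must merge rows from at least two different groups (because every group has size at most $2 < 3$), so every cluster incurs a strictly positive cost.

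The key observation is that whenever a cluster $P$ merges rows coming from two groups whose representatives are at Hamming distance $\delta$, the cost contributed by $P$ is at least $|P| \cdot \delta \geq 3\delta$, using the lower bound $|P_i|\max_{r_1,r_2\in P_i} H(r_1,r_2)$ for $c(P_i)$ recalled in the preliminaries. So I would go through the possible merges. Within $R_i$, two representatives are at Hamming distance $1$ only in the cases described just before the lemma: $r \in g_h(v_i)$ and $r' \in g_{3+h}(v_i)$, or $r, r' \in \{g_4(v_i), g_5(v_i), g_6(v_i)\}$; all other pairs are at Hamming distance $2$. Thus a ``cheap'' cluster either is contained in $\{g_h(v_i), g_{3+h}(v_i)\}$ for a fixed $h$ (but that set has only $3$ rows, forcing the cluster to be exactly $\{g_h(v_i), g_{3+h}(v_i)\}$ with cost exactly $3$) or is contained in $g_4(v_i) \cup g_5(v_i) \cup g_6(v_i)$ (exactly $3$ rows, cost exactly $3$). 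Any cluster that steps outside these patterns merges two groups at Hamming distance $2$ and hence costs at least $6$.

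From here I would do a short case analysis on the number of clusters. The only way to achieve cost $9$ turns out to be the partition into the three clusters $\{g_1(v_i), g_4(v_i)\}$-type pairs — more precisely three clusters each of the form $\{g_h(v_i) \cup g_{3+h}(v_i)\}$, each of cost $3$ — or the partition into $\{g_1(v_i)\cup\text{(its }i\text{-group neighbor)}\}$ ... but within $R_i$ alone the relevant alternative is the three-cluster solution of total cost $9$; I would check that one cluster (cost $\geq 6$ if it merges distance-$2$ groups, or $= 3$ if it is one of the two cheap types) plus a forced second cluster covering the remaining rows never beats $9$, and that a single cluster of all $9$ rows costs $9 \cdot H_{\max}$ which is far more. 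The main obstacle, and the part requiring care, is the exhaustive but finite bookkeeping: enumerating how the $6$ groups can be distributed among $1$, $2$, or $3$ clusters of size $\geq 3$ and verifying in each case that the Hamming-distance lower bound forces cost $\geq 9$; the distance table stated before the lemma (distance $1$ exactly in the listed cases, distance $2$ otherwise) is exactly what makes this enumeration go through.
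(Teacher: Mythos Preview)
Your proposal is correct but substantially more elaborate than needed; the paper dispatches the lemma in one line using an observation you already state but then do not exploit fully. You note that every group in $R_i$ has size at most $2<3$, and you conclude (somewhat loosely) that ``at least one'' cluster must merge rows from two different groups. In fact \emph{every} cluster must do so, since no single group can furnish the required $3$ rows. Hence in every cluster at least one position is suppressed, so \emph{each of the $9$ rows} of $R_i$ contributes at least $1$ to the cost, giving the bound $9$ immediately. No Hamming-distance table, no enumeration over the number of clusters, and no case analysis is required.

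Your route via the lower bound $|P|\cdot\max_{r_1,r_2\in P}H(r_1,r_2)$ and the distance-$1$/distance-$2$ classification does work, and it has the side benefit of identifying exactly which partitions achieve cost $9$ (information that is used later in the paper when discussing \tya and \tyb solutions). But for the bare lower bound stated in the lemma, the per-row argument above is both shorter and conceptually cleaner: it avoids the ``exhaustive but finite bookkeeping'' you flag as the main obstacle, because there is nothing to enumerate.
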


The main idea of the reduction is showing that we can consider a set of solutions,
called \emph{canonical solutions}, that is solutions where:

\vspace*{0.15cm}
(i) $\Pi$ contains exactly one cluster $X$ containing only suppressed  entries;

(ii)  each set $R_i$ is associated with either a \tya or a \tyb solution
(to be defined later), eventually with the contribution of
some rows in the sets $E_{i,j}$ for a \tyb solution;

(iii) two sets $R_i$, $R_j$ are associated with a \tyb solution
only if there is no edge set $E_{i,j}$ in the instance $R$, that is
the corresponding vertices $v_i$, $v_j$ are not adjacent in $G$;

(iv) either an edge set is part of a \tyb solution of some set $R_i$ and has a total cost of $10$ or
it has a total cost of $11$.

\vspace*{0.15cm}
%
Notice that, by construction, in a canonical solution, rows $x_1, x_2, x_3 \in X$.
%
%
%

Let us define the notions of \tya and \tyb solution. Given a set $R_i$ and the
edge sets $E_{i,j}$, $E_{i,h}$, $E_{i,l}$, a \tya solution for $R_i$
consists of three sets $S_{i,1}, S_{i,2}, S_{i,3}$,
where $S_{i,t}=g_t(v_i) \cup g_{t+3}(v_i)$, while
%
%
%
a \tyb solution
consists of the following sets:
(i) three sets $d_{i,j}(g(v_i)) \cup g^i(v_i,v_j)$,
$d_{i,h}(g(v_i)) \cup g^i(v_i,v_h)$, $d_{i,l}(g(v_i)) \cup g^i(v_i,v_l)$;
(ii) $g_4(v_i) \cup g_5(v_i) \cup g_6(v_i)$.
%

Lemma~\ref{lem:compute-canonical} is the main technical contribution of this section.
\begin{Lemma}
\label{lem:compute-canonical}
Let $\Pi$  be a solution of $3$-AP($3$) over instance $R$. Then we can compute in polynomial time a canonical solution $\Pi'$ of $3$-AP($3$) over instance $R$ such that $c(\Pi') \leq c(\Pi)$.
\end{Lemma}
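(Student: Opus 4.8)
The plan is to transform an arbitrary solution $\Pi$ into a canonical one step by step, never increasing the cost, by handling the defining properties (i)–(iv) in turn and showing each repair is cost-nonincreasing.

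\medskip

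\textbf{Step 1: forcing the cluster $X$.} First I would observe that, by construction, each $x_i \in X$ has Hamming distance $3$ from every other row of $R$, and the $x_i$ are mutually at distance $3$ as well (they live over the private alphabet $\Sigma_x$ and each is constant). Hence in \emph{any} feasible solution, every cluster containing some $x_i$ must suppress all $3$ of its columns, so each such cluster costs $3$ per row. Since there are exactly $3$ rows in $X$ and each cluster needs size at least $3$, the cheapest way to accommodate them is a single cluster $\{x_1,x_2,x_3\}$ of cost $9$; any other arrangement either uses extra rows (from $R\setminus X$) in a fully-suppressed cluster, strictly increasing cost, or splits the $x_i$ across clusters, which is impossible since each would then be in a size-$\geq k$ cluster of fully suppressed rows, again forcing extra rows in. So I can rewrite $\Pi$ to contain the cluster $X$ exactly, removing the $x_i$ from wherever they were and putting the displaced non-$X$ rows back — but here care is needed: pulling an $x_i$ out of a cluster $C$ with $|C|=3$ leaves $C$ with $2$ rows, so the displaced rows must be reabsorbed. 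The clean way is to argue that in an optimal (or cost-minimal) $\Pi$ we may assume $X$ is already a cluster, and then only work with solutions of that form; more carefully, one shows a $3$-row cluster with an $x_i$ costs $9$, and such a cluster cannot help other rows, so swapping to $\{x_1,x_2,x_3\}$ and redistributing is always possible without loss. I expect this redistribution bookkeeping to be the first place the argument gets fiddly.

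\medskip

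\textbf{Step 2: localizing the rows of each $R_i$.} The key structural fact (from the distance analysis preceding the lemma, Propositions on Hamming distances) is that a row of $R_i$ is at distance $1$ from another row only if that other row is in $R_i$ (within the triangle $g_4,g_5,g_6$ or from a docking group to its partner $g_{3+h}$) or is the $i$-group of an incident edge set $E_{i,j}$; everything else is at distance $\geq 2$, and cross-gadget rows are at distance $3$. I would show that any cluster mixing rows from $R_i$ with rows at distance $\geq 2$ can be split into cheaper clusters (since a cluster of $\ell$ rows with diameter $d$ costs $\ell d$, and breaking ties along the distance-$1$ structure only lowers $\ell d$), possibly merging leftover small pieces with the already-cheap structure. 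Then within $R_i$, the distance-$1$ graph is exactly a triangle $g_4g_5g_6$ plus three pendant edges $g_h$–$g_{3+h}$ ($h=1,2,3$); since no group has $\geq 3$ identical rows, Lemma~\ref{lem:opt-R_i} says we pay at least $9$, and the two ways to achieve a clustering of the $R_i$-rows into valid ($\geq 3$)-clusters using only distance-$1$ merges are precisely the \tya\ configuration ($S_{i,t}=g_t(v_i)\cup g_{t+3}(v_i)$, three clusters of $3$ rows each, cost $9$) and the \tyb\ configuration (the three docking groups pushed out to their $E_{i,j}$ partners, $g_4\cup g_5\cup g_6$ forming one cluster). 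I would check that any other arrangement is at least as expensive and can be rewritten to one of these two.

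\medskip

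\textbf{Step 3: edge sets, adjacency constraint, and costs.} With the $R_i$ in \tya/\tyb\ form, I would analyze each $E_{i,j}$: its distance-$1$ graph is a $4$-edge path $g_1$–$g_2$–$g_3$ with $g_2$ also reachable... (more precisely, $g_1,g_2,g_3$ in a path and $g_4,g_5,g_6$ pendant-ish), and the only external distance-$1$ neighbours are the docking groups $d_{i,j}(g(v_i))$ adjacent to $g^i(v_i,v_j)$. A counting argument then shows: if neither incident $R_i$ nor $R_j$ is \tyb\ "through" this edge, then $E_{i,j}$ must self-cluster at total cost $11$; if exactly one endpoint donates its docking group, the edge costs $10$; and two \tyb\ endpoints for the same $E_{i,j}$ is impossible (there aren't enough rows / the two docking groups can't both merge into the single $i$-group cluster while staying feasible) — this is property (iii). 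Finally, property (iv) follows from this case split. Throughout, each rewriting step must be shown cost-nonincreasing, and I would conclude by composing Steps 1–3: the resulting $\Pi'$ satisfies (i)–(iv) and $c(\Pi')\le c(\Pi)$, all computable in polynomial time since each step is a local, efficiently-findable modification. \textbf{The main obstacle} I anticipate is Step 2–3's exhaustive-but-not-tedious verification that \emph{every} non-canonical local configuration can be pushed toward canonical form without a cost increase — in particular handling clusters that straddle an $R_i$ and an $E_{i,j}$ in ways that are neither the clean \tya\ nor \tyb\ pattern, and making sure the redistribution of "orphaned" rows (those left in undersized clusters after a split) always lands on a distance-$1$ neighbour so no new suppression is paid.
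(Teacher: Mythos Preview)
Your overall shape---isolate $X$, force each $R_i$ into \tya\ or \tyb, then clean up the edge sets---matches the paper's intent, but there is a concrete error in Step~3 and a methodological difference worth noting.

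\medskip
\textbf{The error.} You claim that two \tyb\ endpoints on the same edge set $E_{i,j}$ is ``impossible (there aren't enough rows / the two docking groups can't both merge into the single $i$-group cluster).'' This is false: $E_{i,j}$ has \emph{two} endpoint groups, $g^i(v_i,v_j)=g_1(v_i,v_j)$ and $g^j(v_i,v_j)=g_3(v_i,v_j)$, one adjacent to the docking group of $R_i$ and the other to that of $R_j$. So $R_i$ and $R_j$ can perfectly well both be \tyb\ simultaneously, each absorbing its own edge group. The configuration is feasible; what goes wrong is that $g_2(v_i,v_j)$ (two rows) then has no distance-$1$ neighbour left and must be clustered at distance $\ge 2$, paying at least $4$ there. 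Property~(iii) is therefore not a feasibility fact but a \emph{cost} fact: the paper proves (its Lemma on adjacent \tyb\ pairs) that converting one of $R_i,R_j$ to \tya, rerouting the freed $g^i$ groups into the $\bigcup_{t=4,5,6}g_t$ clusters of the other incident edge sets, and restoring the $g_2\cup g^j$ cluster, is cost-nonincreasing. You need this argument; your infeasibility claim does not supply it.

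\medskip
\textbf{Methodological difference.} Your plan is a sequence of local splits and merges, with the acknowledged hazard that splitting a size-$3$ cluster orphans rows. The paper avoids this entirely: it defines an intermediate solution $\Pi''$ \emph{globally} in one shot (for each $R_i$: if all three docking groups are co-clustered with their edge partners in $\Pi$, declare \tyb; otherwise declare \tya; then fill in each $E_{i,j}$ by a fixed recipe depending on the types of its endpoints), and proves $c(\Pi'')\le c(\Pi)$ by a case analysis over edge sets, using two auxiliary lemmas---one showing that an edge set costs $10$ only if it is in ``$i$-normal'' or ``$j$-normal'' form, and one showing that if neither endpoint is \tyb\ yet the edge costs $10$ in $\Pi$, then one of $R_i,R_j$ already costs at least $11$ in $\Pi$ (so the extra $1$ paid on that edge in $\Pi''$ is compensated on the vertex side). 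Only after $\Pi''$ is built does the paper repair adjacent \tyb\ pairs to reach the canonical $\Pi'$. This global-definition-then-accounting approach sidesteps the orphaned-row bookkeeping you flagged as the main obstacle; your local-repair route could be made to work, but you would need to confront that bookkeeping explicitly rather than gesture at it.
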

\noindent
\emph{Sketch of the proof.}
By direct inspection, it is immediate to notice that \tya and \tyb
solutions induce $9$ suppression
in rows of $R_i$ hence, by Lemma~\ref{lem:opt-R_i}, they are optimal for $R_i$.
The next step is computing in polynomial time a solution $\Pi''$ such that
each set $R_i$  is associated
in $\Pi''$ only with either a \tya or \tyb solution, and such that $c(\Pi'')
\leq c(\Pi)$.
Such step is obtained by exploiting the optimality of \tya and \tyb solutions
for $R_i$, and some properties of the instance $R$.

Then, starting from such solution $\Pi''$, we can compute in polynomial time
a canonical solution $\Pi'$ such that
$c(\Pi') \leq c(\Pi'')$. The main idea to prove this result is that for any
two sets $R_i$, $R_j$, such that both $R_i$ and $R_j$ are associated
with a \tyb solution in $\Pi''$ and $E_{i,j}$ is part of the instance $R$,
then we can improve the solution by imposing
a \tya solution for  $R_i$.
\qed

\vspace*{0.15cm}

A consequence of Lemmata~\ref{lem:opt-R_i} and~\ref{lem:compute-canonical}
and some properties of the instance $R$, is Lemma~\ref{lem:apx-hard1}.

\begin{Lemma}
\label{lem:apx-hard1}
Let $\Pi$ be a solution of $3$-AP($3$) over instance $R$
of cost $6|V|+3|C|+11|E|+9$, then
we can compute in polynomial time a solution of MVCC over instance $G$ of size $C$.
\end{Lemma}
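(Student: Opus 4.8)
The plan is to start from an arbitrary solution $\Pi$ of $3$-AP($3$) over $R$, invoke Lemma~\ref{lem:compute-canonical} to obtain in polynomial time a canonical solution $\Pi'$ with $c(\Pi')\le c(\Pi)$, and then read off a vertex cover of $G$ directly from the structure of $\Pi'$. Concretely, I would let $C$ be the set of vertices $v_i$ such that $R_i$ is associated with a \tya solution in $\Pi'$. First I would argue that $C$ is a vertex cover: by property (iii) of a canonical solution, if both $R_i$ and $R_j$ are associated with a \tyb solution then $(v_i,v_j)\notin E$; contrapositively, every edge $(v_i,v_j)\in E$ has at least one endpoint whose set is associated with a \tya solution, i.e.\ at least one endpoint in $C$.

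Next I would do the cost accounting to show $|C|$ matches the bound. Using Lemma~\ref{lem:opt-R_i}, every set $R_i$ contributes at least $9$ to the cost, and both \tya and \tyb solutions achieve exactly $9$ suppressions in the rows of $R_i$ (as noted in the sketch of Lemma~\ref{lem:compute-canonical}); the cluster $X$ contributes exactly $9$ by property (i) and the construction of $X$ (each $x_i$ is at Hamming distance $3$ from every other row, so all three positions of all three rows in $X$ are suppressed). The remaining contribution comes from the edge sets $E_{i,j}$: by property (iv), each edge set costs either $10$ (when it is absorbed into a \tyb solution of one of its endpoints) or $11$ (otherwise). An edge set $E_{i,j}$ can be part of a \tyb solution only for an endpoint not in $C$; so the number of edge sets costing $10$ is at most $|E|-(\text{number of edges with both endpoints in } C)$, and more precisely, for a canonical solution, $E_{i,j}$ costs $10$ exactly when at least one of $v_i,v_j$ is not in $C$. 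Thus the total edge-set cost is $11|E| - (\#\text{edges not fully inside }C)$, but I will instead express the clean identity used in the statement: writing the total as $9|V| + 9 + 11|E| - (\text{savings})$ and matching it against $6|V| + 3|C| + 11|E| + 9$ forces the savings to equal $3|V| - 3|C| = 3(|V|-|C|)$, which is consistent since each vertex $v_i\notin C$ contributes exactly one \tyb set $R_i$ that absorbs three of its incident edge sets, saving $1$ per absorbed edge set — here I would rely on the cubic structure of $G$ (each vertex has degree exactly $3$) to get exactly $3(|V|-|C|)$ saved units, so that $c(\Pi')=6|V|+3|C|+11|E|+9$.

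Finally, since $c(\Pi')\le c(\Pi)=6|V|+3|C'|+11|E|+9$ for the hypothesized value $C'$ in the statement (using $C'$ for the parameter in the hypothesis to avoid clashing with my constructed set), the cost identity for canonical solutions yields $6|V|+3|C|+11|E|+9 \le 6|V|+3|C'|+11|E|+9$, hence $|C|\le |C'|$, so the vertex cover I extracted has size at most the claimed bound. The whole procedure — applying Lemma~\ref{lem:compute-canonical}, scanning $\Pi'$ to identify the \tya sets, and outputting $C$ — runs in polynomial time. The main obstacle I anticipate is the exact bookkeeping of which edge sets cost $10$ versus $11$ and confirming that in a canonical solution each degree-$3$ vertex outside $C$ saves precisely $3$ units and no edge set is double-counted as absorbed by two endpoints (ruled out by property (iii)); getting this clean identity is where the cubic hypothesis and properties (iii)--(iv) do the real work.
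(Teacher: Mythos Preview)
Your proposal is correct and follows essentially the same approach as the paper's proof: pass to a canonical solution via Lemma~\ref{lem:compute-canonical}, take $C$ to be the vertices whose $R_i$ is associated with a \tya solution, use property~(iii) to see that $C$ is a vertex cover, and then do the cost accounting (each $R_i$ costs $9$, $X$ costs $9$, edge sets cost $10$ or $11$, and each \tyb vertex saves exactly $3$ by cubicity) to obtain the identity $c(\Pi')=6|V|+3|C|+11|E|+9$. Your explicit final step comparing $c(\Pi')\le c(\Pi)$ to extract $|C|\le |C'|$ is a useful clarification that the paper leaves implicit.
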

\begin{proof}
Let us consider a canonical solution of $3$-AP($3$). 
First ,notice that the three rows $w_1$, $w_2$, $w_3$ provide together a cost of $9$.
Since two sets of rows are associated with a \tyb solution only if there does not
exist a set $E_{i,j}$, on the contrary, given an edge set $E_{i,j}$
at least one of the set $R_i$ and $R_j$ is associated with a \tya solution.
Consequently, the set of rows associated with a \tya solution corresponds
to a vertex cover of the graph $G$.

Now consider the cost of a canonical solution. For each set $R_i$ of rows associated
with a \tyb solution, we can show that each of the three edge sets
$E_{i,j}$, $E_{i,h}$, $E_{i,l}$ has a cost of $10$. Notice that, given 
an edge set $E_{i,j}$, 
if both sets $R_i$, $R_j$ are associated with \tya solutions, then we can show
that the edge set $E_{i,j}$ has a cost of $11$.
Accounting this decreasing of the cost of the edge sets to the set $R_i$ of
rows with a \tyb solution, is equivalent to assign to a \tyb solution a cost
equal to  $6$, while a \tya solution has a cost equal to $9$.
\end{proof}

Similarly to Lemma~\ref{lem:apx-hard1}, we can prove that
starting from a solution $C$ of MVCC over instance $G$, we can compute
in polynomial time a solution $\Pi$ of $3$-AP($3$) over instance $R$
of cost $6|V|+3|C|+11|E|+9$.
Therefore
$3$-AP($3$) is APX-hard.

\small

\newpage

\section*{Appendix}

\subsection*{Proofs of Section~\ref{sec:par-hard}}

\subsubsection*{Proof of Lemma~\ref{lem:W[1]-hard:prop-base}}
\begin{Lemma}
\label{appendix-lem:W[1]-hard:prop-base}
Let $R$ be the instance of $\langle e \rangle$-AP associated with  $G$ and
consider two rows $r,r_x(i,j)$ of $R$, such that
$r \in R_0$ and $r_x(i,j) \in R(i,j)$. Then, $r[t] \neq r_x(i,j)[t]$,
for each $1 \leq t \leq 2h$.
\end{Lemma}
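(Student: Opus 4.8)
The plan is to simply unwind the two relevant parts of the construction of $R$ and compare the rows $r$ and $r_x(i,j)$ coordinate by coordinate on the first $2h$ columns; no clustering argument is needed, since the claim is purely about the input instance.

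First I would recall from the construction that every row of the group $R_0$ is by definition the all-zero vector $0^{2h+n_G}$, so that $r[t]=0$ for every $t$ with $1\le t\le 2h$. Then I would recall that, for the edge $e(i,j)=(v_i,v_j)\in E$, each row $r_x(i,j)$ of the group $R(i,j)$ is defined so that $r_x(i,j)[l]=\sigma_{i,j}$ for all $l$ with $1\le l\le 2h$. Finally I would invoke the definition of the alphabet $\Sigma_R=\{0,1\}\cup\{\sigma_{i,j}:(v_i,v_j)\in E\}$, in which each $\sigma_{i,j}$ is a fresh symbol, distinct in particular from $0$. Putting these three facts together gives $r[t]=0\neq\sigma_{i,j}=r_x(i,j)[t]$ for every $1\le t\le 2h$, which is exactly the statement.

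There is no genuine obstacle here: the lemma is an immediate consequence of the construction. The only point that deserves to be made explicit is that the symbols $\sigma_{i,j}$ are chosen outside $\{0,1\}$, which is precisely what guarantees that $r$ and $r_x(i,j)$ disagree on \emph{all} of the first $2h$ positions rather than merely on some of them; this ``full disagreement'' is what makes the lemma useful later, when it forces the suppression of the whole first block of $2h$ columns in any cluster that mixes a row of $R_0$ with a row of some $R(i,j)$.
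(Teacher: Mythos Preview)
Your proof is correct and follows exactly the same approach as the paper's own proof, which simply observes that $r[t]=0$ while $r_x(i,j)[t]=\sigma_{i,j}$ for $1\le t\le 2h$. You have merely added the (helpful) explicit remark that $\sigma_{i,j}\notin\{0,1\}$ and a sentence of context about how the lemma is used later.
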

\begin{proof}
By construction, $r[t]=0$ for all $t$ with  $1 \leq t \leq 2h$,  while
$r_x(i,j)[t]=\sigma_{i,j}$.
\qed\end{proof}

\subsubsection*{Proof of Lemma~\ref{lem:W[1]-hard-1}}
\begin{Lemma}
\label{appendix-lem:W[1]-hard-1}
Let $G=(V,E)$ be a graph, let $V'$ be a  $h$-clique of $G$ and let
$R$ be the instance of $\langle e \rangle$-AP associated with  $G$.
Then we can compute in polynomial time a solution $\Pi$ of $\langle e \rangle$-AP over instance $R$
with cost at most $6h^3$.
\end{Lemma}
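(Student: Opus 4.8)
\emph{Proof plan.} The plan is to exhibit an explicit clustering $\Pi$ of $R$ whose only costly cluster is obtained by merging $R_0$ with one carefully chosen row per clique edge, and then to check that this single cluster already accounts for exactly $6h^3$ suppressions while every other cluster costs nothing.

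First I would use the $h$-clique $V'$ to select, for each of the ${h \choose 2}$ edges $(v_i,v_j)$ with $v_i,v_j\in V'$, one row $r_1(i,j)$ of the group $R(i,j)$, and form the cluster $R_0' = R_0 \cup \{\, r_1(i,j) : v_i,v_j\in V' \,\}$. Since $|R_0| = k - {h \choose 2}$ and exactly ${h \choose 2}$ rows are added, we get $|R_0'| = k$, so $R_0'$ is feasible. For every edge $(v_i,v_j)$ of $G$, let $R'(i,j)$ be the set of rows of $R(i,j)$ not moved into $R_0'$: all $k+1$ of them when $(v_i,v_j)$ is not a clique edge, and the remaining $k$ of them otherwise. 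Each $R'(i,j)$ is a cluster of identical rows of size at least $k$, hence feasible and of cost $0$. Therefore $c(\Pi) = c(R_0')$, and $\Pi$ is clearly computable in polynomial time.

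Second I would bound $c(R_0')$ by counting the columns that are non-constant within $R_0'$. By Lemma~\ref{lem:W[1]-hard:prop-base} the first $2h$ columns are non-constant (the rows of $R_0$ hold $0$ there while the edge rows hold $\sigma_{i,j}$), contributing $2h$ suppressed columns. Among the last $n_G$ columns, column $2h+l$ is non-constant precisely when some edge row in $R_0'$ carries a $1$ in that position, i.e. when $v_l$ is incident to a clique edge; since $V'$ is an $h$-clique these are exactly the $h$ indices $l$ with $v_l\in V'$. Every other column $2h+l$ with $v_l\notin V'$ is constantly $0$ inside $R_0'$, because each clique edge has both endpoints in $V'$. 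Hence exactly $2h+h=3h$ columns are suppressed in $R_0'$, so $c(R_0') = |R_0'|\cdot 3h = 3hk = 3h\cdot 2h^2 = 6h^3$, giving $c(\Pi)=6h^3$.

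There is essentially no serious obstacle here: the construction is direct, and the only point that needs care is the column bookkeeping for $R_0'$, where the clique property of $V'$ is exactly what guarantees that the columns indexed by vertices outside $V'$ stay constant and therefore need not be suppressed.
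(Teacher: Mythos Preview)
Your proof is correct and follows essentially the same approach as the paper: you build the identical clustering $\Pi$ by moving one row per clique edge from its group $R(i,j)$ into $R_0'$, leaving the remaining groups as zero-cost clusters, and then count that exactly $3h$ columns of $R_0'$ are non-constant, yielding $c(\Pi)=3hk=6h^3$. If anything, your column bookkeeping is stated a bit more explicitly than in the paper, but the construction and the counting argument are the same.
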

\begin{proof}
Initially let $\Pi'$ be a solution consisting of clusters $R_0$, $R(i,j)$, for each $R(i,j) \in R$.
For each $R(i,j$), let $r_1(i,j)$ be the first row of $R(i,j)$.
Compute a new solution
$\Pi$ consisting of clusters $R_0'$, $R'(i,j)$, for each $R(i,j) \in R$, where:
\begin{itemize}
\item $R'(i,j) = R(i,j) \setminus \{ r_1(i,j) \}$, for each $v_i, v_j \in V'$;
\item $R'(i,j) = R(i,j)$, for $v_i\notin V'$ or  $v_j \notin V'$;
\item $R_0'= R_0 \bigcup_{R(i,j) \in R} \left( R'(i,j) \setminus R(i,j) \right)$
\end{itemize}
Notice that, since $V'$ is a $h$-clique, $|R_0'|=k$.
Moreover, by construction, $|R(i,j)|\ge |R'(i,j)|\ge |R(i,j)|-1$, therefore
$\Pi$ is a feasible solution for $R$.
Notice also that no entries is suppressed in the rows of each set $R'(i,j)$, therefore to
determine the cost of $\Pi'$ it
suffices to determine the number of entries deleted in $R_0'$, and we will
show that such number is exactly  $6h^3$.

Indeed, by construction, for each  column $t$ of the  first $2h$ columns,
and for each row $r \in R_0$ and $r_x(i,j) \in R(i,j)$, $r[t] \neq r_x(i,j)[t]$,
hence all the entries
of the first $2h$ columns of the rows in $R_0'$ must be deleted, resulting in $2hk$ suppressions.
Now let us consider the columns with index $2h+1  \leq t \leq 2h+n$ and $v_t \in
V'$. In such
positions, all rows of $R_0$ are equal to $0$, while all rows in the sets $R(y,t)$, $R(t,y)$ are equal to $1$.
Consider the $\frac{h(h-1)}{2}$ of $R_0' \setminus R_0$. As the corresponding
edges are incident on a set of $h$ vertex, by construction
there exists a set $H$ of exactly $h$ columns,
with $H= \{t:2h+1 \leq t \leq 2h+n \}$,
where at least one of the rows in $R'_0 \setminus R_0$
is equal to $1$, while the rows in $R_0$ are all equal to $0$.
Since in any other column all  rows in $R_0'$ have value equal to $0$,
hence there are additional $hk$ suppressions for the columns
with index $2h+1 \leq t \leq 2h+n$.
Overall, the number of suppressions is $3hk$ which, by the choice of $k$ is
equal to $6h^3$.
\qed\end{proof}


\subsection*{Proofs of Section~\ref{sec:fpt-algo}}

\subsubsection*{Proof of Lemma~\ref{lem:feas:solution}}

\begin{Lemma}
\label{appendix-lem:feas:solution}
Let $M$ be a maximum weight matching of \grafoSR, then the solution $\PiS(M)$
computed by Algorithm~\ref{alg:matching} is
feasible.
\end{Lemma}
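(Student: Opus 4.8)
The plan is to deduce feasibility of $\PiS(M)$ from the already-established threshold characterization of feasible matchings. First I would observe that a maximum weight matching $M$ of \grafoSR{} must cover every vertex of $T$. Indeed, suppose some $T(x,j)$ were left uncovered by $M$; since each vertex $\Rsafe^l(r,i)$ has a unique incident edge (to $T(r,i)$) we may assume those edges lie in $M$, and in any case the edges incident on the $T$-vertices all have weight $w'(\cdot) \geq W+1$, which dominates the total weight $W$ available on all other edges. So by Lemma~\ref{lem:eq-matching} the subset $M_1$ of $M$ incident on $T$ has weight exactly $W' = k\sum_{r\in S'} w'(r)$ iff $M$ is feasible; if $M$ were not feasible we could augment it to cover the missing $T$-vertex and strictly increase the weight (adding an edge of weight $\geq W+1$ while possibly removing edges of total weight $\leq W$), contradicting maximality of $M$. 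Hence $M$ is feasible.

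Next I would argue that $M$ is in fact \emph{complete}, i.e., it also covers all of $R^l_{dist}\cup R^l_{safe}$ (and $R'^l_{safe}$). Each vertex $\Rdist^l(r)$ is adjacent to $\Rdist^r(r)$, a vertex of degree one; each $\Rsafe^l(g,i)$ is adjacent to $\Rsafe^r(g,i)$, again of degree one; and each $\Rsafe'^l(g,i)$ is adjacent to $T(r',i)$ where $r'\in\Ssafe$ equals $r(g)$. Since every such "private" edge carries strictly positive weight, a maximum weight matching cannot leave any of these left-vertices unmatched: if $\Rdist^l(r)$ (resp.\ $\Rsafe^l(g,i)$) were uncovered, adding the edge to its private partner $\Rdist^r(r)$ (resp.\ $\Rsafe^r(g,i)$) — which is necessarily also uncovered, having only that one incident edge — strictly increases the weight. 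For $\Rsafe'^l(g,i)$, its only neighbour $T(r',i)$ is covered by feasibility, but by Lemma~\ref{lem:eq-matching-weight-T} the edge covering $T(r',i)$ has the same weight $w'(r')$ regardless of its left endpoint, so we may freely reroute it to come from $\Rsafe'^l(g,i)$ without changing $w_h(M)$; thus we may assume $M$ covers all of $R'^l_{safe}$ as well. Therefore $M$ is complete.

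Finally, feasibility of $\PiS(M)$ follows from the construction in Algorithm~\ref{alg:matching}: since $M$ covers every vertex $T(x,j)$, $1\le j\le k$, for each resolution vector $x\in S'$ the algorithm assigns $k$ distinct rows of $R$ (the rows corresponding to the $k$ left-endpoints matched to $T(x,1),\dots,T(x,k)$) to the cluster whose resolution vector is $x$; by the edge definitions (points 1, 3, 4) every such row is compatible with $x$. Rows matched via the "exceeding" edges of points 2 and 5 are added to clusters that already contain at least $k$ rows. Hence every cluster in $\PiS(M)$ has size at least $k$, so $\PiS(M)$ is a feasible clustering. The main obstacle is the completeness argument — in particular handling the $\Rsafe'^l$ vertices, where the reduction to "private degree-one edges" fails and one must instead invoke the weight-invariance of Lemma~\ref{lem:eq-matching-weight-T} to reroute matched edges; the rest is a direct combination of the preceding lemmata with the algorithm's definition.
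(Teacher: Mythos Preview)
Your proposal is correct and follows essentially the same three-step decomposition as the paper: (i) a maximum weight matching must be feasible (via the threshold of Lemma~\ref{lem:eq-matching}), (ii) a maximum weight feasible matching must be complete (via the degree-one ``private'' partners), and (iii) a complete matching yields a feasible clustering through Algorithm~\ref{alg:matching}. One simplification you missed: the vertices in $\Rsafe'^l$ need no rerouting argument at all. By the edge definitions, each $\Rsafe'^l(g,i)$ has a \emph{unique} incident edge, namely the type-3 edge to $T(r',i)$ with $r'\in\Ssafe$; and conversely each such $T(r',i)$ has only that one incident edge (types 1 and 4 go only to vertices $T(x,\cdot)$ with $x\in\Scost$). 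So these pairs form isolated $K_2$ components of \grafoSR, and any maximum matching contains all of them automatically --- exactly as the paper notes in its footnote. What you flagged as ``the main obstacle'' is therefore the easiest case.
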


Lemma~\ref{lem:feas:solution} is a consequence of Lemmata~\ref{lem:feas-matching-feas-sol}  and~\ref{lem:match:opt:feas}.

\begin{Lemma}
\label{lem:match:opt:feas} Let $M$ be a maximum weight matching of
\grafoSR, then $M$ is a feasible matching.
\end{Lemma}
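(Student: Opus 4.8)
The plan is to argue by contradiction, exploiting the fact — already used in the proofs of Lemmata~\ref{lem:eq-matching-weight-T} and~\ref{lem:eq-matching} — that the weight of any edge incident to $T$, which is of the form $w'(\cdot)=W+m-del(\cdot)+1$, strictly exceeds the total weight $W$ of all edges of \grafoSR not incident to $T$. So suppose $M$ is a maximum weight matching of \grafoSR that is not feasible, i.e. some vertex $T(x,j)$ is not covered by $M$.

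A preliminary observation, which also shows the claim is non-vacuous on the relevant instances: each vertex $\Rsafe'^l(g,i)$ is incident to the single edge $(\Rsafe'^l(g,i),T(r',i))$, where $r'\in\Ssafe$ is the resolution vector identical to $r(g)$; hence these forced edges can be added to any matching without creating a conflict and strictly increase its weight, so a maximum weight $M$ already contains all of them, and every $T(x,j)$ with $x\in\Ssafe$ is covered. Consequently the uncovered vertex has $x\in\Scost$. Let $M_0$ be a feasible matching of \grafoSR; we are only concerned with candidate sets $S'$ for which such a matching exists, which is exactly the situation in which the lemma is invoked by Algorithm~\ref{alg:solving-AP} (if none exists the algorithm simply discards $S'$). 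Since $M_0$ covers all of $T$, Lemma~\ref{lem:eq-matching} (together with Lemma~\ref{lem:eq-matching-weight-T}) gives that the edges of $M_0$ incident to $T$ have total weight $W'=k\sum_{r\in S'}w'(r)$, so $w_h(M_0)\ge W'$.

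On the other hand, write $M=M_1\cup M_2$, where $M_1$ is the set of edges of $M$ incident to $T$ and $M_2=M\setminus M_1$. By Lemma~\ref{lem:eq-matching-weight-T}, $w_h(M_1)=\sum_{T(t,i)\ \text{covered by}\ M}w'(t)\le W'-w'(x)$, since the copy $T(x,j)$ is uncovered. The edges of $M_2$ are only of types~$2$ and~$5$, with weights $w(r)$ for $r\in\Rdist$ and $w(r(g))$ for excess rows of safe groups $g$; as a matching uses each left vertex at most once, as $exc(g)\le s(g)$, and as all rows of a group have equal weight, we get $w_h(M_2)\le\sum_{r\in\Rdist}w(r)+\sum_{g}exc(g)\,w(r(g))\le\sum_{r\in R}w(r)=W$. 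Since $w'(x)=W+m-del(x)+1\ge W+1$, it follows that $w_h(M)=w_h(M_1)+w_h(M_2)\le W'-w'(x)+W< W'\le w_h(M_0)$, contradicting the maximality of $M$. Hence no vertex of $T$ is uncovered, i.e. $M$ is feasible.

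I expect the only real delicacy to be the accounting establishing $w_h(M_2)\le W$ (which hinges on $exc(g)\le s(g)$ and on rows within a group sharing the same weight), together with making explicit that the statement is meaningful only when \grafoSR admits a feasible matching at all; everything else is a direct application of Lemmata~\ref{lem:eq-matching-weight-T} and~\ref{lem:eq-matching} and the choice of the weights $w'(\cdot)$.
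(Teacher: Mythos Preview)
Your proof is correct and is essentially the argument the paper sets up via Lemma~\ref{lem:eq-matching}: if some $T(x,j)$ is uncovered then the $T$-part of $M$ weighs at most $W'-w'(x)$, the remaining edges weigh at most $W<w'(x)$, so $w_h(M)<W'\le w_h(M_0)$ for any feasible $M_0$, contradicting maximality. Your explicit caveat that the statement only makes sense when \grafoSR admits a feasible matching at all is appropriate and is tacitly assumed by the paper as well.

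It is worth noting that the proof actually printed in the appendix under this lemma is not a proof of feasibility at all: it begins ``as $M$ is feasible'' and then argues that uncovered vertices in $R^l_{dist}\cup R^l_{safe}$ can be matched to their partners in $R^r_{dist}\cup R^r_{safe}$, which is the argument for \emph{completeness} (Lemma~\ref{lem:prop:match:part:feas}), not feasibility. So your write-up in fact supplies the missing argument; the key inequality you need is already contained in the proof of Lemma~\ref{lem:eq-matching}, and you invoke it in the same way.
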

\begin{proof}
First notice that, as $M$ is feasible, each vertex of $T$ is covered
and each vertex $R_{safe}^l(r,j)$, with $1 \leq j \leq k$, is covered by $M$.
Assume that $M$ is not complete and that a vertex $R_{dist}^l(r)$ of $R^l_{dist}$ (resp.
$R_{safe}^l(r,k+j)$, with $1 \leq j \leq exc(g)$, of $R^l_{safe}$) is not matched. Then,
by construction, also the vertex $R_{dist}^r(r)$
of $R^r_{dist}$ (resp. $R_{safe}^r(r,j)$ of $R^r_{safe}$) is not covered by $M$,
as $R_{dist}^l(r)$ (resp. $R_{safe}^l(r,k+j)$) is the only vertex adjacent to
$R_{dist}^r(r)$ (resp. $R_{safe}^r(r,j)$) in \grafoSR.
Hence we can compute the matching $M'$ by adding all the edges of $M$ to $M'$ and
by adding  edges
$(R_{dist}^l(r) , R_{dist}^r(r))$, (resp. $(R_{safe}^l(r,k+j),R_{safe}^r(r,j))$)
for each vertex $R_{dist}^l(r)$ (resp. $R_{safe}^l(r,k+j)$)
not covered by $M$.
\qed\end{proof}

\begin{Lemma}
\label{lem:prop:match:part:feas}
Let $M$ be a feasible matching of \grafoSR,
if $M$ is not complete, then we can compute in polynomial time a complete matching $M'$,
such that $w_h(M') > w_h(M)$.
\end{Lemma}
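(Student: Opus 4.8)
The plan is to build $M'$ from $M$ by repeatedly appending ``private'' edges that are forced to be free, exploiting the fact that every vertex of $\Rdist^r$ and of $\Rsafe^r$ has degree exactly one in \grafoSR. Indeed, by the edge definitions of type~2 and type~5, the only edge incident to $\Rdist^r(r)$ is $(\Rdist^l(r),\Rdist^r(r))$, and the only edge incident to $\Rsafe^r(g,i)$ is $(\Rsafe^l(g,i),\Rsafe^r(g,i))$. I would also recall that a feasible matching covers all of $T$; since each vertex $\Rsafe'^l(g,i)$ has, as its unique neighbour, the vertex $T(r',i)$ with $r'\in\Ssafe$ the resolution vector equal to $r(g)$ (edge type~3), a feasible $M$ automatically covers all of $\Rsafe'^l$ as well. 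Hence, if $M$ is feasible but not complete, there is an uncovered vertex $v\in \Rdist^l\cup\Rsafe^l$.

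First I would pick such an uncovered $v$. If $v=\Rdist^l(r)$, then $\Rdist^r(r)$ is uncovered too, since its only possible matching edge is incident to $v$; therefore $M\cup\{(\Rdist^l(r),\Rdist^r(r))\}$ is again a matching, it is still feasible because no vertex of $T$ has lost its cover, and its total weight has grown by $w_h((\Rdist^l(r),\Rdist^r(r)))=w(r)>0$, as $w_h$ is by definition a positive weight function. Symmetrically, if $v=\Rsafe^l(g,i)$ with $1\le i\le exc(g)$, then $\Rsafe^r(g,i)$ is uncovered and we add $(\Rsafe^l(g,i),\Rsafe^r(g,i))$, of positive weight $w(r(g))$. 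In either case the number of uncovered vertices of $\Rdist^l\cup\Rsafe^l$ strictly decreases while the total weight strictly increases.

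Iterating this step at most $|\Rdist^l|+|\Rsafe^l|$ times (which is $O(n)$) yields a matching $M'$ that is feasible and covers all of $\Rdist^l\cup\Rsafe^l$, i.e.\ a complete matching, with $w_h(M')>w_h(M)$; the whole procedure is plainly polynomial. I do not foresee any genuine obstacle here: the degree-one structure of $\Rdist^r$ and $\Rsafe^r$ turns ``augmentation'' into merely attaching dedicated free edges, so no alternating-path search is needed. The only points requiring a line of justification are that feasibility is preserved (immediate, since we only add edges and never touch a vertex of $T$) and that every appended edge has strictly positive weight, which holds because $w_h$ is positive-valued.
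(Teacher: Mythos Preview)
Your proof is correct and follows essentially the same approach as the paper: exploit that each vertex of $\Rdist^r$ and $\Rsafe^r$ has degree one, so whenever a left vertex in $\Rdist^l\cup\Rsafe^l$ is uncovered its ``private'' partner is free and the corresponding edge can simply be added. The paper's argument is the same (it also notes that feasibility forces every $\Rsafe'^l(g,i)$ to be covered via its unique type~3 edge), only stated more tersely and without spelling out the iteration or the positivity of the appended weights.
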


As a consequence of Lemma~\ref{lem:prop:match:part:feas}, we assume in what
follows that any matching $M$ is complete. Furthermore,
we can prove the following result.

\begin{Lemma}
\label{lem:feas-matching-feas-sol}
Let $M$ be a complete matching of \grafoSR. Then
Algorithm~\ref{alg:matching}
computes in polynomial time a feasible clustering $\PiS(M)$.
\end{Lemma}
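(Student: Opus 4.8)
\emph{Proof idea.} The plan is to unwind Algorithm~\ref{alg:matching} edge by edge and verify two things: that its output is a genuine partition of $R$, and that each block of that partition has at least $k$ rows. I would first record the structural features of \grafoSR that drive the argument. The left vertices $\Rdist^l\cup\Rsafe'^l\cup\Rsafe^l$ are in bijection with the rows of $R$: each group $g$ whose rows lie in \Rsafe{} contributes $k$ vertices in $\Rsafe'^l$ (one for each of its first $k$ rows) and $exc(g)=s(g)-k$ vertices in $\Rsafe^l$ (one per exceeding row), that is $s(g)$ altogether, while each row of \Rdist{} yields exactly one vertex of $\Rdist^l$. Every right vertex $\Rdist^r(r)$ and $\Rsafe^r(g,i)$ has degree one in \grafoSR, its only neighbour being $\Rdist^l(r)$, respectively $\Rsafe^l(g,i)$ (construction points~2 and~5). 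Every vertex $\Rsafe'^l(g,i)$ also has degree one, its only neighbour being $T(r(g),i)$ (point~3); moreover, since $r(g)$ is suppression-free it lies in \Ssafe, so $T(r(g),i)$ in turn is adjacent only to $\Rsafe'^l(g,i)$, because type-1 and type-4 edges reach only $T$-vertices indexed by \Scost. Finally, every edge of \grafoSR not incident to $T$ has both endpoints representing the same row of $R$, and we may assume every row of $R$ is compatible with at least one vector of $S'$ (otherwise $S'$ indexes no feasible solution and is skipped), so that the vector $\arg\max_{r_x\in Comp(r,S')}\{m-del(r_x)\}\in S'$ used on type-2 and type-5 edges is well defined.

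With these facts, the first step is to argue that $\PiS(M)$ is a clustering of $R$ whose blocks are indexed by resolution vectors of $S'$. By inspection of Algorithm~\ref{alg:matching}, processing an edge of $M$ only ever assigns the row represented by the edge's left endpoint to the cluster of some $x\in S'$, and no other assignment is made; hence the blocks of $\PiS(M)$ are indexed by a subset of $S'$, and each row is assigned exactly once provided each left vertex is covered by exactly one edge of $M$. Since $M$ is complete it covers $\Rdist^l\cup\Rsafe^l$; it also covers every $\Rsafe'^l(g,i)$, because the unique neighbour $T(r(g),i)$ of $\Rsafe'^l(g,i)$ is covered (a complete matching is feasible) and has no neighbour other than $\Rsafe'^l(g,i)$, forcing the edge between them into $M$. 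Therefore $\PiS(M)$ is a partition of $R$.

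The second step is feasibility. Fix $x\in S'$; since $M$ is feasible, all $k$ vertices $T(x,1),\dots,T(x,k)$ are covered, by edges of type~1, 3 or~4. These $k$ edges are pairwise distinct ($M$ is a matching), so their left endpoints are pairwise distinct, and by the bijection above they represent $k$ distinct rows of $R$, each of which Algorithm~\ref{alg:matching} places into the cluster of $x$. Hence that cluster has at least $k$ rows and $\PiS(M)$ is feasible. The polynomial running time is immediate: Algorithm~\ref{alg:matching} performs one $O(1)$ assignment per edge of $M$, and $|M|$ is bounded by the number of vertices of \grafoSR. I expect the only delicate point to be the covering step, specifically recognising that in a complete matching the forced edges $(\Rsafe'^l(g,i),T(r(g),i))$ indeed account for the first $k$ rows of every group of \Rsafe{} uniquely; everything else is routine case-checking over the five edge types.
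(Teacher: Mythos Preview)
Your proposal is correct and follows essentially the same approach as the paper: use completeness of $M$ to argue that every left vertex is covered (so every row is assigned exactly once) and that all $k$ vertices $T(x,1),\dots,T(x,k)$ are covered by distinct edges (so each cluster has at least $k$ rows). Your treatment is in fact more careful than the paper's, since you explicitly derive the forced inclusion of the type-3 edges $(\Rsafe'^l(g,i),T(r(g),i))$ from the degree-one structure on both endpoints, whereas the paper relies on an earlier informal remark that these edges can always be added to any matching.
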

\begin{proof}
Since $\PiS(M)$  feasible,   all vertices in $T$ are covered by $M$.
Furthermore, we can assume, by Lemma~\ref{lem:prop:match:part:feas},
that each vertex in $R^l_{dist} \cup R^l_{safe}$ is covered by $M$.
Hence each row in $R$ is assigned by Algorithm~\ref{alg:matching}
to a set whose resolution vector is $S'$. Furthermore Algorithm~\ref{alg:matching}
assigns to each set with resolution vector $x \in S'$ at least $k$ rows. Hence the clustering
$\PiS(M)$ computed by Algorithm~\ref{alg:matching} is feasible.
\qed\end{proof}

\subsubsection*{Proof of Lemma~\ref{lem:cost:match:part:feas}}
\begin{Lemma}
\label{appendix-lem:cost:match:part:feas}
Let $M$ be a complete matching of \grafoSR,
then the total weight of $M$, $w_h(M)$, is equal to
%
$k \sum_{r \in S'} ( W + m -del(r)+1) + \sum_{r \in R(L)} (m - del(r)) = (W+1)
k |S'| + m |R^l_{dist} \cup R^l_{safe}|- (k\sum_{r \in S'} del(r) + \sum_{r \in R(L)} del(r))$.

\end{Lemma}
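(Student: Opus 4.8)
The plan is to split the complete matching $M$ according to where each edge's right endpoint lies, evaluate the two parts separately using the weight definitions and the results already proved about feasible matchings, and then finish with elementary rearrangement.

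First I would write $M = M(T) \cup M(L)$, where $M(T)$ collects the edges of $M$ incident to a vertex of $T$ and $M(L) = M \setminus M(T)$ collects the rest; by the construction of \grafoSR{} the edges of $M(L)$ are exactly those of type~2 and type~5, each joining a vertex of $R^l_{dist} \cup R^l_{safe}$ to the unique neighbour it has in $R^r_{dist} \cup R^r_{safe}$. Since $M$ is complete it covers all $k|S'|$ vertices of $T$, so Lemma~\ref{lem:eq-matching-weight-T} (equivalently Lemma~\ref{lem:eq-matching}) gives $w_h(M(T)) = \sum_{T(t,i)\in T} w'(t) = k \sum_{r \in S'} w'(r)$.

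Next I would evaluate $w_h(M(L))$. An edge of type~2 incident to $R^l_{dist}(r)$ has weight $w(r)$, and an edge of type~5 incident to the vertex $R^l_{safe}(g,i)$ that carries the $i$-th exceeding row $r$ of $g$ has weight $w(r(g)) = w(r)$; in both cases this weight equals $m - del(r)$ under the convention (used throughout Section~\ref{sec:fpt-algo}) that, for a row $r \in R(L)$, $del(r)$ denotes the number of suppressions $r$ receives in $\PiS(M)$, i.e.\ $del(r) = m - w(r)$. Since $R(L)$ is precisely the set of rows carried by the left endpoints of $M(L)$, summing gives $w_h(M(L)) = \sum_{r \in R(L)} (m - del(r))$, hence $w_h(M) = k \sum_{r \in S'} w'(r) + \sum_{r \in R(L)} (m - del(r))$, which is the first equality once we substitute $w'(r) = W + m - del(r) + 1$.

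Finally I would derive the closed form. Expanding, $k\sum_{r\in S'}(W + m - del(r) + 1) = (W+1)k|S'| + m\,k|S'| - k\sum_{r\in S'} del(r)$ and $\sum_{r\in R(L)}(m - del(r)) = m|V(L)| - \sum_{r\in R(L)} del(r)$ with $|V(L)| = |R(L)|$. Feasibility of $M$ gives $|V(T)| = |T| = k|S'|$, so $m\,k|S'| = m|V(T)|$; and since $M$ is a matching, each vertex of $R^l_{dist} \cup R^l_{safe} \cup R'^l_{safe}$ is the left endpoint of exactly one edge of $M$, lying in $M(T)$ or in $M(L)$, so $V(T)$ and $V(L)$ partition $R^l_{dist} \cup R^l_{safe} \cup R'^l_{safe}$ and $m|V(T)| + m|V(L)| = m\,|R^l_{dist} \cup R^l_{safe} \cup R'^l_{safe}|$. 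Substituting yields $w_h(M) = (W+1)k|S'| + m|R^l_{dist} \cup R^l_{safe} \cup R'^l_{safe}| - \big(k\sum_{r\in S'} del(r) + \sum_{r\in R(L)} del(r)\big)$, as claimed. The only delicate step is this partition claim, which needs both that $M$ is a matching (no left vertex used twice) and that $M$ is complete (every such left vertex used), together with the already-noted fact that the vertices of $R'^l_{safe}$ have only type~3 edges and hence always land in $V(T)$; everything else is bookkeeping.
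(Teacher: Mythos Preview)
Your proposal is correct and follows essentially the same route as the paper: split $M$ into $M(T)$ and $M(L)$, use Lemma~\ref{lem:eq-matching-weight-T}/\ref{lem:eq-matching} to evaluate $w_h(M(T))=k\sum_{r\in S'}w'(r)$, read off $w_h(M(L))$ from the type~2/type~5 edge weights, substitute $w'(r)=W+m-del(r)+1$, and use $|V(T)|=k|S'|$ together with $V(T)\cup V(L)=R^l_{dist}\cup R^l_{safe}\cup R'^l_{safe}$ to rearrange. Your explicit justification of the partition (completeness plus the fact that $R'^l_{safe}$ vertices have only type~3 edges) is slightly more careful than the paper's, and your final expression includes $R'^l_{safe}$ in the union---this matches the main-body statement of Lemma~\ref{lem:cost:match:part:feas} and is the correct count; the omission of $R'^l_{safe}$ in the appendix restatement you were given is a typo in the paper, not an error on your part.
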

\begin{proof}
The total weight $w_h(M)$ of the matching $M$
is defined as \[
w_h(M) =  \sum_{(u,v) \in M(T)} w_h((u,v)) + \sum_{(u,v) \in M(L)} w_h(u,v).\]
By Lemma~\ref{lem:eq-matching} and by definition of the weight function $w_h$,
it follows that
\[
w_h(M) = k\sum_{r \in S'} w'(r) + \sum_{r \in R(L)} (m - del(r))
\]
and by definition of $w'(r)$ it holds
\[w_h(M) = k \sum_{r \in S'} ( W + m -del(r)+1) + \sum_{r \in R(L)} (m - del(r)).
\]
Hence
\[
w_h(M) =  (W+m+1) k |S'| - k \sum_{r \in S'} del(r) + \sum_{r \in R(L)} m - \sum_{r \in R(L)} del(r).
\]
By definition of feasible matching and by
Lemma~\ref{lem:prop:match:part:feas},
$|V(T)|=|T|$.
Furthermore, since $|T| = k |S'|$,
then $mk|S'| = m |T|= m |V(T)|$. By construction $\sum_{r \in R(L)} m = m|V(L)| $
and $V(T) \cup V(L) = R^l_{dist} \cup R^l_{safe}$.
Hence
\[w_h(M) = (W+1) k |S'| + m |R^l_{dist} \cup R^l_{safe}|- (k\sum_{r \in S'} del(r)
+ \sum_{r \in R(L)} del(r)).
\]
\qed\end{proof}

\subsubsection*{Proof of Lemma~\ref{lem:from-partially-feasible-solution-to-matching}}

\begin{Lemma}
\label{appendix-lem:from-partially-feasible-solution-to-matching}
Let $R$ be an instance of \smAP{}, let $\PiS$ be a feasible
solution of \smAP{} over instance $R$ that suppresses at most $e$ entries,
let \grafoSR be the graph associated with $R$ and $S'$.
Then there exists a complete  matching of \grafoSR with total weight
$w_G(M) \geq (W+1) k |S'| + m |R^l_{dist} \cup R^l_{safe} \cup R'^l_{safe}|- e$.
\end{Lemma}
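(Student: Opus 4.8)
The plan is to produce the required complete matching $M$ directly from the given feasible solution $\PiS$, edge by edge, and then read off $w_h(M)$ from Lemma~\ref{lem:cost:match:part:feas}. Recall that $\PiS$ partitions $R$ into clusters, each cluster $P$ carrying a resolution vector $x(P)\in S'$ with which every row of $P$ is compatible and with $|P|\ge k$, and that $S'$ is exactly the set of resolution vectors used. First I would build the edges incident to $T$. Fix a cluster $P$ with $x=x(P)$. If $x\in\Ssafe$, then $x$ has no $*$, so every row of $P$ equals $x$, i.e.\ $P$ consists of at least $k$ copies of the group $g$ with $r(g)=x$; I add the $k$ forced edges $(R'^l_{safe}(g,i),T(x,i))$, $1\le i\le k$ (type~3). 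If $x\in\Scost$, I pick any $k$ rows of $P$: a chosen row $r\in R^l_{dist}$ contributes $(R^l_{dist}(r),T(x,j))$ (type~1), and a chosen row belonging to a group $g\subseteq R_{safe}$ contributes $(R^l_{safe}(g,i),T(x,j))$ for a fresh index $i$ (type~4, available since $x\in Comp(r(g),S')\cap\Scost$). These edges cover every vertex of $T$, so $M$ is already feasible.

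Next I would route the remaining rows. Each still-unmatched row $r\in R_{dist}$ gets the edge $(R^l_{dist}(r),R^r_{dist}(r))$ (type~2), each still-unmatched vertex $R^l_{safe}(g,i)$ gets $(R^l_{safe}(g,i),R^r_{safe}(g,i))$ (type~5), and every remaining $R'^l_{safe}$ vertex already carries its unique incident (type~3) edge. After this step every vertex of $R^l_{dist}\cup R^l_{safe}\cup R'^l_{safe}$ is covered, so $M$ is a \emph{complete} matching. By construction the set $R(L)$ of rows sent along type~2 and type~5 edges is precisely the set of rows that $\PiS$ leaves outside the $k$ ``core'' rows it designates in each cluster.

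Finally I would compute the weight. By Lemma~\ref{lem:cost:match:part:feas},
\[
w_h(M)=(W+1)\,k\,|S'|+m\,|R^l_{dist}\cup R^l_{safe}\cup R'^l_{safe}|-\Bigl(k\sum_{r\in S'}del(r)+\sum_{r\in R(L)}del(r)\Bigr).
\]
Each cluster $P$ with resolution vector $x$ suppresses $del(x)$ entries in each of its rows, so its $k$ core rows contribute $k\,del(x)$ and each of its remaining rows $r'$ contributes $del(x)\ge del(r')$, the latter being the number of deletions along $r'$'s best compatible vector, as used by the type~2/type~5 weights. Summing over clusters, $k\sum_{r\in S'}del(r)+\sum_{r\in R(L)}del(r)$ is at most the total number of entries suppressed by $\PiS$, which is at most $e$. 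Hence $w_h(M)\ge(W+1)k|S'|+m|R^l_{dist}\cup R^l_{safe}\cup R'^l_{safe}|-e$, as claimed.

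\textbf{Main obstacle.} The delicate point is the bookkeeping in the middle steps: one must verify that selecting exactly $k$ core rows per cluster is consistent with the rigid structure of the safe gadgets, i.e.\ that every group $g\subseteq R_{safe}$ really places at least $k$ copies in the (unique) cluster whose resolution vector is $r(g)$, so that only the $exc(g)$ ``flexible'' vertices $R^l_{safe}(g,\cdot)$ are ever needed for the copies of $g$ that $\PiS$ scatters into cost clusters or keeps as surplus, and that the resulting $R(L)$ matches the quantity appearing in Lemma~\ref{lem:cost:match:part:feas}. Everything else is a routine translation of $\PiS$ into edges followed by an application of that lemma.
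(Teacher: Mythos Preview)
Your proof is correct and follows essentially the same construction as the paper's: build $M$ by first saturating $T$ from the $k$ core rows of each cluster (type~3 edges for safe vectors, type~1/4 edges for cost vectors), then close up with type~2/5 edges, and finally invoke Lemma~\ref{lem:cost:match:part:feas}. Your identification of the bookkeeping issue for $R_{safe}$ groups is apt and resolves exactly as you sketch---since a safe resolution vector $r(g)\in\Ssafe$ forces its cluster to contain at least $k$ copies from $g$, at most $exc(g)$ rows of $g$ ever need the flexible $R^l_{safe}(g,\cdot)$ vertices; your final inequality $k\sum_{x\in S'}del(x)+\sum_{r\in R(L)}del(r)\le e$ is in fact stated more carefully than the paper's own version, which has the direction of the inequality written backwards.
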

\begin{proof}
Since $\PiS$ is feasible, we notice that
each set of $\PiS$ associated with a resolution vector $r \in S'$ must have cardinality at least $k$.
Furthermore, we assume that all the sets of $\PiS$ are all associated with different resolution vectors,
otherwise we can merge all the sets with the same resolution vector without increasing
the cost of $\PiS$.

Let $x$ be a row of $S'$ and denote by $R_x \subseteq R$ the set of rows of $R$
assigned to the set associated with resolution vector $x$.
Starting from $\PiS$ we compute incrementally a matching $M$ by adding edges.
First, for each set of vertices $T(x,i)$, $1 \leq i \leq k$,
let $i^*$ be the minimum number
such that $T(x,i^*)$ does not have any edge incident on it in $M$.
First, assume that $x \in S'_{safe}$;
add the edge $(R_{safe}'^l(g,i), T(x,i))$ to $M$, for each $1 \leq i \leq k$.
Now, assume that $x \in S'_{cost}$.
Scan the rows in $R_x$ and for each row $r$ in $R_x$,
if $r \in R_{dist}$ add the
edge $(R_{dist}^l(r), T(x,i^*))$ to $M$. If $r \in R_{safe}$ and belongs to group $g$
add the edge $(R_{safe}^l(g,i), T(x,i^*))$ to $M$.
If no such $T(x,i^*)$ exists, then no edge is added to $M$.
Notice that by construction, since all sets in $S'$ have at least $k$ rows, then all
vertices of $T$ are covered by $M$, therefore $M$ is feasible.

Finally add to $M$ all edges $(\Rdist^l(x),\Rdist^r(x))$,
$(\Rsafe^l(g,i),\Rsafe^r(g,i))$, $1 \leq i \leq exc(g)$,
for each vertex in $\{ \Rdist^l$, $\Rsafe^l \}$ respectively
that is not already covered in $M$. 
Hence  $M$ is complete.

Given a solution $\PiS$, a resolution vector $x$ of $S'$ and the corresponding matching $M$, consider the order in which the rows of a set $R_x$
are scanned sequentially to construct $M$.
Each of the first $k$ rows assigned to a cluster with resolution vector equal
to $x$, by construction corresponds to an edge of $M$ joining
a vertex of $V(T)$ and a vertex of
$T$. Since $M$ is complete, those rows have a total cost in $\PiS$ of $k\sum_{r \in S'} del(r)$.
The remaining rows of $R$ correspond to vertices of
$V(L)$.
Notice that those rows have a total cost in $\PiS$ not larger than
$\sum_{r \in R(L)} del(r)$.
By Lemma~\ref{lem:cost:match:part:feas}
$w_h(M) = (W+1) k |S'| + m |R^l_{dist} \cup R^l_{safe} \cup R'^l_{safe}|- (k\sum_{r \in S'} del(r)+
\sum_{r \in R(L)} del(r))$. Since $\PiS$ suppresses at most $e$ entries of $R$,
then
$e  \leq k\sum_{r \in S'} del(r)+ \sum_{r \in R(L)} del(r)$, therefore
$w_h(M) \geq (W+1) k |S'| + m |R^l_{dist} \cup R^l_{safe} \cup R'^l_{safe}|- e$.
\qed\end{proof}

\subsection*{Proofs of Section \ref{sec:AP-3-3}}

It is easy to see that, by construction, the following properties hold.

\begin{Proposition}
\label{prop:dist-vertex-rows}
Let $r_a$, $r_b$ be two rows of $R_i$, with $r_a=g_j(v_i)$ and
$r_b=g_l(v_i)$, $j < l$. Let $r_c$ be a row of $R_j$, with
$i \neq j$. Then:

\begin{itemize}

\item $H(r_a, r_c)=H(r_b, r_c)=3$;

\item $H(r_a,r_b) \leq 2$;

\item $H(r_a,r_b) = 1$ iff $r_a=g_h(v_i)$ and $r_b=g_{h+3}(v_i)$, with
$1 \leq h \leq 3$, or $r_a=g_h(v_i)$ and $r_b=g_l(v_i)$, with
$4 \leq j \leq l \leq 6$.

\end{itemize}

\end{Proposition}

\begin{Proposition}
\label{prop:dist-edge-rows}
Let $r_a$, $r_b$ be two rows of $E_{i,j}$, with $r_a \in g_h(v_i,v_j)$ and
$r_b \in g_l(v_i,v_j)$, with $h < l$. Let $r_c$, $r_d$ be two rows of $R_i$ and $R_p$,
with $p \neq i,j$, and let
$r_e$ be a row of $E_{t,z}$,  with $t \neq i$ or $z \neq j$. Then:

\begin{itemize}

\item $H(r_a, r_b) \leq 2$;

\item $H(r_a,r_b) = 1$ iff $r_a=g_h(v_i,v_j)$ and $r_b=g_{h+1}(v_i,v_j)$, with
$1 \leq h \leq 2$;

\item $H(r_a, r_c) = 1$ iff $r_c$ is in the docking group $d_{i,j}(g(v_i))$ of $R_i$
and $r_a$ is in the group $g^i(v_i,v_j)$;

\item $H(r_a, r_c) = 2$ only if $r_c$ is in a group adjacent to $d_{i,j}(g(v_i))$;

\item $H(r_a, r_d) = 3$;

\item $H(r_a, r_e) = 3$.

\end{itemize}

\end{Proposition}

In what follows, by an abuse of notation, we may use a group $g(\cdot{})$ to denote its representative row $r_g(\cdot{})$.
Fig.~\ref{fig:graph-dist-1} shows the groups of $R_i$, $R_j$, $E_{i,j}$. Each group
of identical rows si represented with a vertex, while an edge
joins two vertices iff the corresponding groups are at Hamming distance $1$.

\begin{figure}[htb!]
\begin{center}
\includegraphics[width=12cm]{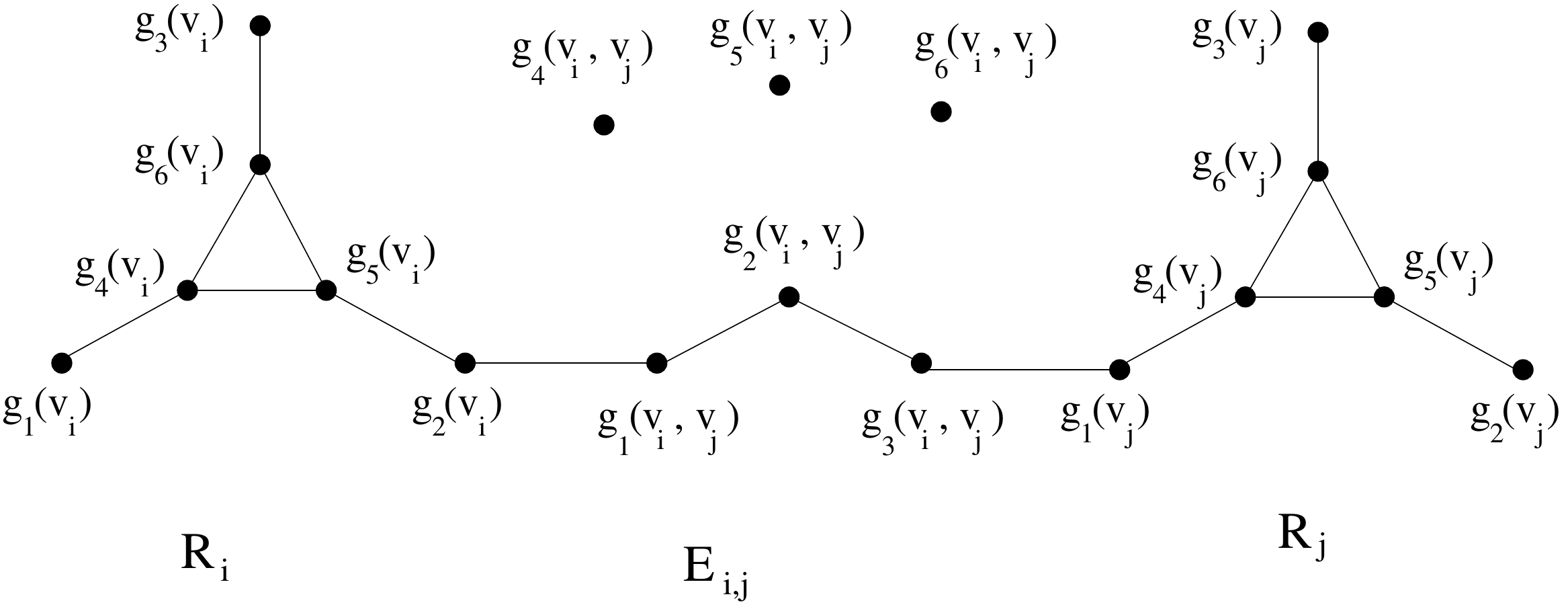}
\label{fig:graph-dist-1}
\caption{Groups at Hamming distance $1$ in $R_i$, $R_j$, $E_{i,j}$:
  vertices represent groups, while an edge joins two vertices
  representing groups at Hamming distance $1$.}
\end{center}
\end{figure}

\subsubsection*{Proof of Lemma \ref{lem:opt-R_i}}
\begin{Lemma}
\label{appendix-lem:opt-R_i}
Let $R_i$ be a set of rows, then an optimal solution of
$3$-AP($3$) over instance $R_i$ has a cost of at least $9$.
\end{Lemma}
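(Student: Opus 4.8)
The plan is to prove the bound by a direct counting argument driven entirely by the sizes of the groups of identical rows in $R_i$. First I would recall the explicit construction: $R_i$ consists of $9$ rows partitioned into the six groups $g_1(v_i),\dots,g_6(v_i)$, where $g_1(v_i),g_2(v_i),g_3(v_i)$ contain two rows each and $g_4(v_i),g_5(v_i),g_6(v_i)$ are singletons. The crucial structural fact to extract is that every maximal set of identical rows of $R_i$ has size at most $2$, which is strictly less than $k=3$.

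Next I would take an arbitrary feasible solution $\Pi=(P_1,\dots,P_t)$ of $3$-AP($3$) over the instance $R_i$. Since the instance contains only the rows of $R_i$, each cluster $P_j$ is a subset of $R_i$, and feasibility forces $|P_j|\ge k=3$. Because no group of identical rows in $R_i$ has more than $2$ rows, each $P_j$ must contain two distinct rows $r_1,r_2$, so $r_1[c]\neq r_2[c]$ for some column $c\in\{1,2,3\}$. By the definition of the cost of a row, this gives $c_{\Pi}(r)\ge 1$ for every $r\in P_j$, hence $c(P_j)=|P_j|\,|\{c:\exists r_1,r_2\in P_j,\ r_1[c]\neq r_2[c]\}|\ge |P_j|$.

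Finally I would sum over all clusters to obtain $c(\Pi)=\sum_{j} c(P_j)\ge \sum_{j}|P_j|=|R_i|=9$. Since this holds for every feasible solution, every optimal solution over $R_i$ has cost at least $9$, as claimed. This argument is really just bookkeeping, so there is no genuine obstacle; the only point that needs to be checked is that no group of $R_i$ is big enough to form a cost-$0$ cluster by itself, which is immediate from the cardinalities ($\le 2$) fixed in the construction. (The bound is tight, since the \tya and \tyb solutions defined later achieve cost exactly $9$ on $R_i$.)
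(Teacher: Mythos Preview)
Your proof is correct and follows essentially the same approach as the paper: since no group of identical rows in $R_i$ has size at least $k=3$, every cluster in any feasible solution contains two distinct rows, so every row incurs cost at least $1$, and summing over the $9$ rows of $R_i$ gives the bound. The paper's proof is more terse but uses exactly this counting argument.
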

\begin{proof}
Let us consider the set of $9$ rows, distributed in $6$ groups, of $R_i$.
As none of the group of $R_i$ consists of at least $3$ rows, it follows that any solution of $3$-AP($3$)
suppresses at least one entry in each row of $R_i$, hence
the lemma follows.
\end{proof}

\subsubsection*{Proof of Lemma \ref{lem:compute-canonical}}

In order to prove Lemma \ref{lem:compute-canonical}, first we have to show
some properties of a canonical solution.

\begin{Lemma}
\label{lem:opt-ty_a}
Let $R_i$ be a set of rows. Then  a solution of $3$-AP($3$) over instance $R$ induces an optimal cost
for the set $R_i$ if it is a \tya solution.
\end{Lemma}
\begin{proof}
By construction, a \tya solution is an optimal solution over instance $R_i$, as each
row has a cost of $1$ in a \tya solution.
\end{proof}

%
Now, in Lemma~\ref{lem:cost-ty_b}, we will prove a property
of a \tyb solution over the sets $R_i$, $E_{i,j}$, $E_{i,h}$, $E_{i,k}$. 

\begin{Lemma}
\label{lem:cost-ty_b}
Let $S$ be a \tyb solution of $3$-AP($3$) over instance
$R_i \cup E_{i,j} \cup E_{i,h} \cup E_{i,k}$, then $S$ suppresses $9$ entries in the rows of $R_i$,
and $1$ entry for each row of $g^i(v_i,v_j)$, $g^i(v_i,v_h)$, $g^i(v_i,v_k)$.
\end{Lemma}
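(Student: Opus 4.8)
\emph{Proof proposal.}
The plan is to prove the statement by direct inspection of the four clusters that constitute a \tyb solution for $R_i$, reading off the within-cluster Hamming distances from the explicit symbols fixed in the construction (equivalently, from Proposition~\ref{prop:dist-vertex-rows} and Proposition~\ref{prop:dist-edge-rows}). Recall that a \tyb solution for $R_i$ prescribes the clusters $C_t = d_{i,t}(g(v_i)) \cup g^i(v_i,v_t)$ for $t\in\{j,h,k\}$ and the cluster $C_0 = g_4(v_i)\cup g_5(v_i)\cup g_6(v_i)$; every row of $R_i$ lies in exactly one of these four clusters, and the only rows outside $R_i$ that they involve are the three single rows $g^i(v_i,v_j)$, $g^i(v_i,v_h)$, $g^i(v_i,v_k)$, so those are the only rows relevant to the claim.

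First I would treat $C_0$. Its three rows are the representatives of $g_4(v_i)$, $g_5(v_i)$, $g_6(v_i)$, that is $\sigma_i\sigma_i\sigma_{i,1}$, $\sigma_i\sigma_i\sigma_{i,2}$, $\sigma_i\sigma_i\sigma_{i,3}$; they coincide on columns $1$ and $2$ and differ only on column $3$. Hence column $3$ is the unique column on which two rows of $C_0$ disagree, so each of its three rows (all belonging to $R_i$) is charged exactly one suppression.

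Next I would treat each $C_t$, $t\in\{j,h,k\}$. The docking group $d_{i,t}(g(v_i))$ is one of $g_1(v_i),g_2(v_i),g_3(v_i)$, a pair of identical rows with representative $\sigma_{i,x}\sigma_i\sigma_{i,x}$ for its private symbol $\sigma_{i,x}$, while $g^i(v_i,v_t)$ is the single row $\sigma_{i,x}\sigma_{i,t}\sigma_{i,x}$; by Proposition~\ref{prop:dist-edge-rows} these differ only on column $2$, i.e. at Hamming distance $1$. Thus column $2$ is the only column on which two rows of $C_t$ disagree, so each of the three rows of $C_t$ is charged exactly one suppression; of these, the two docking-group rows lie in $R_i$ and the third is $g^i(v_i,v_t)$.

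Summing over the clusters, the nine rows of $R_i$ split as $3+2+2+2$ among $C_0,C_j,C_h,C_k$ and each incurs exactly one suppression, for a total of $9$ suppressions inside $R_i$; and each of $g^i(v_i,v_j),g^i(v_i,v_h),g^i(v_i,v_k)$ incurs exactly one suppression, which is the assertion. I do not expect a genuine obstacle here: the lemma is essentially a bookkeeping step, and the only point needing a little care is verifying that in each of the four clusters the rows disagree on a single column, so that no row is charged more than one suppression — and this is immediate from the symbols listed in the construction.
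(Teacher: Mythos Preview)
Your proposal is correct and follows essentially the same approach as the paper's proof: both inspect the four clusters of a \tyb solution directly, observe that each cluster consists of three rows differing in exactly one column (invoking Proposition~\ref{prop:dist-edge-rows} for the docking clusters), and then count $3+2+2+2=9$ suppressions inside $R_i$ plus one in each $g^i(v_i,v_t)$. Your version is in fact slightly more explicit than the paper's, since you name the specific column suppressed in each cluster rather than just citing the Hamming distance.
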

\begin{proof}
By construction each set of a \tyb solution
containing a docking group of $R_i$ consists of three rows, where exactly one position is suppressed
for each row by Prop.~\ref{prop:dist-edge-rows}.
The cluster $g_4(v_i) \cup g_5(v_i) \cup g_6(v_i)$ consists of $3$ rows, where exactly one position
is suppressed for each row.
By a simple counting argument, the rows of $R_i$ have a total cost of $9$.
\end{proof}

Let $\Pi$ be a solution of $3$-AP($3$) over instance $R$ and 
let $E_{i,j}$ be an edge set.
Then we say that $\Pi$ induce an \emph{i-normal} solution for $E_{i,j}$
if it contains the following three sets:
(i) one set clusters  $C_1 = g^i(v_i,v_j) \cup d_{i,j}(v_i)$, 
(ii) one set containing $C_2 \supseteq \bigcup_{t \in \{ 4,5,6 \}} g_t(v_i,v_j)$, such that  
exactly two entries (in columns $1$ and $3$) are suppressed in the rows of $C_2$,
(iii) set $C_3 = g_2(v_i,v_j) \cup g^j(v_i,v_j)$.

\begin{Lemma}
\label{lem:cost-edge-10}
Let $\Pi$ be a solution of $3$-AP($3$) over instance $R$ and 
let $E_{i,j}$ be an edge set, 
then $\Pi$  suppresses at most $10$ entries in the rows of $E_{i,j}$
only if it induces an 
i-normal or j-normal solution for $E_{i,j}$.
\end{Lemma}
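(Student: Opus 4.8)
The plan is to first show that at least $10$ entries must be suppressed among the rows of $E_{i,j}$, and then to observe that any solution meeting this bound is forced into one of exactly two rigid patterns.

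For the lower bound I would partition the seven rows of $E_{i,j}$ into the three rows lying in $g_4(v_i,v_j)$, $g_5(v_i,v_j)$, $g_6(v_i,v_j)$ and the four remaining rows (the row of $g_1(v_i,v_j) = g^i(v_i,v_j)$, the two rows of $g_2(v_i,v_j)$, and the row of $g_3(v_i,v_j) = g^j(v_i,v_j)$). By Proposition~\ref{prop:dist-edge-rows}, every row of $g_4(v_i,v_j) \cup g_5(v_i,v_j) \cup g_6(v_i,v_j)$ is at Hamming distance at least $2$ from every other row of $R$; since $k = 3$, each cluster has size at least $3$, so each of these three rows has cost at least $2$, contributing at least $6$. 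None of the groups $g_1(v_i,v_j)$, $g_2(v_i,v_j)$, $g_3(v_i,v_j)$ can supply a cluster of three identical rows (each of their representative vectors occurs only inside $E_{i,j}$, and $g_2(v_i,v_j)$ has only two rows), so each of the four remaining rows has cost at least $1$, contributing at least $4$. Hence the cost of $E_{i,j}$ is at least $10$, and if it equals $10$ then every row of $g_1(v_i,v_j) \cup g_2(v_i,v_j) \cup g_3(v_i,v_j)$ has cost exactly $1$ and every row of $g_4(v_i,v_j) \cup g_5(v_i,v_j) \cup g_6(v_i,v_j)$ has cost exactly $2$.

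Assume now the cost is at most $10$, hence exactly $10$. I would next analyse the cluster $C$ that contains the single row of $g^i(v_i,v_j) = g_1(v_i,v_j)$. Since that row has cost exactly $1$, only one column is suppressed in $C$, so every other row of $C$ is at Hamming distance $1$ from it; by Proposition~\ref{prop:dist-edge-rows} the only such rows in $R$ are the two rows of $g_2(v_i,v_j)$ (which differ from it in column $3$) and the two rows of the docking group $d_{i,j}(g(v_i))$ of $R_i$ (which differ from it in column $2$), and these two representative vectors differ from one another in both columns $2$ and $3$. Thus $C$ cannot contain rows of both groups, and since $|C| \ge 3$ exactly one of the following holds: either $C = \{g_1(v_i,v_j)\} \cup d_{i,j}(g(v_i))$ (Case A) or $C = \{g_1(v_i,v_j)\} \cup g_2(v_i,v_j)$ (Case B).

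In Case A the rows of $g_2(v_i,v_j)$ still need a cluster of size $\ge 3$ and cost $1$; after excluding $g_1(v_i,v_j)$, Proposition~\ref{prop:dist-edge-rows} leaves only the row of $g_3(v_i,v_j)$ as an admissible partner, forcing the cluster $g_2(v_i,v_j) \cup g_3(v_i,v_j)$ (which suppresses column $1$). Then $g_1(v_i,v_j)$, $g_2(v_i,v_j)$, $g_3(v_i,v_j)$ are all used, so by Proposition~\ref{prop:dist-edge-rows} each row of $g_t(v_i,v_j)$, $t \in \{4,5,6\}$, can only be clustered with the other two, forcing the cluster $\{g_4(v_i,v_j), g_5(v_i,v_j), g_6(v_i,v_j)\}$, whose suppressed columns are precisely $1$ and $3$. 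These are exactly the clusters $C_1$, $C_3$, $C_2$ of an $i$-normal solution. Case B is the mirror image obtained by swapping the roles of $i$ and $j$: the row of $g_3(v_i,v_j) = g^j(v_i,v_j)$ is forced into $\{g_3(v_i,v_j)\} \cup d_{i,j}(g(v_j))$ and, as before, $g_4(v_i,v_j)$, $g_5(v_i,v_j)$, $g_6(v_i,v_j)$ form their own cluster, giving a $j$-normal solution. The step requiring the most care is the bookkeeping in these two cases — verifying that the tight per-row costs rule out every extra row joining the forced clusters — but this is routine once Propositions~\ref{prop:dist-vertex-rows} and~\ref{prop:dist-edge-rows} are in hand.
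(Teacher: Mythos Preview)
Your proof is correct and follows essentially the same strategy as the paper: establish the per-row lower bounds ($2$ for $g_4,g_5,g_6$ and $1$ for $g_1,g_2,g_3$), and then argue that tightness forces one of two rigid clusterings. The only cosmetic difference is the pivot for the case split: the paper looks at the cluster containing $g_2(v_i,v_j)$ (which must be joined by exactly one of $g_1,g_3$), whereas you look at the cluster containing $g_1(v_i,v_j)=g^i(v_i,v_j)$ (which must be joined by exactly one of $g_2$ or $d_{i,j}(g(v_i))$); the two case splits are equivalent. Your write-up is in fact more thorough than the paper's, since you explicitly verify that $g_4,g_5,g_6$ are forced into a single cluster with columns $1$ and $3$ suppressed, a point the paper leaves implicit.
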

\begin{proof}
First assume that $\Pi$ induces an i-normal solution for $E_{i,j}$.
By Prop. \ref{prop:dist-edge-rows}, it follows that one entry 
of $g^i(v_i,v_j)$ is suppressed.
In the set consisting of the rows in $g_t(v_i,v_j)$, with $t \in \{4,5,6 \}$, 
by Prop. \ref{prop:dist-edge-rows} two positions for each row are suppressed.
Finally, by Prop. \ref{prop:dist-edge-rows}, in the set $g_2(v_i,v_j) \cup g^j(v_i,v_j)$,
exactly one position is suppressed for each row.

Now, let us prove that if $\Pi$ is a solution that is not i-normal or j-normal
for $E_{i,j}$, then $c_{\Pi}(E_{i,j}) \geq 11$. Notice that
that each row in $g_t(v_i,v_j)$, with $t \in \{4,5,6 \}$, has a Hamming
distance $2$ from any other row of $R \setminus g_t(v_i,v_j)$, hence
at least two entries are suppressed in each solution $\Pi$.
Furthermore, notice that each of the 
four rows in the groups $g_1(v_i,v_j)$, $g_2(v_i,v_j)$, $g_3(v_i,v_j)$ must have a cost of at most $1$.
But then, the rows of $g_2(v_i,v_j)$ must be co-clustered with the row of exactly one of
$g_1(v_i,v_j)$, $g_3(v_i,v_j)$ (w.l.o.g. $g_1(v_i,v_j)$). But then $g_3(v_i,v_j)=g^i(v_i,v_j)$,
must be co-clustered with $d_{i,j}(v_i)$.
\end{proof}

\begin{Lemma}
\label{lem:set-X}
Let $S$ be a solution of $3$-AP($3$) over instance $R$,
then we can compute in polynomial time a solution $S'$ such that
$c(S') \leq c(S)$ and $S'$ contains at most one set suppressing
three entries for each row.
\end{Lemma}
\begin{proof}
Assume that solution $S$ contains sets $Y_1, \dots, Y_p$, with $ p \geq 2$, such that
all the positions of the rows in $Y_j$ are suppressed. Then we can compute
in polynomial time a solution $S'$ by merging the set $Y_1, \dots, Y_p$ in a single
cluster $Y$. Notice that $c(S') \leq c(S)$, as in both solution $S'$ and $S$
three positions are suppressed for each row $r \in \bigcup_{j=1 \dots p}Y_j$.
\end{proof}

Now, let us first introduce some properties of a solution of $3$-AP($3$)
over instance $R$.

\begin{Lemma}
\label{lem:cost-edge-mand-set}
Let $\Pi$ be a solution of $3$-AP($3$) over instance $R$, we can compute in polynomial time
a solution $\Pi'$ such that
\begin{enumerate}
\item for each edge set $E_{i,j}$, $\Pi'$ has a set $S_{i,j}$ containing the rows of groups $\bigcup_{t = 4,5,6} g_t(v_i,v_z)$, such that for each row in $S_{i,j}$ exactly two columns (columns $1$ and $3$) are supprssed;
\item $c(\Pi') \leq c(\Pi)$.
\end{enumerate}
\end{Lemma}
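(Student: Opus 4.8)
The plan is to isolate, for each edge set $E_{i,j}$, the three rows $G_{i,j} := g_4(v_i,v_j)\cup g_5(v_i,v_j)\cup g_6(v_i,v_j)$ into a cluster of their own. Note that $|G_{i,j}|=3=k$, so such a cluster is feasible on its own, and that by Proposition~\ref{prop:dist-edge-rows} these three rows pairwise differ exactly in columns $1$ and $3$, lie at Hamming distance $2$ from every other row of $E_{i,j}$, and at Hamming distance $3$ from every row not in $E_{i,j}$. I would first record three consequences that drive the whole argument: (i) in any feasible solution a row of $G_{i,j}$ is in a cluster of size $\ge k$ whose other members are all at distance $\ge 2$ from it, hence it costs at least $2$; (ii) a cluster that contains a row of $G_{i,j}$ together with a row not in $E_{i,j}$ has all three columns suppressed, so each of its rows costs $3$; (iii) if a cluster $C$ contains a row of $G_{i,j}$ and does not suppress all three columns, then by (ii) $C\subseteq E_{i,j}$, and since column $2$ is constantly $\sigma_{i,j}$ throughout $E_{i,j}$ while every row of $G_{i,j}$ carries a private symbol in columns $1$ and $3$, the cluster $C$ suppresses exactly columns $1$ and $3$, so each of its rows costs exactly $2$.

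Next I would perform the following surgery on $\Pi$. Let $X$ be a cluster of $\Pi$ suppressing all three columns -- one exists because it must contain $x_1$, and by Lemma~\ref{lem:set-X} we may assume it is the only such cluster. Delete from every cluster of $\Pi$ all rows belonging to some $G_{i,j}$, and for each original cluster $C$ let $C^{-}$ be its set of surviving rows. If $|C^{-}|\ge k$, keep $C^{-}$ as a cluster (discarding it if it is empty). If $0<|C^{-}|<k$, then $C$ contained a row of some $G_{i,j}$, and there are two cases: if $C$ suppresses all three columns in $\Pi$, move all of $C^{-}$ into $X$; otherwise, by (iii), $C\subseteq E_{i,j}$, and we move all of $C^{-}$ into the cluster $S_{i,j}$ defined next. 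Finally, for each edge set let $S_{i,j}$ be the union of $G_{i,j}$ with all rows moved into it by the previous step, and declare it a cluster of the resulting solution $\Pi'$.

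I would then verify the two claims. Feasibility is immediate: every $S_{i,j}\supseteq G_{i,j}$ has at least $k$ rows, $X$ has only grown, and every kept $C^{-}$ has at least $k$ rows by construction. For the first property of the statement, any row moved into some $S_{i,j}$ came from a cluster $C\subseteq E_{i,j}$ of the type in (iii), hence it lies in $E_{i,j}$ and agrees with $G_{i,j}$ in column $2$; thus $S_{i,j}$ suppresses exactly columns $1$ and $3$. For $c(\Pi')\le c(\Pi)$ I would compare costs row by row: a row of $G_{i,j}$ costs $2$ in $\Pi'$ and at least $2$ in $\Pi$ by (i); a row moved into $X$ cost $3$ in $\Pi$ and still costs $3$; a row moved into some $S_{i,j}$ cost exactly $2$ in $\Pi$ by (iii) and costs $2$ in $S_{i,j}$; and a row remaining in some $C^{-}\subseteq C$ can see no more differing columns than it did inside $C$, so its cost does not increase. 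Summing over all rows yields $c(\Pi')\le c(\Pi)$, and every step is clearly polynomial.

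The step I expect to be the real obstacle is exactly this feasibility bookkeeping: deleting the rows of the various $G_{i,j}$ can drop a cluster below size $k$, and the orphaned rows must be re-homed without paying anything extra. The resolution rests on the dichotomy in consequences (ii)--(iii): a cluster $C$ that held a row of $G_{i,j}$ either already suppressed all three columns -- so all its rows, including the orphans, paid the maximum cost $3$, and they can be absorbed into $X$ for free -- or it was entirely contained in $E_{i,j}$, paying exactly $2$ per row, the same amount the orphans will pay inside $S_{i,j}$, which still suppresses only columns $1$ and $3$. Lining up the distance facts of Proposition~\ref{prop:dist-edge-rows} with these two buckets, and checking that every small leftover (for instance a cluster consisting of one or two $G_{i,j}$-rows plus a single other row) is correctly classified, is the delicate part; the remainder is routine.
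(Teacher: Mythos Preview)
Your argument is correct and rests on the same key dichotomy the paper uses: a cluster containing a row of $G_{i,j}=\bigcup_{t=4,5,6}g_t(v_i,v_j)$ either suppresses all three columns (and is then the unique cluster $X$ from Lemma~\ref{lem:set-X}) or is contained in $E_{i,j}$ and suppresses exactly columns $1$ and $3$. The paper exploits this by merging the (at most three) clusters holding the $G_{i,j}$ rows when they suppress two columns, and by pulling the $G_{i,j}$ rows out of $X$ otherwise; you instead extract every $G_{i,j}$ globally into a fresh $S_{i,j}$ and re-home the orphaned rows, which is a slightly different surgery but uses the identical structural fact and yields the same conclusion. One small remark: your sub-case ``if $C$ suppresses all three columns, move $C^{-}$ into $X$'' is in fact vacuous, since by Lemma~\ref{lem:set-X} such a $C$ equals $X$, and $X$ already contains $x_1,x_2,x_3$ so $|X^{-}|\ge 3=k$; this does no harm, but you could drop that branch.
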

\begin{proof}
First, notice that by Prop.~\ref{prop:dist-edge-rows} the rows in the groups
$\cup_{t = 4,5,6} g_t(v_i,v_j)$ have distance smaller than $3$ only
w.r.t. rows of $E_{i,j}$.
Furthermore, notice that, by construction, each row in $\bigcup_{t = 4,5,6} g_t(v_i,v_j)$
may be equal to another row of $R$ only in the second position.

Assume that there exist clusters $S_1, S_2, S_3$ (at most one of these clusters can be empty)
containing the rows of $\bigcup_{t = 4,5,6} g_t(v_i,v_j)$,
such that at most two entries are suppressed for each row of the cluster
$S_j$, $j \in \{ 1,2,3 \}$. Then, for each row in $S_1 \cup S_2 \cup S_3$,
the positions $1$ and $3$ are suppressed.
Hence, we can merge clusters $S_1, S_2, S_3$, without increasing the cost of the solution,
obtaining one set that contains the rows $\bigcup_{t = 4,5,6} g_t(v_i,v_z)$.

Assume that some rows of $\cup_{t = 4,5,6} g_t(v_i,v_j)$ are in the cluster $X$
and some rows of $\cup_{t = 4,5,6} g_t(v_i,v_j)$ are in  a different cluster $Y$,
such that at most two entries are suppressed for each row of the cluster
$Y$.
Then we can move the rows of $\cup_{t = 4,5,6} g_t(v_i,v_j) \cap X$ to $Y$, decreasing the
cost of the solution.

Now, assume that these rows are all clustered in set $X$. It follows that each row
of $\cup_{t = 4,5,6} g_t(v_i,v_j)$ have a cost of $3$. Hence, we can move
this set of rows to a new set $\cup_{t = 4,5,6} g_t(v_i,v_j)$,
decreasing the cost of the solution.
\end{proof}

\begin{Lemma}
\label{lem:cost-vert-adj-opt}
Let $R_i$, $R_j$ be two set of rows and let $E_{i,j}$ be an edge set of $R$.
Let $\Pi$ be a solution of $3$-AP($3$) over instance $R$ that associates a \tyb solution
with both $R_i$, $R_j$.
Then we can compute in polynomial time a solution $\Pi'$ of $3$-AP($3$) over instance $R$
where exactly one of $R_i$, $R_j$ is associated with a \tyb solution and
such that $c(\Pi') \leq c(\Pi)$.
\end{Lemma}
\begin{proof}
Notice that, by Prop.~\ref{prop:dist-edge-rows}, the rows of group $g_2(v_i,v_j)$
have Hamming distance $1$ only from the rows of $g_1(v_i,v_j)$ and
$g_3(v_i,v_j)$. Since in a \tyb solution the rows of $g_1(v_i,v_j)$ and
$g_3(v_i,v_j)$ are co-clustered with rows of $R_i$ and $R_j$, it follows by Prop.~\ref{prop:dist-edge-rows}
that the rows of $g_2(v_i,v_j)$ are co-clustered in $\Pi$ with rows at Hamming distance
at least $2$. Hence, $\Pi$ suppresses two entries in each row of $g_2(v_i,v_j)$,
and, as $g_2(v_i,v_j)$ consists of $2$ rows, at least $4$ entries of rows
in $g_2(v_i,v_j)$ are suppressed in $\Pi$.
Notice that, as $R_i$ and $R_j$ are associated with \tyb solutions in $\Pi$, the only rows that can be clustered
with the rows of $g_2(v_i,v_j)$ are those of groups  $g_4(v_i,v_j)$, $g_5(v_i,v_j)$, $g_6(v_i,v_j)$.

Starting from solution $\Pi$,
let us compute a solution $\Pi'$ of $3$-AP($3$) over instance $R$ as follows.
Let $E_{i,j}$, $E_{i,t}$, $E_{i,z}$ be the edge sets associated with the three edges
incident in $v_i$.
Modify solution $\Pi$ so that $\Pi'$ induces a \tya solution for $R_i$, and
a j-normal solution for $E_{i,j}$.
Moreover, for each row of a group
$g^i(v_i,v_z)$ of an edge set $E_{i,z}$, with $z \neq j$,
co-cluster such group $g^i(v_i, v_z)$ with the cluster containing the rows of
$\bigcup_{t = 4,5,6} g_t(v_i,v_z)$.

By Lemma \ref{lem:cost-edge-10} the rows of edge set $E_{i,j}$ have a total cost of $10$.
Notice that by Lemma~\ref{lem:cost-edge-mand-set}, we can assume that $\Pi$ has
a set $C$ containing $\bigcup_{t = 4,5,6} g_t(v_i,v_z)$, such that exactly
two entries (corresponding to the positions $1$ and $3$)
are suppressed for each row in $C$. Hence the representative row of $C$
has Hamming distance $2$ from $g^i(v_i, v_z)$, as they are equal in position $2$.

Now, each of these two rows $g^i(v_i, v_z)$ has a cost of $2$ in $\Pi'$, while it has a cost of least $1$
in $\Pi$. Notice that each of the two rows of $g_2(v_i,v_j)$ has a cost of at least $2$ in $\Pi$,
while it has a cost of $1$ in $\Pi'$.
Hence, $c(\Pi')\leq c(\Pi)$.
\end{proof}

\begin{Lemma}
\label{lem:APX-edge-set-cost}
Let $\Pi$  be a solution of the $3$-AP($3$) over instance $R$, such that two sets
$R_i$, $R_j$ are not associated with a \tyb solution in $\Pi$ and
$\Pi$ induces a total cost of $10$ for the rows in $E_{i,j}$.
Then at least one of $R_i$, $R_j$ has cost $11$.
\end{Lemma}
\begin{proof}
Assume that $\Pi$ induces a total cost of $10$ for the rows in $E_{i,j}$.
Notice that the rows in $\bigcup_{t=4,5,6}g_t(v_i,v_j)$
have a total cost of $6$, as
by Prop.~\ref{prop:dist-edge-rows} they are at Hamming distance at least $2$ from any other row
of $R$. Furthermore, the $4$ rows of $E_{i,j} \setminus \bigcup_{t=4,5,6}g_t(v_i,v_j)$
must have a cost of at least $1$ in $\Pi$.
Notice that, by Lemma \ref{lem:cost-edge-10},
 $\Pi$ induces either an i-normal or j-normal solution for $E_{i,j}$ (w.l.o.g.
we assume that is i-normal). Hence
$g^i(v_i,v_j)$ is clustered
with the rows of $d_{i,j}(v_i)$), while of $g^j(v_i,v_j)$ 
is clustered with $g_2(v_i,v_j)$, otherwise some
rows of $g_2(v_i,v_j)$ are clustered in $\Pi$ with a row at Hamming distance
at least $2$, hence the total cost of the rows in $E_{i,j}$ is greater than $10$.
But then, we claim that $\Pi$ induces a cost of at least
$11$ for the rows of the set $R_i$.

Now, recall that by hypothesis
$R_i$ is not associated with a \tyb solution in $\Pi$,
and let us consider the clusters containing rows of $R_i$ in $\Pi$. Notice that
if at least two rows of $R_i$ are clustered with some rows
at Hamming distance at least $2$, then $\Pi$ induces a cost of at least $11$ for the set $R_i$.
Recall that group $d_{i,j}(g(v_i))$ of $R_i$ is clustered
only with rows of group $g^i(v_i,v_j)$ of $E_{i,j}$, and consider the cases that either the three
groups of rows in $g_4(v_i), g_5(v_i), g_6(v_i)$ are co-clustered, or not.
In the former case, as
$R_i$ is not associated with a \tyb solution, it follows that the rows of at least one
of the docking group of $R_i$ are clustered with rows at Hamming distance $2$;
hence $\Pi$ induces a cost of at least $11$ for the set $R_i$.
In the latter case, let us consider the group of $R_i$ (w.l.o.g. $g_4(v_i)$) adjacent to
$d_{i,j}(g(v_i))$ and let $C$ be the cluster
containing the unique row of $g_4(v_i)$. As the rows in $g_4(v_i), g_5(v_i), g_6(v_i)$ are
not co-clustered, it follows that $C$ contains a row $r$ at Hamming distance
at least $2$ from $g_4(v_i)$. If $r \in R_i$, then
$\Pi$ suppresses at least two entries of two rows of $R_i$,
namely $r$ and $g_4(v_i)$,
hence
$\Pi$ induces a cost of at least $11$ in rows of $R_i$.
If $r \notin R_i$, then $r$ must be a row at Hamming distance
$3$ from $g_4(v_i)$. Indeed by Prop.~\ref{prop:dist-vertex-rows} and by Prop.~\ref{prop:dist-edge-rows},
the rows at Hamming distance
not greater than $2$ from $g_4(v_i)$ belong to $R_i \cup (E_{i,j} \setminus \bigcup_{t=4,5,6}g_t(v_i,v_j))$.
We have assumed that $(R_i \setminus \{ g_4(v_i) \}) \cap C = \emptyset$,
and it must be $( E_{i,j}\setminus \bigcup_{t=4,5,6}g_t(v_i,v_j)) \cap C = \emptyset$,
since by hypothesis $\Pi$ induces an i-normal solution for the rows in $E_{i,j}$.
Hence $g_4(v_i)$  must have
cost equal to $3$ and must be part of the cluster $X$ in $\Pi$ by the Lemma~\ref{lem:set-X}.
It follows that $\Pi$ induces a cost of at least
$11$ for the set $R_i$.
\end{proof}

Now, let us prove Lemma \ref{lem:compute-canonical}.
\begin{Lemma}
\label{appendix-lem:compute-canonical}
Let $\Pi$  be a solution of $3$-AP($3$) over instance $R$. Then we can compute in polynomial time a canonical solution $\Pi'$ of $3$-AP($3$) over instance $R$ such that $c(\Pi') \leq c(\Pi)$.
\end{Lemma}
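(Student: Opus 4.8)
The plan is to reach $\Pi'$ in two stages: first rewrite $\Pi$ into an intermediate solution $\Pi''$ in which every vertex set $R_i$ already carries a \tya or \tyb solution and $c(\Pi'')\le c(\Pi)$, and then repair the adjacent \tyb pairs. Before starting I would normalize $\Pi$ using the machinery already proved: by Lemma~\ref{lem:set-X} I may assume $\Pi$ has a single cluster $X\supseteq\{x_1,x_2,x_3\}$, each $x_i$ costing $3$, and by Lemma~\ref{lem:cost-edge-mand-set} I may assume that for every edge set $E_{i,j}$ the cluster $S_{i,j}\supseteq\bigcup_{t\in\{4,5,6\}}g_t(v_i,v_j)$ suppresses exactly columns $1$ and $3$ in each of its rows. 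I then build $\Pi''$ vertex set by vertex set: if in $\Pi$ every docking group $d_{i,j}(g(v_i))$ of $R_i$ is co-clustered with the corresponding $i$-group $g^i(v_i,v_j)$ I assign $R_i$ a \tyb solution, otherwise a \tya solution; for the still-unclustered rows of an edge set $E_{i,j}$ I install an $i$-normal (or $j$-normal) layout whenever one of $R_i,R_j$ is \tya, and the cluster $\bigl(\bigcup_{t\in\{4,5,6\}}g_t(v_i,v_j)\bigr)\cup g_2(v_i,v_j)$ when both are \tyb.

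Next I would verify $c(\Pi'')\le c(\Pi)$ by a local accounting. The rows of $X$ cost $9$ on both sides; for each $R_i$, Lemma~\ref{lem:opt-R_i} together with Lemmata~\ref{lem:opt-ty_a} and~\ref{lem:cost-ty_b} shows that a \tya or \tyb layout is cost-optimal for $R_i$, so $c_{\Pi''}(R_i)\le c_\Pi(R_i)$. The bulk of the work is the edge sets, split according to how $R_i,R_j$ behave in $\Pi''$ versus $\Pi$: when both are \tyb in $\Pi''$ and in $\Pi$, Prop.~\ref{prop:dist-edge-rows} forces $g_2(v_i,v_j)$ to Hamming distance $\ge 2$ in $\Pi$, so $c_\Pi(E_{i,j})\ge c_{\Pi''}(E_{i,j})$; when exactly one is \tyb in both, Lemmata~\ref{lem:cost-ty_b} and~\ref{lem:cost-edge-mand-set} give $c_{\Pi''}(E_{i,j})=10\le c_\Pi(E_{i,j})$; when some $R_i$ is \tyb in $\Pi''$ but not in $\Pi$, Lemma~\ref{lem:cost-edge-10} pins $\Pi$ to an $i$-normal layout as soon as $c_\Pi(E_{i,j})=10$, which forces $R_j$ to be \tya in $\Pi''$ and $c_{\Pi''}(E_{i,j})=10$ as well; and when both are \tya in $\Pi''$, $c_{\Pi''}(E_{i,j})=11$ while $c_\Pi(E_{i,j})\ge 10$, the deficit being absorbed by Lemma~\ref{lem:APX-edge-set-cost}.

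Once $c(\Pi'')\le c(\Pi)$ is established, $\Pi''$ satisfies clauses~(i) and~(ii) of the canonical definition and fails~(iii) and~(iv) only on edge sets $E_{i,j}$ in $R$ whose endpoints $R_i,R_j$ are both \tyb. Applying Lemma~\ref{lem:cost-vert-adj-opt} to each such pair turns one endpoint into a \tya solution without increasing the cost and strictly decreases the number of \tyb vertex sets, so after at most $|V|$ applications I obtain a canonical $\Pi'$ with $c(\Pi')\le c(\Pi'')\le c(\Pi)$; every step is polynomial, hence so is the whole construction.

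The step I expect to be the main obstacle is the edge-set accounting for $c(\Pi'')\le c(\Pi)$, in particular the both-\tya case: one must decide, for each edge set $E_{i,j}$ whose rows cost only $10$ in $\Pi$, which of $R_i,R_j$ absorbs the missing unit of cost, and check via Lemma~\ref{lem:APX-edge-set-cost} that at most two such edge sets are ever charged to a single $R_i$ (its third docking group being charged from the other side). Only then does the inequality $c_\Pi(E_{i,j})+c_\Pi(E_{i,h})+c_\Pi(R_i)\ge c_{\Pi''}(E_{i,j})+c_{\Pi''}(E_{i,h})+c_{\Pi''}(R_i)$ hold and telescope correctly over the whole instance.
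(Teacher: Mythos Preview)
Your proposal is correct and follows essentially the same two-stage strategy as the paper: first build an intermediate $\Pi''$ assigning each $R_i$ a \tya or \tyb layout based on whether all its docking groups are co-clustered with their $i$-groups in $\Pi$, verify $c(\Pi'')\le c(\Pi)$ by the same case split on the edge sets (including the delicate both-\tya case handled via Lemma~\ref{lem:APX-edge-set-cost} and the charging of at most two edge sets per $R_i$), and then repeatedly apply Lemma~\ref{lem:cost-vert-adj-opt} to eliminate adjacent \tyb pairs. One small terminological slip: an $i$-normal layout uses the cluster $g^i(v_i,v_j)\cup d_{i,j}(g(v_i))$, which is part of the \tyb solution of $R_i$, so you install an $i$-normal layout when $R_i$ is \tyb (not \tya); in particular, when both $R_i,R_j$ are \tya you cannot install either an $i$- or $j$-normal layout and must instead use the cost-$11$ clustering $\{g^i(v_i,v_j)\cup g_2(v_i,v_j),\ g^j(v_i,v_j)\cup\bigcup_{t\in\{4,5,6\}}g_t(v_i,v_j)\}$ as the paper does.
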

\begin{proof}
Let us consider the solution $\Pi$.
Before computing a canonical solution $\Pi'$, we
compute an intermediate solution $\Pi''$ such that
$c(\Pi'') \leq c(\Pi)$ as follows.
First, for each set $R_i$, if $R_i$ is associated with
a $\tyb$ solution in $\Pi$, then define a $\tyb$ solution for $R_i$ in $\Pi''$.
Otherwise, if each docking vertices $d_{i,j}(v_i)$ of a set $R_i$ is clustered in $\Pi$ with
the row of group $g^i(v_i,v_j)$ of $E_{i,j}$, then define a \tyb solution for $R_i$ in $\Pi''$;
else define a \tya solution for $R_i$ in $\Pi''$. Furthermore,
define a set containing row $x_1, x_2, x_3$ in $\Pi''$.
Next, consider the rows of an edge set $E_{i,j}$, and
define a clustering of the rows not yet clustered in $\Pi''$.
If exactly one of $R_i$, $R_j$ (w.l.o.g. $R_i$) is associated
with a \tyb solution in $\Pi''$, then define an i-normal solution for $E_{i,j}$ in $\Pi''$.
Else if at least one of $R_i$, $R_j$
(w.l.o.g. $R_i$) is associated
with a \tya solution in $\Pi''$, then define
the following solution: one set contains the rows in $g^i(v_i,v_j) \cup g_2(v_i,v_j)$;
one set contains the rows in  $\bigcup_{t=4,5,6} (g_t(v_i,v_j)) \cup g^j(v_i,v_j)$.
If both $R_i$, $R_j$ are associated with a \tyb solution in $\Pi''$, then 
define a set
$\bigcup_{t=4,5,6} (g_t(v_i,v_j)) \cup g_2(v_i,v_j) $ in $\Pi''$.
%

Now, let us show that $c(\Pi) \geq c(\Pi'')$.
By Lemma~\ref{lem:opt-ty_a} and by Lemma \ref{lem:cost-ty_b}
it follows that for each row in a set $R_i$
the cost in $\Pi''$ is optimal.
Furthermore, by Lemma~\ref{lem:set-X}, we can assume that $\Pi$ contains
a set $X \supseteq \{ x_1, x_2, x_3 \}$,  hence the rows in $\{ x_1, x_2, x_3 \}$
have all cost $3$ in both $\Pi$ and $\Pi''$.
Hence it remains to consider the cost of the edge set $E_{i,j}$.

Let $E_{i,j}$ be an edge set. Notice that by Lemma~\ref{lem:cost-edge-mand-set}
we can assume that $\Pi$ contains a set $S_{i,j} \supseteq \bigcup_{t=4,5,6}g_t(v_i,v_j)$,
and by construction $\Pi''$ contains a set $S'_{i,j} \supseteq \bigcup_{t=4,5,6}g_t(v_i,v_j)$.
Hence each row of $g_t(v_i,v_j)$, with $t \in \{ 4,5,6 \}$, has a cost
equal to $2$ in both $\Pi$, $\Pi''$.
Let us consider the case when both sets
$R_i$ and $R_j$ are associated with a \tyb solution in both $\Pi$ and $\Pi''$.
The groups of $E_{i,j}$ not co-clustered in a \tyb solution of
$R_i$, $R_j$, are $g_2(v_i,v_j)$, $g_4(v_i,v_j)$,
$g_5(v_i,v_j)$, $g_6(v_i,v_j)$. By construction,
as the rows at Hamming distance $1$ from $g_2(v_i,v_j)$ are clustered in the
\tyb solution of $R_i$, $R_j$ in $\Pi$ (hence cannot be co-clustered with $g_2(v_i,v_j)$),
it follows that
the rows $g_2(v_i,v_j)$ must be clustered with a row having Hamming distance
at least $2$ in $\Pi$.
As $\Pi''$ contains the set $S''_{i,j} = (\bigcup_{t=4,5,6}g_t(v_i,v_j)) \cup g_2(v_i,v_j)$
and as $\Pi$ contains the set $S_{i,j} \supseteq \bigcup_{t=4,5,6}g_t(v_i,v_j)$,
it follows that the cost of the rows in $E_{i,j}$ in solution $\Pi$ is greater or equal than the cost of the rows
in $E_{i,j}$ in solution $\Pi''$.

Let us consider the case when exactly one of the sets
$R_i$ and $R_j$ (w.l.o.g. $R_j$) is  associated with a \tyb solution in $\Pi''$ and in $\Pi$.
By construction, $R_i$ is associated with a
\tya solution in $\Pi''$. By Lemma~\ref{lem:cost-edge-10} it follows that that $c_{\Pi''}(E_{i,j}) = 10$,
and, as each row in $\bigcup_{t=4,5,6} g_t(v_i,v_j)$ has a cost of $2$ in $\Pi''$,
it follows that each row in $E_{i,j} \setminus (\bigcup_{t=4,5,6} g_t(v_i,v_j)) $ has a cost of $1$ in $\Pi''$.
As $\Pi$ contains the set $S_{i,j}$, it follows
that $\Pi$ suppresses two entries in the rows of $\bigcup_{t=4,5,6} g_t(v_i,v_j)$,
hence the cost of the rows in $E_{i,j}$ in solution $\Pi$ is greater or equal than the cost of
the rows in $E_{i,j}$ in solution $\Pi''$.

Let us consider the case when at least one of the sets
$R_i$ and $R_j$ (w.l.o.g. $R_i$) is associated with a \tyb solution in $\Pi''$ and not in $\Pi$.
Notice that by construction, the rows in groups $d_{i,j}(v_i)$, $g^i(v_j,v_j)$
are clustered in both $\Pi$ and $\Pi''$. Now, if $\Pi$ induces a cost of at least $11$
for the rows in $E_{i,j}$, since $\Pi''$ induces a cost of at most $11$
for the rows in $E_{i,j}$ it follows that
$c_{\Pi}(E_{i,j}) \geq c_{\Pi''}(E_{i,j})$.
If $\Pi$ induces a cost of $10$
for the rows in $E_{i,j}$, then by Prop. \ref{prop:dist-edge-rows} $g_2(v_i,v_j)$ must be co-clustered with
$g^j(v_i,v_j)$. Then, it follows that by construction $R_j$ is associated
with a \tya solution in $\Pi''$ and that the rows of $E_{i,j}$ have a total
cost of $10$ in $\Pi''$. Hence $c_{\Pi}(E_{i,j}) \geq c_{\Pi''}(E_{i,j})$.

Now, let us consider the case when both $R_i$, $R_j$, are associated with a \tya
solution in $\Pi''$. In this case, by construction, the rows in the edge set $E_{i,j}$
have a total cost of $11$ in $\Pi''$, while they have a cost of at least $10$ in $\Pi$,
as the rows in $\bigcup_{t=4,5,6}g_t(v_i,v_j)$ (contained in the set $S_{i,j}$ of $\Pi$)
have a total cost of $6$, while each of the $4$ rows of $E_{i,j} \setminus \bigcup_{t=4,5,6}g_t(v_i,v_j)$
has a cost of at least $1$ in $\Pi$.
Assume that the rows of $E_{i,j}$ have a total cost of $10$ in $\Pi$.
By Lemma~\ref{lem:APX-edge-set-cost}, $\Pi$ induces a total cost of $11$ for the
rows of one of the sets $R_i$, $R_j$ (w.l.o.g. $R_i$).
Notice that by Lemma~\ref{lem:opt-ty_a} $\Pi''$ induces a cost of $9$ for all the set $R_i$.
Now, let us consider the set $R_i$ and the three edge sets
$E_{i,j}$, $E_{i,h}$, $E_{i,k}$.
In what follows, we will consider the cost induced by $\Pi$ and by $\Pi''$
in the set $R_i$ and in some of the edge sets $E_{i,j}$, $E_{i,h}$, $E_{i,k}$.
More precisely, for each edge set $E_{i,x}$ in $\{ E_{i,j}, E_{i,h}, E_{i,k} \}$, let us consider
its cost together with the cost of $R_i$ only if
$d_{i,x}(v_i)$ and $g^i(v_i,v_x)$ are clustered in $\Pi$ (otherwise
$E_{i,x}$ will be eventually be considered together with $R_x$).
By construction the cost of at most two edge sets in $E_{i,j}$, $E_{i,h}$, $E_{i,k}$
(assume w.l.o.g. $E_{i,j}$, $E_{i,h}$) are considered together with the cost of $R_i$,
otherwise $d_{i,x}(v_i)$ and $g^i(v_i,v_x)$ would be co-custered in $\Pi$, for each $x \in \{ j,h,k \}$
and by construction $R_i$ would be associated with a \tyb solution in $\Pi''$.
Since
$c_{\Pi}(E_{i,j})\geq 10$, $c_{\Pi}(E_{i,h})\geq 10$, $c_{\Pi}(R_i)\geq 11$,
while $c_{\Pi''}(E_{i,j}) = 11$, $c_{\Pi''}(E_{i,h}) = 11$ and $c_{\Pi}(R_i)= 9$
it follows that
$c_{\Pi}(E_{i,j}) + c_{\Pi}(E_{i,h}) + c_{\Pi}(R_i)
\geq c_{\Pi''}(E_{i,j}) + c_{\Pi''}(E_{i,h}) + c_{\Pi''}(R_i)$.


Now, we have shown that $c(\Pi) \geq c(\Pi'')$. Notice that $\Pi''$ may not
be a canonical solution, as there may exist two sets $R_i$, $R_j$,
with $E_{i,j}$ part of the instance, associated with a \tyb solution in $\Pi''$.
Now, applying Lemma~\ref{lem:cost-vert-adj-opt} for each pair of sets $R_i$, $R_j$,
associated with a \tyb solution in $\Pi''$, with $E_{i,j}$ part of the instance,
we can compute a canonical solution
$\Pi'$ such that $c(\Pi'') \geq c(\Pi')$. Hence $c(\Pi) \geq c(\Pi')$.
\end{proof}


\vspace*{0.15cm}

\begin{Lemma}
\label{lem:apx-hard2}
Let $C$ be cover of $G$. Then, we can compute in polynomial time
a solution $\Pi$ of $3$-AP($3$) over instance $R$
of cost $6|V|+3|C|+11|E|+9$.
\end{Lemma}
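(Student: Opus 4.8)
The plan is to exhibit, for a given vertex cover $C$ of $G$, an explicit feasible clustering $\Pi$ of $R$ and then compute its cost by a direct accounting; this is essentially the converse of Lemma~\ref{lem:apx-hard1}. First I would put the three rows of $X$ into a single cluster: since each $x_i$ has Hamming distance $3$ from every other row of $R$, this cluster has size $3$ and contributes exactly $9$ suppressions. Next, for every $v_i \in C$ I would give $R_i$ its \tya solution (the three sets $g_t(v_i)\cup g_{t+3}(v_i)$, $t\in\{1,2,3\}$), and for every $v_i \notin C$ I would give $R_i$ its \tyb solution (the three sets $d_{i,j}(g(v_i))\cup g^i(v_i,v_j)$ over the three edges incident with $v_i$, together with $g_4(v_i)\cup g_5(v_i)\cup g_6(v_i)$). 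By Propositions~\ref{prop:dist-vertex-rows} and~\ref{prop:dist-edge-rows} every set listed so far consists of three rows that are pairwise at Hamming distance $1$, so exactly one entry is suppressed per row; as in Lemma~\ref{lem:opt-R_i} this makes the rows of each $R_i$ contribute exactly $9$, independently of the type chosen, and a \tyb choice additionally forces one suppression in the unique row $g^i(v_i,v_j)$ of each incident edge set.

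It then remains to cluster the rows of every edge set $E_{i,j}$ not yet placed. The crucial point is that, as $C$ is a vertex cover, no edge of $G$ has both endpoints outside $C$. If exactly one endpoint, say $v_i$, is outside $C$, then $g^i(v_i,v_j)$ is already co-clustered with the docking group $d_{i,j}(g(v_i))$, and I would finish $E_{i,j}$ with the sets $g_2(v_i,v_j)\cup g^j(v_i,v_j)$ and $g_4(v_i,v_j)\cup g_5(v_i,v_j)\cup g_6(v_i,v_j)$; this is precisely an $i$-normal solution, and by the distance structure it costs $1+3+6=10$ on $E_{i,j}$. If instead both endpoints are in $C$, neither docking group is free, and I would use the sets $g^i(v_i,v_j)\cup g_2(v_i,v_j)$ (three rows at Hamming distance $1$, cost $3$) and $g^j(v_i,v_j)\cup g_4(v_i,v_j)\cup g_5(v_i,v_j)\cup g_6(v_i,v_j)$ (four rows agreeing only in column $2$, cost $8$), for a cost of $11$ on $E_{i,j}$; by Lemma~\ref{lem:cost-edge-10} this is the best possible in this configuration. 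A quick inspection shows that every cluster of $\Pi$ has at least three rows and that every row of $R$ is placed in exactly one cluster, so $\Pi$ is feasible and clearly computable in polynomial time.

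Finally I would add everything up. The cluster $X$ contributes $9$, and the rows of all the $R_i$'s contribute $9|V|$. Writing $a$ for the number of edges with both endpoints in $C$ and $b$ for the number with exactly one endpoint outside $C$ (there are no others), the edge sets contribute $11a+10b = 11|E|-b$. Since $G$ is cubic and every edge has at least one endpoint in $C$, each of the $|V|-|C|$ vertices outside $C$ contributes its three incident edges, each counted exactly once, to $b$, so $b = 3(|V|-|C|)$. Hence $c(\Pi) = 9 + 9|V| + 11|E| - 3(|V|-|C|) = 6|V| + 3|C| + 11|E| + 9$, as claimed. There is no real obstacle here; the only step needing attention is this last count, where one must notice that the cubicity of $G$ is exactly what makes the per-vertex saving of $3$ for vertices outside $C$ come out correctly.
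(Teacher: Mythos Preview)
Your proposal is correct and follows essentially the same approach as the paper: the clustering you build coincides with the paper's (its sets $S_1=g_1(v_i,v_j)\cup g_2(v_i,v_j)$ and $S_2=g_3(v_i,v_j)\cup g_4(v_i,v_j)\cup g_5(v_i,v_j)\cup g_6(v_i,v_j)$ for edges with both endpoints in $C$ are exactly your $g^i\cup g_2$ and $g^j\cup g_4\cup g_5\cup g_6$). The only cosmetic difference is in the final bookkeeping: the paper attributes the per-edge saving of $1$ directly to the incident \tyb vertex (effectively charging $6$ per vertex outside $C$), whereas you compute $b=3(|V|-|C|)$ explicitly from cubicity; both yield the same total.
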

\begin{proof}
We can define a solution $\Pi$ of $3$-AP($3$) of cost
$6|V|+3|C|+11|E|+9$, as follows. Define a \tya solution for each $R_i$ associated
with a vertex $v_i \in C$. Each of such sets has a cost of $9$.

Define a \tyb solution for set $R_i$ associated with a vertex $v_i \in V-C$, 
and define an i-normal solution for the sets $E_{i,j}$, $E_{i,h}$, $E_{i,l}$.
Each such set $R_i$ has a cost of $9$, and each edge set in $\{E_{i,j}$, $E_{i,h}$, $E_{i,l} \}$
has a cost of $10$.
Accounting this decreasing of the cost of the edge sets (from $11$ to $10$) to the set $R_i$, 
is equivalent to assign to a \tyb solution a cost equal to  $6$.

For any other edge set $E_{i,j}$ add to $\Pi$ the following sets:
$S_1 = g_1(v_i,v_j) \cup g_2(v_i,v_j)$, $S_2 = g_3(v_i,v_j) \cup g_4(v_i,v_j) \cup g_5(v_i,v_j)
\cup g_6(v_i,v_j)$. Each such edge set has a cost of $11$.
Finally, define a set $X= \{x_1, x_2, x_3 \}$, having a total cost of $9$.
\end{proof}

\end{document}